\documentclass[conference]{IEEEtran}
\IEEEoverridecommandlockouts

\usepackage{cite}
\usepackage{amsmath,amssymb,amsfonts}
\usepackage{algorithmic}
\usepackage{graphicx}
\usepackage{textcomp}
\usepackage{xcolor}
\usepackage{amsmath, bm}
\usepackage{amsfonts}
\usepackage{amssymb}
\usepackage{balance}

\def\BibTeX{{\rm B\kern-.05em{\sc i\kern-.025em b}\kern-.08em
    T\kern-.1667em\lower.7ex\hbox{E}\kern-.125emX}}

\newcommand{\comment}[1]{}
\newtheorem{theorem}{Theorem}
\newtheorem{lemma}{Lemma}
\newtheorem{proposition}{Proposition}

\newtheorem{definition}{Definition}
\newtheorem{remark}{Remark}

\def\bp{{R}} 
\def\br{{R}} 
\def\mc{{\mathsf{T}}}  
\def\tp{{P}} 


\begin{document}

\title{Second-Order Asymptotics of Two-Sample Tests\\
\thanks{T.~Koch has received funding from the Spanish Ministerio de Ciencia, Innovación y Universidades under Grant PID2024-159557OB-C21 (MICIU/AEI/10.13039/501100011033 and ERDF/UE) and from the Comunidad de Madrid under Grant IDEA-CM (TEC-2024/COM-89). J.~Ravi has  received funding from IIT Kharagpur under a Faculty Start-up Research Grant (FSRG).}
}


\author{%
  \IEEEauthorblockN{K V Harsha\IEEEauthorrefmark{1}, Jithin Ravi\IEEEauthorrefmark{2}, and Tobias Koch\IEEEauthorrefmark{4}}
  \IEEEauthorblockA{\IEEEauthorrefmark{1}%
  Birla Institute of Technology and Science, Pilani, Hyderabad, India}
  \IEEEauthorblockA{\IEEEauthorrefmark{2}%
  Indian Institute of Technology Kharagpur, Kharagpur, India}
     \IEEEauthorblockA{\IEEEauthorrefmark{4}%
             Universidad Carlos III de Madrid, Legan\'es, Spain, and Gregorio Mara\~n\'on Health Research Institute, Madrid, Spain}
             \IEEEauthorblockA{Emails: harsha.kv@hyderabad.bits-pilani.ac.in, jithin@ece.iitkgp.ac.in, tkoch@ing.uc3m.es}
}

\maketitle

\begin{abstract}
In two-sampling testing, one observes two independent sequences of independent and identically distributed random variables distributed according to the distributions $P_1$ and $P_2$ and wishes to decide whether $P_1=P_2$ (null hypothesis) or $P_1\neq P_2$ (alternative hypothesis). The Gutman test for this problem compares the empirical distributions of the observed sequences and decides on the null hypothesis if the Jensen-Shannon (JS) divergence  between these empirical distributions is below a given threshold. This paper proposes a generalization of the Gutman test, termed \emph{divergence test}, which replaces the JS divergence by an arbitrary divergence. For this test, the exponential decay of the type-II error probability for a fixed type-I error probability is studied. First, it is shown that the divergence test achieves the optimal first-order exponent, irrespective of the choice of divergence. Second, it is demonstrated that divergence tests with invariant divergences achieve the same second-order asymptotics as the Gutman test. In addition, a connection between two-sample testing and robust goodness-of-fit testing is established. 
\end{abstract}


\section{Introduction}
Two-sample testing is a fundamental problem in statistical hypothesis testing. In this problem, we observe two independent and identically distributed (i.i.d.) sequences $X^n$ and $Y^n$, generated according to the distributions $P_1$ and $P_2$, respectively. We wish to determine whether the two sequences are sampled from the same underlying distribution. This leads to the following binary hypothesis testing problem:
\begin{align*}
	H_0: P_1=P_2 \quad \text{vs.} \quad H_1: P_1 \neq P_2.
\end{align*}
The two-sample testing problem has a long history in statistics and machine learning. A classical test for this problem was proposed by Kolmogorov and Smirnov in the 1930s, known as the Kolmogorov–Smirnov (KS) test \cite{LR205}. The KS test compares the empirical cumulative distributions of the two sequences and rejects the null hypothesis when their maximum difference exceeds a threshold. A closely related problem was considered by Gutman \cite{Gut89}, where two training sequences corresponding to distributions $P_1$ and $P_2$ are given, and the goal is to determine whether an additional test sequence was generated according to $P_1$ or $P_2$. This problem is particularly relevant in supervised learning, since many binary classification problems can be formulated as the instances of the Gutman test. In the machine learning literature, two-sample testing has also been studied via kernel-based methods, most notably through the Maximum Mean Discrepancy (MMD), where the MMD between the two observed sequences is compared against a threshold \cite{Gretton12}.

The asymptotic performance in Stein's regime of MMD-based two-sample testing was analyzed in \cite{Shengyu21}. Specifically, the authors studied the decay rate of the type-II error probability
$\beta_n$ as $n\to\infty$ given that the type-I error  $\alpha_n$ remains below a fixed $\epsilon\in(0,1)$. It was shown that the error exponent of $\beta_n$ of the MMD-based test is given by twice the Bhattacharyya distance. It was further shown that this exponent is, in fact, the largest achievable error exponent among all two-sample tests. A similar asymptotic analysis was carried out by Zhou, Tan and Motani for the Gutman test \cite{Lin19}, who showed that the Gutman test also achieves an error exponent equal to twice the Bhattacharyya distance. It is noted that the Gutman test can be interpreted as comparing the Jensen–Shannon (JS) divergence \cite[Eq.~(7.8)]{PW25} between the empirical distributions of $X^n$ and $Y^n$ to a threshold. They further studied the second-order asymptotics of the two-sample test under the assumption that the JS divergence between $P_1$ and $P_2$ exceeds a given $\lambda$ \cite{Lin19}. In particular, they characterized the first-order and second-order behaviors of the type-I error $\alpha_n$ given that the type-II error $\beta_n$ remains below a fixed $\epsilon\in(0,1)$.

It can be shown that, for distributions over finite alphabets, the MMD is a divergence in the sense defined in the information geometry literature \cite{AB216}. Thus, a natural question that arises is what happens when the MMD or the JS divergence in the two-sample test is replaced by another divergence measure. The main focus of this paper is to analyze the asymptotic performance in Stein’s regime of divergence-based two-sample tests, where the empirical distributions of $X^n$ and $Y^n$ are compared using a general divergence. In particular, we focus on the asymptotic behavior of the type-II error given that the type-I error  $\alpha_n$ is below a fixed $\epsilon\in(0,1)$.

We show that the divergence test achieves the optimal first-order error exponent irrespective of the divergence. We further establish the second-order asymptotics of divergence tests with \emph{invariant divergences} (see Definition~\ref{invariancediv}). The class of invariant divergences is large and includes the Rényi divergence and the $f$-divergences, which in turn include the JS divergence and the Kullback-Leibler (KL) divergence \cite[Ch.~7]{PW25}, \cite{CS204}. Our results demonstrate that divergence tests with invariant divergences achieve the same second-order asymptotics as the Gutman test.
In addition, we establish a connection between the Gutman test and the robust goodness-of-fit testing problem.

The rest of the paper is organized as follows. Section~\ref{sec:setting} introduces the problem setting. Section~\ref{sec:div_test} introduces the notion of divergence and presents the divergence test. Section~\ref{sec:GLRT} establishes a connection between two-sample testing and the robust goodness-of-fit testing problem. Section~\ref{sec:main_results} discusses the second-order asymptotics of divergence tests with invariant divergences. Finally, Section~\ref{sec:non-invariant} discusses the asymptotics of divergence tests with non-invariant divergences.

\section{Problem Formulation}
\label{sec:setting}

Consider two random variables $X$ and $Y$  taking values in the same discrete alphabet  $\mathcal{Z}=\{ a_{1}, \ldots, a_{k}\}$, where $ k \geq 2$. The probability distributions of $X$ and $Y$ are denoted by $k$-length vectors $P_1=(P_{11},\ldots,P_{1k})^{T}$ and  $P_2=(P_{21},\ldots,P_{2k})^{T}$, respectively.
In two-sample testing, we observe two independent length-$n$ sequences $X^{n}=(X_{1}, \ldots, X_{n})$ and  $Y^{n}=(Y_{1}, \ldots, Y_{n})$, where $X^n$ are i.i.d. according to $P_1$ and $Y^n$ are i.i.d. according to $P_2$. Our goal is to determine whether the underlying distributions are the same. More precisely, we consider the binary hypothesis testing problem with hypotheses $H_{0}\colon P_1=P_2$ and $H_1\colon P_1 \neq P_2$ and wish to design a universal test $\mathsf{T}_n: \mathcal{Z}^n \times \mathcal{Z}^n \to \{H_0, H_1\}$ that decides which hypothesis is true without using any knowledge of $P_1$ or $P_2$. Thus, under $H_0$, we have that $P_1 = P_2 = P$, where $P$ is an arbitrary distribution on $\mathcal{Z}$; under $H_1$, $P_1$ and $P_2$ are two different but otherwise arbitrary distributions on~$\mathcal{Z}$.

The hypothesis test $\mathsf{T}_n$ partitions the space $\mathcal{Z}^n \times \mathcal{Z}^n$ into two disjoint regions. Indeed, let $\mathcal{A}(\mathsf{T}_n)$ denote the set of tuples $(x^n,y^n)\in\mathcal{Z}^n\times\mathcal{Z}^n$ that favors $H_0$. The  type-I error $\alpha_{n}$ and type-II error $\beta_{n}$  are then defined as
\begin{subequations}
\begin{IEEEeqnarray}{lCl}
	\alpha_{n}(\mathsf{T}_{n}) & \triangleq & \sum_{(x^n,y^n)\notin \mathcal{A}(\mathsf{T}_n)} P(x^n)P(y^n) \\
	\beta_{n}(\mathsf{T}_{n}) & \triangleq & \sum_{(x^n,y^n)\in \mathcal{A}(\mathsf{T}_n)} P_1(x^n)P_2(y^n). 
\end{IEEEeqnarray}
\end{subequations}
Our goal is to analyze the asymptotic behavior (as $n\to\infty$) of the type-II error $\beta_{n}(\mathsf{T}_{n})$ when the type-I error satisfies $\alpha_{n}(\mathsf{T}_{n}) \leq \epsilon$ for some $0<\epsilon<1$.  We define the first-order term $\beta'$ and the second-order term $\beta''$ of $\beta_n(\mathsf{T}_n)$ as follows:
\begin{subequations}
\begin{IEEEeqnarray}{lCl}
	\beta' &\triangleq & \lim\limits_{n \rightarrow \infty} -\frac{1}{n}  \ln \beta_{n}(\mathsf{T}_n) \\
	\beta'' &\triangleq &  \lim\limits_{n \rightarrow \infty} \frac{-\ln \beta_n(\mathsf{T}_n) -n \beta'} {\sqrt{n}}
\end{IEEEeqnarray}
\end{subequations}
if these limits exist. Our objective is to characterize $\beta'$  and $\beta ''$ for divergence-based two-sample tests or, in short, \emph{divergence tests}, which are defined in Section~\ref{sec:div_test}.

\subsection{Order Notation}
Let $f(x)$ and $g(x)$ be two real-valued functions. For \mbox{$a \in \mathbb{R} \cup \{\infty\}$}, we write  $f(x)=O(g(x))$ as $x \rightarrow a$ if  \mbox{$\varlimsup_{x \to a} \frac{|f(x)|}{|g(x)|} < \infty$}.  Similarly, we  write  $f(x)=o(g(x))$ as $x \rightarrow a$ if \mbox{$\lim_{x \to a} \frac{|f(x)|}{|g(x)|} = 0$}. Throughout the paper, $\varlimsup$ denotes the \emph{limit superior} and $\varliminf$ denotes the \emph{limit inferior}.
\comment{
Following is for the long version.
\subsection{Notation}
Let $f(x)$ and $g(x)$ be two real-valued functions. For \mbox{$a \in \mathbb{R} \cup \{\infty\}$}, we write  $f(x)=O(g(x))$ as $x \rightarrow a$ if  \mbox{$\varlimsup_{x \to a} \frac{|f(x)|}{|g(x)|} < \infty$}.  Similarly, we  write  $f(x)=o(g(x))$ as $x \rightarrow a$ if \mbox{$\lim_{x \to a} \frac{|f(x)|}{|g(x)|} = 0$}.  Finally, we write $f(x) = \omega(g(x))$ as $x \to a$	 if $g(x)=o(f(x))$. Here, $\varlimsup$ denotes the \emph{limit superior} and $\varliminf$ denotes the \emph{limit inferior}.
Convergence in distribution is denoted by $\xrightarrow{d}$. We write $\mathcal{N}(\bm{\mu}, \bm{\Sigma})$ to either denote a Gaussian random vector with mean $\bm{\mu}$ and covariance matrix $\bm{\Sigma}$ or to denote its distribution.
}

\section{Divergence-Based Two-Sample Tests}
\label{sec:div_test}
Divergence tests compare the empirical distributions of $X^n$ and $Y^n$ and decide in favor of $H_0$ if the distributions are sufficiently close, with closeness measured by an arbitrary divergence, defined next.

\subsection{Divergence}

Let $\bar{\mathcal{P}}(\mathcal{Z})$ and $\mathcal{P}(\mathcal{Z}) $ denote the set of probability distributions on $\mathcal{Z}$ and the set of strictly positive probability distributions, respectively. Further let $\mathcal{P}^{2}(\mathcal{Z}) \triangleq\mathcal{P}(\mathcal{Z}) \times  \mathcal{P}(\mathcal{Z})$ and $\bar{\mathcal{P}}^{2}(\mathcal{Z}) \triangleq \bar{\mathcal{P}}(\mathcal{Z}) \times  \bar{\mathcal{P}}(\mathcal{Z})$. 
Every probability distribution \mbox{$R \in \mathcal{P}(\mathcal{Z})$} can be written as a length-$k$ vector $\bp=(\bp_{1},\ldots,\bp_{k})^{\mathsf{T}}$, which in turn can also be represented by its first $(k-1)$ components, denoted by the vector $\mathbf{\bp}=(\bp_{1}, \ldots, \bp_{k-1})^{\mathsf{T}}\in \Xi$, where $\Xi$ is the coordinate space
\begin{equation}
\Xi \triangleq \left\lbrace (R_{1},\ldots, R_{k-1})^{\mathsf{T}} \colon  R_{i} >0, \sum_{i=1}^{k-1} R_{i} <1 \right\rbrace.
\end{equation}
Given any two probability distributions $S, R \in \mathcal{P}(\mathcal{Z})$, 
a \emph{divergence} is defined as follows:
\begin{definition} \label{divdef} 
	Consider two distributions $S$ and $R$ in $ \mathcal{P}(\mathcal{Z})$.  A {\em divergence}  $D\colon \mathcal{P}^{2}(\mathcal{Z}) \rightarrow [0, \infty) $ between $S$ and $R$, denoted by  $D(S \| R)$ or $D(\mathbf{S}\| \mathbf{R})$, is a smooth function\footnote{A function is \emph{smooth} if it has partial derivatives of all orders.} of  $\mathbf{S}\in\Xi$ and $\mathbf{R}\in\Xi$ satisfying the following conditions: 
	\begin{enumerate} 
		\item $ D(S \| R) \geq 0$ for every $S, R \in \mathcal{P}(\mathcal{Z})$.  
		\item \label{div_cond2} $D(S \| R) =0$ if, and only if, $S=R$.
		\item Let $\mathbf{S}=\mathbf{R}+\bm{\varepsilon}$ for some $\bm{\varepsilon}=(\varepsilon_{1}, \ldots,\varepsilon_{k-1})^{\mathsf{T}}$. Then,
		\begin{equation}
		D(\mathbf{R} +\bm{ \varepsilon} \| \mathbf{R}   ) =\frac{1}{2} \sum_{i,j =1}^{k-1} g_{ij}(\mathbf{R} )  \varepsilon_{i} \varepsilon_{j} +O(\| \bm{\varepsilon} \|_{2}^{3})   \label{eq:divdef1}
		\end{equation}
		as $ \| \bm{\varepsilon} \|_{2} \rightarrow 0$
		for some  $(k-1) \times (k-1)$-dimensional positive-definite matrix $G(\mathbf{R})=[g_{ij}(\mathbf{R})]$ that depends on $\mathbf{R}$. In \eqref{eq:divdef1}, $\| \bm{\varepsilon} \|_{2}$ is the Euclidean norm of $\bm{\varepsilon}$.
        \item Let $R\in\mathcal{P}(\mathcal{Z})$, and let $\{S_n\}$ be a sequence of distributions in $\mathcal{P}(\mathcal{Z})$ that converges to a distribution $S$ on the boundary of $\mathcal{P}(\mathcal{Z})$. Then,
        \begin{equation}
        \varliminf_{n\to\infty} D(S_n \| R) > 0.
        \end{equation}
		\end{enumerate}	
\end{definition}
\begin{remark}
We follow the definition of divergence from the information geometry literature; see, e.g., \cite{AB216,CL86,CZPA209}. In particular, according to \cite[Def.~1.1]{AB216}, a divergence must satisfy the first three conditions in Definition~\ref{divdef}. Often, the behavior of divergence on the boundary of $\mathcal{P}(\mathcal{Z})$ is not specified. In Definition~\ref{divdef}, we add the fourth condition to treat the case of sequences of distributions $\{S_n\}$ that lie in $\mathcal{P}(\mathcal{Z})$ but converge to a distribution on the boundary of $\mathcal{P}(\mathcal{Z})$. 
\end{remark}
\begin{remark}
    Some divergences in the information theory literature, such as the total variation distance, do not qualify as divergences under Definition~\ref{divdef}, since they are not smooth functions of $\mathbf{S}$ and $\mathbf{R}$. Establishing the second-order asymptotics of divergence tests with such divergences would therefore require a different approach from the one adopted in this paper.
\end{remark}

By computing the partial derivatives of  $D(S \| R)$ with respect to the first variable $\mathbf{S}=(S_{1}, \ldots, S_{k-1})^{\mathsf{T}} $, it follows from the third condition in Definition~\ref{divdef} that 
\begin{equation}
D(S \| \br) 
= (\mathbf{S}-\mathbf{\br})^{T} \bm{A}_{D,\mathbf{\br}} ( \mathbf{S}-\mathbf{\br})+O(\|\mathbf{S}-\mathbf{\br}\|_{2}^{3}) \label{eq:tayldivA1}
\end{equation}
 as $ \| \mathbf{S}-\mathbf{\br} \|_{2} \rightarrow 0$, 
where $\bm{A}_{D, \mathbf{\br}} $ is the matrix associated with the divergence $D$ at $\mathbf{\br}$, which has components 
\begin{equation}
a_{ij} (\mathbf{\br})\triangleq \frac{1}{2} 	\left.	\frac{\partial^{2}}{\partial S_{i} \partial  S_{j}} D(S\| \br) \right|_{S=\br}.
\end{equation}
Based on $\bm{A}_{D,\mathbf{\br}}$, we can introduce the notion of an \emph{invariant divergence}.

\begin{definition}\label{invariancediv}
	Let $D$ be a divergence, and let $\br  \in \mathcal{P}(\mathcal{Z})$. Then, $D$ is said to be an \emph{invariant divergence} on $\mathcal{P}(\mathcal{Z})$ if the matrix associated with the divergence $D$ at $\mathbf{\br}$  is of the form $\bm{A}_{D, \mathbf{\br}}=\eta \bm{\Sigma}_{\mathbf{\br}}$ for a constant $\eta >0$ and a matrix $\bm{\Sigma}_{\mathbf{\br}}$ with components
	\begin{equation}
	\bm{\Sigma}_{ij}(\mathbf{\br})
	= 
	\begin{cases}
	\frac{1}{\br_{i}} +	\frac{1}{\br_{k}}, \quad  &  i=j \\
	\frac{1}{\br_{k}}, \quad & i \neq j. \\
	\end{cases}
	 \label{eq:covsigmain1}
	\end{equation}
\end{definition}

The notion of an invariant divergence is adapted from the notion of invariance of geometric structures in information geometry; see \cite{AB216}, \cite{CL86} for more details. Well-known divergences, such as the KL divergence, the JS divergence, the $f$-divergences, and the R\'enyi divergence, are invariant \cite{CZPA209}.

\subsection{Divergence Tests}

For a divergence $D$ and a threshold $r>0$, a \emph{divergence test} $\mc_{n}^{D}(r)$ for testing $H_{0}$ against the $H_{1}$ is defined as follows:
\begin{center}
	\begin{tabular}{ll} &Observe  two independent sequences $X^{n}, Y^{n}$: \\
		&if $D(\tp_{X^n} \| \tp_{Y^n}) < r$, then $H_0$ is accepted;\\
		& else $H_1$ is accepted.
	\end{tabular} 
\end{center} 
Here, $\tp_{X^{n}}$ and $\tp_{Y^{n}}$ are the types (empirical distributions) of the sequences $X^{n}$ and $Y^{n}$, respectively. This corresponds to the acceptance region
\begin{IEEEeqnarray}{lCl}
	\mathcal{A}^{D}_{n}(r) \triangleq	\left\lbrace  (x^n,y^n)\in\mathcal{Z}^n\times\mathcal{Z}^n \colon D(\tp_{x^n}\|\tp_{y^n})  <  r \right\rbrace.  
\end{IEEEeqnarray}
Let $D_{\textnormal{JS}}$ denote the JS divergence. For $D=D_{\textnormal{JS}}$, the divergence test $\mc_{n}^{D_{\textnormal{JS}}}$ specializes to the Gutman test.

\section{Robust Goodness-of-Fit Testing and the GLRT}
\label{sec:GLRT}
The Gutman test is the generalized likelihood ratio test (GLRT) for the two-sample testing problem. In the following, we recover this result by expressing the divergence test $\mc_{n}^{D_{\textnormal{JS}}}$ as a particular robust goodness-of-fit (GoF) test. Specifically, robust GoF testing is a binary hypothesis testing problem where, under hypothesis $H_0$, the sequence of observations $Z^n$ is distributed i.i.d.\ according to an unknown distribution $P$ that lies in the uncertainty class $\mathcal{C}\subset \bar{\mathcal{P}}(\mathcal{Z})$; under hypothesis $H_1$, the sequence of observations $Z^n$ is distributed i.i.d. according to an unknown distribution $Q$ that does  not lie in $\mathcal{C}$. The GLRT of this problem decides on $H_0$ if the test statistic
\begin{equation}
	D_{\textnormal{KL}}^{\textnormal{ROB}}(\tp_{Z^n} \| \mathcal{C})
	\triangleq
	\inf_{P \in \mathcal{C}} D_{\textnormal{KL}}(\tp_{Z^n} \| P) \label{eq:ROB_test_1D}
\end{equation}
is below a given threshold $r>0$, and it decides on $H_1$ otherwise. Here, $D_{\textnormal{KL}}$ denotes the KL divergence \cite[Def.~2.1]{PW25}. When $\mathcal{C}$ is a moment class characterized by $d$ linearly-independent functions on $\mathcal{Z}$ (cf.~\cite[Eq.~(110)]{HJT25}), the first- and second-order terms of this test are given by \cite[Sec.~VI]{HJT25}
\begin{subequations}
\begin{IEEEeqnarray}{lCl}
	\beta' & = & D_{\textnormal{KL}}(P^{*} \| Q) \label{eq:beta'_1D}\\
	\beta'' & = & -\sqrt{V_{\textnormal{KL}}(P^{*} \| Q)}
	\sqrt{\mathsf{Q}^{-1}_{\chi^{2}_{d}}(\epsilon)}\label{eq:beta''_1D}
\end{IEEEeqnarray}
\end{subequations}
where $\mathsf{Q}^{-1}_{\chi^{2}_{d}}(\cdot)$ denotes the inverse of the tail probability of the chi-square distribution with $d$ degrees of freedom,
\begin{equation}
V_{\textnormal{KL}}(P^*\|Q) \triangleq \sum_{i=1}^k P^*_i\left(\ln\frac{P^*_i}{Q_i}-D_{\textnormal{KL}}(P^*\|Q)\right)^2
\end{equation}
denotes the KL divergence variance, and
\begin{equation}
	P^{*} \triangleq \arg \min_{P \in \mathcal{C}} D_{\textnormal{KL}}(P \| Q).
\end{equation}

Two-sample testing directly relates to robust GoF testing by setting
\begin{equation}
	\mathcal{C} = \{(P_{1},P_{2})\in \mathcal{P}^{2}(\mathcal{Z}) \colon P_{1}=P_{2} \}. \label{eq:def_class}
\end{equation}
Indeed, with \eqref{eq:def_class}, two-sample testing can be formulated as
\begin{subequations}
\label{eqs:2sample_GoF}
 \begin{IEEEeqnarray}{llCl}
    H_0\colon & (X^n, Y^n)\sim (P_{1},P_{2}) & \in & \mathcal{C}  \\
    H_1\colon & (X^n, Y^n) \sim (P_{1},P_{2}) & \notin & \mathcal{C}. 
    \end{IEEEeqnarray}
    \end{subequations}
We next determine the GLRT of this problem.
\begin{proposition}\label{Thm:glrt_rob}
	Consider the two-sample testing problem \eqref{eqs:2sample_GoF} with $\mathcal{C}$ defined in \eqref{eq:def_class}. The test statistic \eqref{eq:ROB_test_1D} generalized to the bi-variate uncertainty class \eqref{eq:def_class} evaluates to 
\begin{IEEEeqnarray}{lCl}
	\IEEEeqnarraymulticol{3}{l}{D_{\textnormal{KL}}^{\text{ROB}} ((\tp_{X^{n}},\tp_{Y^{n}}) \|  \mathcal{C})}\nonumber\\
    \quad & \triangleq & \inf_{(P_1,P_2) \in \mathcal{C}} D_{\textnormal{KL}}((\tp_{X^{n}},\tp_{Y^{n}})  \|  (P_1,P_2)) \nonumber\\
	& = & 4n D_{\textnormal{JS}}(\tp_{X^{n}}\| \tp_{Y^{n}}).
\end{IEEEeqnarray}
Consequently, the Gutman test $\mc_{n}^{D_{\textnormal{JS}}}$ is the GLRT of the two-sample testing problem. 
\end{proposition}
\begin{proof}
	See Appendix~\ref{sec:append_glrt_rob}. 
	\end{proof}

Assuming that \cite[Sec.~VI]{HJT25} also applies to the bi-variate case with uncertainty class \eqref{eq:def_class}, \eqref{eq:beta'_1D}--\eqref{eq:beta''_1D} suggest that the first- and second-order terms of the Gutman test are given by 
\begin{subequations}
\begin{IEEEeqnarray}{lCl}
	\beta' &= & D_{\textnormal{KL}}( (P^{*}, P^{*})\| (P_1,P_2)) \label{eq:beta'_2D}\\
	\beta'' &=& -\sqrt{V_{\textnormal{KL}}( (P^{*}, P^{*}) \| (P_1,P_2))}
	\sqrt{\mathsf{Q}^{-1}_{\chi^{2}_{k-1}}(\epsilon)}\label{eq:beta''_2D}
\end{IEEEeqnarray}
\end{subequations}
where
\begin{equation}
P^* \triangleq \arg \min_{P\in\mathcal{P}(\mathcal{Z})} D_{\textnormal{KL}}((P,P)\|(P_1,P_2)).\label{eq:min_Bhatt1}
\end{equation}
The minimum in \eqref{eq:min_Bhatt1} exists because the KL divergence is a continuous and convex function on the compact set $\bar{\mathcal{P}}(\mathcal{Z})$, and because $P_1, P_2 \in \mathcal{P}(\mathcal{Z})$.
By the chain rule of the KL divergence and the KL divergence variance, \eqref{eq:beta'_2D}--\eqref{eq:beta''_2D} can be written as
\begin{subequations}
\begin{IEEEeqnarray}{lCl}
\beta' &= & D_{\text{KL}}(P^{*}\| P_1) + D_{\text{KL}}(P^{*}\| P_2)	 \label{eq:beta'_2D_2}\\*
	\beta'' & = & -\sqrt{V_{\textnormal{KL}}(P^{*} \| P_1) + V_{\textnormal{KL}}(P^{*} \| P_2) }\sqrt{\mathsf{Q}^{-1}_{\chi^{2}_{k-1}}(\epsilon)}.\label{eq:beta''_2D_2}\IEEEeqnarraynumspace
\end{IEEEeqnarray}
\end{subequations}
It can be further shown that (see, e.g., \cite[Lemma~2]{Li_Universal})
\begin{IEEEeqnarray}{lCl}
\beta' & = & \min_{P\in\bar{\mathcal{P}}(\mathcal{Z})}\bigl\{D_{\text{KL}}(P\| P_1) + D_{\text{KL}}(P\| P_2)\bigr\} \nonumber\\
& = & 2 D_{\textnormal{B}}(P_1,P_2) \label{eq:twice_Bhatt}
\end{IEEEeqnarray}
and the minimum is achieved for
\begin{equation}
P^*(z) = \frac{\sqrt{P_{1}(z)P_{2}(z)}}{\sum_{z'\in\mathcal{Z}} \sqrt{P_{1}(z')P_{2}(z')}}, \quad z\in\mathcal{Z}.\label{eq:min_Bhatt1_2}
\end{equation}
In \eqref{eq:twice_Bhatt}, $D_{\textnormal{B}}(P_{1},P_{2})\triangleq- \ln \bigl(\sum_{z\in\mathcal{Z}} \sqrt{P_{1}(z)P_{2}(z)}\bigr)$ denotes the Bhattacharyya distance. 
The following theorem demonstrates that \eqref{eq:beta'_2D_2}--\eqref{eq:beta''_2D_2} characterize indeed the second-order asymptotics of the Gutman test:
\begin{theorem}\label{Thm:GLRT}
	Consider the two-sample testing problem described in Section~\ref{sec:setting}. Then, the Gutman test $\mc_{n}^{D_{\textnormal{JS}}}$ satisfies
	\begin{IEEEeqnarray}{lCl}
		\sup_{r_n\colon \alpha_n(\mc_{n}^{D_{\textnormal{JS}}}(r_n))\leq \epsilon} -\frac{\ln	\beta_n(\mc_{n}^{D_{\textnormal{JS}}}(r_n))}{n} & = &   2 D_{\textnormal{B}}(P_{1},P_{2}) \nonumber\\
		 \IEEEeqnarraymulticol{3}{r}{{}-\sqrt{\frac{V_{\text{KL}}(P^{*}\|P_1) + V_{\text{KL}}(P^{*}\|P_2) }{n}} \sqrt{\mathsf{Q}^{-1}_{\chi^{2}_{k-1}}(\epsilon)}+o\biggl(\frac{1}{\sqrt{n}}\biggr)}\nonumber\\\label{eq:second_glrt}
	\end{IEEEeqnarray}
	where $P^{*}$ is given in \eqref{eq:min_Bhatt1_2} and $ \mathsf{Q}^{-1}_{\chi^{2}_{k-1}}(\cdot)$ denotes the inverse of the tail probability of the chi-square distribution $\chi^{2}_{k-1}$ with $k-1$ degrees of freedom.
\end{theorem}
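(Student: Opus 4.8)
The plan is to combine the GLRT identification from Proposition~\ref{Thm:glrt_rob} with a refined large-deviations analysis of the type-II error of $\mc_n^{D_{\textnormal{JS}}}(r_n)$. By Proposition~\ref{Thm:glrt_rob}, the GLRT statistic of \eqref{eqs:2sample_GoF} equals $4nD_{\textnormal{JS}}(T_{X^n}\|T_{Y^n})$, which is a strictly increasing function of $D_{\textnormal{JS}}(T_{X^n}\|T_{Y^n})$; hence the GLRT is the divergence test $\mc_n^{D_{\textnormal{JS}}}(r)$, and it remains to expand $\sup_{r_n\colon\alpha_n\leq\epsilon}-\tfrac{1}{n}\ln\beta_n(\mc_n^{D_{\textnormal{JS}}}(r_n))$ up to $o(1/\sqrt n)$. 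For the threshold, note that under $H_0$ the multivariate CLT gives $\sqrt n(\mathbf T_{X^n}-\mathbf T_{Y^n})\xrightarrow{d}\mathcal N(\mathbf 0,2\bm\Sigma_P^{-1})$, while the Taylor expansion \eqref{eq:tayldivA1} applied to $D_{\textnormal{JS}}$ (for which $\bm A_{D_{\textnormal{JS}},\mathbf R}=\tfrac18\bm\Sigma_{\mathbf R}$, as a direct computation shows) gives $D_{\textnormal{JS}}(\mathbf a\|\mathbf b)=\tfrac18(\mathbf a-\mathbf b)^{\mathsf T}\bm\Sigma_{\mathbf b}(\mathbf a-\mathbf b)+O(\|\mathbf a-\mathbf b\|_2^3)$; together these yield $4nD_{\textnormal{JS}}(T_{X^n}\|T_{Y^n})\xrightarrow{d}\chi^2_{k-1}$ under $H_0$, i.e.\ Wilks' theorem for the codimension-$(k-1)$ submanifold $\mathcal C$ of \eqref{eq:def_class}. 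Since $\alpha_n$ and $-\tfrac{1}{n}\ln\beta_n$ are both non-increasing in $r_n$, the supremum is attained at the smallest admissible threshold, which by continuity of the limiting CDF satisfies $4nr_n=\mathsf Q^{-1}_{\chi^2_{k-1}}(\epsilon)+o(1)$. It therefore suffices to analyze $\beta_n$ for $r_n=\gamma_n/(4n)$ with $\gamma_n\to\gamma^\ast\triangleq\mathsf Q^{-1}_{\chi^2_{k-1}}(\epsilon)$.

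Write $P^\ast(z)\propto\sqrt{P_1(z)P_2(z)}$ (which lies in $\mathcal P(\mathcal Z)$ since $P_1,P_2\in\mathcal P(\mathcal Z)$), $D_B\triangleq D_B(P_1,P_2)$, and $h(z)\triangleq\ln\frac{P_1(z)}{P_2(z)}$, so that $\ln\frac{P_1}{P^\ast}=\tfrac12 h-D_B$ and $\ln\frac{P_2}{P^\ast}=-\tfrac12 h-D_B$. Tilting both sequences to $(P^\ast)^{\otimes n}$ yields the exact identity
\begin{equation*}
\beta_n=e^{-2nD_B}\,\mathbb E_{(P^\ast)^{\otimes n}\otimes(P^\ast)^{\otimes n}}\!\Big[e^{\frac n2\langle T_{X^n}-T_{Y^n},\,h\rangle}\,\mathbf 1\{D_{\textnormal{JS}}(T_{X^n}\|T_{Y^n})<r_n\}\Big],
\end{equation*}
where $\langle P,h\rangle\triangleq\sum_{z\in\mathcal Z}P(z)h(z)$. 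I would split $\beta_n$ into the contribution of type pairs with $\mathbf T_{X^n}$ and $\mathbf T_{Y^n}$ both inside a fixed small ball of $\mathbf P^\ast$ and the remainder. The remainder is $e^{-2nD_B-\Omega(n)}$ by Sanov's theorem, using that $P^\ast$ is the unique, non-degenerate minimizer in \eqref{eq:twice_Bhatt} and that the acceptance event forces $\|\mathbf T_{X^n}-\mathbf T_{Y^n}\|_2=O(1/\sqrt n)$ (by the above quadratic lower bound on $D_{\textnormal{JS}}$ near $P^\ast$, and because $D_{\textnormal{JS}}$ is bounded away from $0$ elsewhere by condition~4) of Definition~\ref{divdef}). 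For the main part, the local CLT for $\Pr_{(P^\ast)^{\otimes n}}[T_{X^n}=\mathbf a]$ together with the substitution $\mathbf u=\sqrt n(\mathbf T_{X^n}-\mathbf P^\ast)$, $\mathbf v=\sqrt n(\mathbf T_{Y^n}-\mathbf P^\ast)$ turns the expectation into a Gaussian integral; passing to $\mathbf w=\mathbf u-\mathbf v$ (the complementary coordinate $\mathbf u+\mathbf v$ then integrates to a constant) reduces it to a Laplace-type integral of $\exp\!\big({-}\tfrac14\mathbf w^{\mathsf T}\bm\Sigma_{P^\ast}\mathbf w+\tfrac{\sqrt n}{2}\langle\mathbf w,\mathbf h\rangle\big)$ over the ellipsoid $\{\mathbf w^{\mathsf T}\bm\Sigma_{P^\ast}\mathbf w<2\gamma_n\}$, where $\mathbf h$ is the coordinate vector of $h$.

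Since the unconstrained maximizer of the exponent lies at distance $\Theta(\sqrt n)$, far outside the ellipsoid, the maximum is attained on its boundary at the point maximizing $\langle\mathbf w,\mathbf h\rangle$, with value $\tfrac{\sqrt n}{2}\sqrt{2\gamma_n\,\mathbf h^{\mathsf T}\bm\Sigma_{P^\ast}^{-1}\mathbf h}+O(1)$; Laplace's method around that point contributes only a polynomial-in-$n$ factor, hence an $O(\ln n)=o(\sqrt n)$ term. This gives $-\ln\beta_n=2nD_B-\sqrt n\,\sqrt{\tfrac{\gamma_n}{2}\,\mathbf h^{\mathsf T}\bm\Sigma_{P^\ast}^{-1}\mathbf h}+o(\sqrt n)$. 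Because $\bm\Sigma_{P^\ast}^{-1}$ is the multinomial covariance matrix of $P^\ast$, one has $\mathbf h^{\mathsf T}\bm\Sigma_{P^\ast}^{-1}\mathbf h=\operatorname{Var}_{P^\ast}[h(Z)]$; and since $\ln\frac{P^\ast}{P_1}=D_B-\tfrac12 h$ and $\ln\frac{P^\ast}{P_2}=D_B+\tfrac12 h$, one reads off $V_{\textnormal{KL}}(P^\ast\|P_1)=V_{\textnormal{KL}}(P^\ast\|P_2)=\tfrac14\operatorname{Var}_{P^\ast}[h(Z)]$, whence $\tfrac12\operatorname{Var}_{P^\ast}[h(Z)]=V_{\textnormal{KL}}(P^\ast\|P_1)+V_{\textnormal{KL}}(P^\ast\|P_2)$. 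Substituting this and $\gamma_n\to\gamma^\ast$, dividing by $n$, and recalling that the smallest admissible threshold is also optimal, gives exactly \eqref{eq:second_glrt}.

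The main obstacle is the Laplace-type evaluation in the last paragraph: the tilting factor $e^{\frac n2\langle T_{X^n}-T_{Y^n},h\rangle}$ has exponential order $e^{\Theta(\sqrt n)}$ and does \emph{not} concentrate — the dominant contribution to $\beta_n$ comes from a thin boundary layer of the shrinking acceptance ellipsoid, a set of probability $e^{-\Theta(\sqrt n)}$ under the local-CLT Gaussian. Making this rigorous requires matching the exponential growth of the tilt against the Gaussian decay and the $\Theta(1/\sqrt n)$ geometry of the constraint precisely enough to resolve the $\sqrt n$ term while absorbing all subexponential factors into the $o(\sqrt n)$ remainder; this is essentially the refined large-deviation machinery of \cite[Sec.~VI]{HJT25}, which I would adapt from the uni-variate moment-class setting to the bi-variate uncertainty class \eqref{eq:def_class}. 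A secondary technical point is the uniform form of the ``$D_{\textnormal{JS}}$ small $\Rightarrow$ types close'' implication used in the splitting step, where condition~4) of Definition~\ref{divdef} (or, for $D_{\textnormal{JS}}$ specifically, a Pinsker-type bound) is needed.
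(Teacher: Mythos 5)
Your proposal is essentially correct and reaches the right answer, but it takes a genuinely different route from the paper. The paper proves Theorem~\ref{Thm:GLRT} by reduction: Proposition~\ref{Thm:glrt_rob} identifies the GLRT, and the asymptotics are obtained as the special case $\eta=\tfrac18$ of Theorem~\ref{Thm:unknownPdivtest}, whose proof runs through the method of types (bounding $\beta_n$ by $\sum e^{-nD_{\textnormal{KL}}(P'\|P_1)-nD_{\textnormal{KL}}(P''\|P_2)}$ over type pairs in $\mathcal{B}_D(r_n)$), a Taylor expansion of the KL exponent around the \emph{moving} center $\bar P=\tfrac12P'+\tfrac12P''$, a sandwich of the acceptance set between ellipsoids $\mathcal{A}_{\bm\Sigma}(\cdot)$ (Lemmas~\ref{Thm:div_ball} and~\ref{Thm:div_ball_converse}), an explicit KKT minimization of the linear term over the ellipsoid (Lemma~\ref{Thm:min_quad}), and a separate construction of a near-optimal type pair for the lower bound (Lemma~\ref{Thm:type}). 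You instead use the exact change-of-measure identity to $(P^*)^{\otimes n}\otimes(P^*)^{\otimes n}$, a local CLT, and a Laplace evaluation over the shrinking ellipsoid in the $\mathbf w=\mathbf u-\mathbf v$ coordinate. Both approaches ultimately hinge on the same geometric fact — maximizing a linear functional of slope $\Theta(\sqrt n)$ over an ellipsoid of radius $\Theta(1/\sqrt n)$, with value $\sqrt{n\gamma^*(V_{\textnormal{KL}}(P^*\|P_1)+V_{\textnormal{KL}}(P^*\|P_2))}$ — and your algebra ($\ln\tfrac{P^*}{P_1}=D_B-\tfrac12h$, $V_{\textnormal{KL}}(P^*\|P_i)=\tfrac14\operatorname{Var}_{P^*}[h]$, $\mathbf c^{\mathsf T}\bm\Sigma_{P^*}^{-1}\mathbf c=\operatorname{Var}_{P^*}$) checks out against the paper's \eqref{eq:min_Bhatt1_2} and Lemma~\ref{Thm:min_quad}. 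What your route buys is symmetry: the exact identity yields upper and lower bounds from a single two-sided Laplace estimate, whereas the paper needs separate machinery in each direction. What it costs is the boundary-layer analysis you correctly flag as the obstacle, plus one wrinkle you do not mention: inside the fixed $\delta$-ball around $P^*$, the quadratic form approximating $D_{\textnormal{JS}}(T\|R)$ is $\bm\Sigma_{\bar{\mathbf P}}$ at the moving midpoint, not $\bm\Sigma_{\mathbf P^*}$; these differ by $O(\|\bar{\mathbf P}-\mathbf P^*\|)$, which perturbs the ellipsoid radius multiplicatively and hence the $\sqrt n$-order term by $O(\sqrt n\,\|\bar{\mathbf P}-\mathbf P^*\|)$. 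You must therefore either shrink $\delta=\delta_n$ with $n^{-1/4}\ll\delta_n\ll n^{-0+}$ (so the Sanov remainder stays $e^{-\omega(\sqrt n)}$ below the main term) or exploit the Gaussian decay in the complementary coordinate $\mathbf u+\mathbf v$ to localize $\bar P$ to $O(n^{-1/4})$ of $P^*$; this is exactly the role played in the paper by centering at $\bar P$ and then proving $\tilde P_n\to P^*$ (Appendix~\ref{sec:append_min_seq}). With that repair, your argument is a valid alternative proof.
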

\begin{IEEEproof}
Note that the JS divergence is invariant. The second-order asymptotics of the Gutman test $\mc_{n}^{D_{\textnormal{JS}}}$ is thus a special case of Theorem~\ref{Thm:unknownPdivtest}, which characterizes the second-order asymptotics of the divergence test $\mc_{n}^{D}$ for the class of invariant divergences $D$.
\end{IEEEproof} 

\section{Asymptotics of Divergence Tests\\ with Invariant Divergences}
\label{sec:main_results}
In this section, we discuss the divergence test $\mc_{n}^{D}$ for the class of invariant divergences, as defined in Definition~\ref{invariancediv}. As mentioned above, many well-known divergences, including the $f$-divergences and the Renyi divergence, belong to this class. In particular, the JS divergence is an invariant divergence with constant $\eta=\frac{1}{8}$. The second-order asymptotics of such test is characterized in the following theorem.

\begin{theorem}\label{Thm:unknownPdivtest}
	Consider the two-sample testing problem described in Section~\ref{sec:setting}. Let $D$ be an invariant divergence, as defined in Definition~\ref{invariancediv}. Then, the divergence test $\textsf{T}_n^{D}$ satisfies
	\begin{IEEEeqnarray}{lCl}
		\sup_{r_n\colon \alpha_n(\mc_{n}^{D}(r_n))\leq \epsilon} -\frac{\ln	\beta_n(\mc_{n}^{D}(r_n))}{n} & = &   2 D_{\textnormal{B}}(P_{1},P_{2}) \nonumber\\
		 \IEEEeqnarraymulticol{3}{r}{{}-\sqrt{\frac{V_{\text{KL}}(P^{*}\|P_1) + V_{\text{KL}}(P^{*}\|P_2) }{n}} \sqrt{\mathsf{Q}^{-1}_{\chi^{2}_{k-1}}(\epsilon)}+o\biggl(\frac{1}{\sqrt{n}}\biggr)}\nonumber\\\label{eq:second_invariant}
	\end{IEEEeqnarray}
	where $P^{*}$ is given in \eqref{eq:min_Bhatt1_2}  and $ \mathsf{Q}^{-1}_{\chi^{2}_{k-1}}(\cdot)$ denotes the inverse of the tail probability of the chi-square distribution $\chi^{2}_{k-1}$ with $k-1$ degrees of freedom. 
\end{theorem}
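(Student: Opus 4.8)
The plan is to analyze the divergence test $\mc_n^D(r)$ by reducing its asymptotics to a local (quadratic) approximation of $D$ near the diagonal $\{(P,P)\}$, and then to invoke a central-limit / large-deviations argument for the resulting quadratic form. The key point enabling the reduction is that, by hypothesis, $D$ is an \emph{invariant} divergence, so its matrix $\bm{A}_{D,\mathbf{R}}$ at $\mathbf{R}$ is a positive multiple $\eta\bm{\Sigma}_{\mathbf{R}}$ of the Fisher-information-type matrix in \eqref{eq:covsigmain1}; this is exactly the matrix that governs the local behavior of the JS divergence (with $\eta=1/8$). Hence, up to the constant $\eta$ and cubic error terms, $D(\tp_{x^n}\|\tp_{y^n})$ and $D_{\textnormal{JS}}(\tp_{x^n}\|\tp_{y^n})$ agree in the regime that matters. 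So the strategy is: (i) establish the result for $D_{\textnormal{JS}}$ using the robust-GoF/GLRT connection and the second-order machinery of \cite[Sec.~VI]{HJT25} adapted to the bi-variate class \eqref{eq:def_class}; (ii) transfer it to an arbitrary invariant $D$ by a change-of-threshold argument based on the local equivalence.

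\medskip
\noindent\textbf{Step 1 (Achievability — upper bound on $\beta_n$).}
Fix $\epsilon\in(0,1)$. Under $H_0$ both sequences are i.i.d.\ $P$, and $(\tp_{X^n},\tp_{Y^n})$ concentrates on $(P,P)$ with $\sqrt{n}$-fluctuations that, in the coordinate space $\Xi$, are asymptotically Gaussian. By Proposition~\ref{Thm:glrt_rob}, $4nD_{\textnormal{JS}}(\tp_{X^n}\|\tp_{Y^n}) = D_{\textnormal{KL}}^{\textnormal{ROB}}((\tp_{X^n},\tp_{Y^n})\|\mathcal{C})$, which is a projection-type statistic; a Wilks-type theorem gives $4nD_{\textnormal{JS}}(\tp_{X^n}\|\tp_{Y^n}) \xrightarrow{d} \chi^2_{k-1}$ under $H_0$ (the $k-1$ degrees of freedom come from the dimension of $\Xi$, i.e.\ the codimension of $\mathcal{C}$ inside $\mathcal{P}^2(\mathcal{Z})$ after removing the common-mean direction). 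For an invariant $D$, the same computation with $\eta$ in place of $1/8$ shows $\frac{4n}{8\eta}\,D(\tp_{X^n}\|\tp_{Y^n}) \xrightarrow{d}\chi^2_{k-1}$ under $H_0$, plus a vanishing correction from the $O(\|\bm\varepsilon\|_2^3)$ term in \eqref{eq:tayldivA1} since $\|\bm\varepsilon\|_2=O_p(n^{-1/2})$. Choosing the threshold $r_n = \frac{8\eta}{4n}\bigl(\mathsf{Q}^{-1}_{\chi^2_{k-1}}(\epsilon) + \delta_n\bigr)$ for a suitable $\delta_n\to 0$ ensures $\alpha_n(\mc_n^D(r_n))\le\epsilon$. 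Under $H_1$, the event $\{D(\tp_{x^n}\|\tp_{y^n})<r_n\}$ forces $(\tp_{x^n},\tp_{y^n})$ into a shrinking neighborhood of $\mathcal{C}$; a Sanov-type estimate bounds $\beta_n$ by $\exp(-n\inf_{(P,P)\in\mathcal{C}}[D_{\textnormal{KL}}(P\|P_1)+D_{\textnormal{KL}}(P\|P_2)] + o(n))$, and refining this with the exact shape of the constraint $D<r_n$ (via the quadratic local form $\eta\bm\Sigma$, which shifts the effective constraint radius by $\Theta(1/n)$) yields the $O(1/\sqrt n)$ correction term $-\sqrt{(V_{\textnormal{KL}}(P^*\|P_1)+V_{\textnormal{KL}}(P^*\|P_2))/n}\,\sqrt{\mathsf{Q}^{-1}_{\chi^2_{k-1}}(\epsilon)}$, with $P^*$ as in \eqref{eq:min_Bhatt1_2}. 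The identities \eqref{eq:twice_Bhatt}–\eqref{eq:min_Bhatt1_2} identify the first-order exponent as $2D_B(P_1,P_2)$.

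\medskip
\noindent\textbf{Step 2 (Converse — lower bound on $\beta_n$).}
For the matching converse, one shows that \emph{no} threshold $r_n$ satisfying $\alpha_n(\mc_n^D(r_n))\le\epsilon$ can do better than \eqref{eq:second_invariant}. The constraint $\alpha_n\le\epsilon$ combined with the $\chi^2_{k-1}$ limit under $H_0$ (and an anticoncentration/uniform-integrability argument to rule out degenerate thresholds) forces $\frac{4n}{8\eta}r_n \le \mathsf{Q}^{-1}_{\chi^2_{k-1}}(\epsilon) + o(1)$; a threshold asymptotically larger than this would violate the type-I constraint for all large $n$ because $P=P^*$ is an admissible null distribution against which $\alpha_n$ must be controlled. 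Plugging this bound on $r_n$ into the $H_1$ analysis of Step 1 (now used as a lower bound, again via a Sanov-type converse and the local quadratic geometry) gives that $-\ln\beta_n(\mc_n^D(r_n))$ is at most the right-hand side of \eqref{eq:second_invariant}. Combining Steps 1 and 2 yields the claimed equality.

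\medskip
\noindent\textbf{Main obstacle.}
The principal difficulty is the bi-variate second-order analysis: \cite[Sec.~VI]{HJT25} treats a moment class in a single-sample GoF setting, whereas here the "sample" is the pair $(\tp_{X^n},\tp_{Y^n})$ living on $\mathcal{P}^2(\mathcal{Z})$ with the special nonlinear class $\mathcal{C}=\{P_1=P_2\}$. One must verify that the Wilks-type expansion still holds with exactly $k-1$ degrees of freedom, correctly track the joint covariance structure (which decouples into $V_{\textnormal{KL}}(P^*\|P_1)+V_{\textnormal{KL}}(P^*\|P_2)$ by the chain rule, as in \eqref{eq:beta''_2D_2}), and control the passage from the exact divergence $D$ to its quadratic surrogate uniformly over the relevant $\Theta(n^{-1/2})$-neighborhood — in particular showing that the cubic remainder in \eqref{eq:tayldivA1}, together with condition 4) of Definition~\ref{divdef} which keeps the statistic bounded away from zero near the boundary, does not perturb the $O(1/\sqrt n)$ term. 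The invariance hypothesis is precisely what makes the local geometry of $D$ coincide (up to the harmless scalar $\eta$) with that of $D_{\textnormal{JS}}$, so that the $\eta$ cancels out of the final expression and the second-order term is the same as for the Gutman test.
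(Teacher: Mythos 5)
Your route is the same as the paper's: calibrate the threshold via the $\chi^{2}_{k-1}$ limit of $\tfrac{n}{2\eta}D(\tp_{X^{n}}\|\tp_{Y^{n}})$ under $H_0$ (the paper's Lemma~\ref{cov_lemma} specialized to invariant divergences), then bound $\beta_n$ by the method of types combined with a local quadratic analysis of $D$ and a linearization of the KL exponent. So this is not a different proof; the issue is whether the outline closes, and there are two concrete problems. The first is that your converse is stated backwards. The acceptance region is $\{D<r_n\}$, so the type-I error $P^{n}\times P^{n}(D\ge r_n)$ is \emph{decreasing} in $r_n$, and the constraint $\alpha_n\le\epsilon$ forces a \emph{lower} bound $\tfrac{nr_n}{2\eta}\ge \mathsf{Q}^{-1}_{\chi^{2}_{k-1}}(\epsilon)-o(1)$, not the upper bound you assert; your justification ("a threshold asymptotically larger than this would violate the type-I constraint") is false, since enlarging the threshold only shrinks $\alpha_n$. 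This is not a harmless sign slip: the converse needs precisely the lower bound on $r_n$, because to lower-bound $\beta_n$ one must show the acceptance region is \emph{large enough} to contain a type pair near the minimizer of the linearized exponent (the paper's Lemmas~\ref{Thm:div_ball_converse} and~\ref{Thm:type}); an upper bound on $r_n$ yields an upper bound on $\beta_n$, i.e., the achievability direction again.

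The second problem is that everything that actually produces the $1/\sqrt{n}$ term is deferred to your "main obstacle" paragraph without being carried out: the expansion of $D_{\text{KL}}(P'\|P_1)+D_{\text{KL}}(P''\|P_2)$ must be performed around the data-dependent midpoint $\bar{P}=\tfrac12(P'+P'')$ rather than a fixed point of $\mathcal{C}$; the linear term must be minimized in closed form over the ellipsoid $\mathcal{A}_{\bm{\Sigma}_{\bar{\mathbf{P}}}}(r_n')$, which is where the constant $-\sqrt{r_n'}\sqrt{V_{\text{KL}}(\bar{P}\|P_1)+V_{\text{KL}}(\bar{P}\|P_2)}$ comes from (the paper's Lemma~\ref{Thm:min_quad}); one must then show the minimizer over $\bar{P}$ converges to $P^{*}$ so that $\bar P$ can be replaced by $P^{*}$ at cost $o(1/\sqrt{n})$; and for the lower bound one must exhibit actual lattice (type) points inside the ellipsoid attaining the minimum up to $O(1/n)$. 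Without these steps the second-order constant in \eqref{eq:second_invariant} is asserted rather than derived, so the proposal as written does not constitute a proof even in outline of the converse half.
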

\begin{IEEEproof}
	For an outline of the proof, see Section~\ref{sub:outline}. A full proof is given in Appendix~\ref{sec:proof}.
\end{IEEEproof}

The Gutman test compares the empirical distributions of the sequences $X^n$ and $Y^n$ and rejects the null hypothesis when their difference, measured by the JS divergence, exceeds a threshold. Theorem~\ref{Thm:unknownPdivtest} shows that, whether the empirical distributions $\tp_{X^n}$ and $\tp_{Y^n}$ are compared using the JS divergence, the KL divergence, or any other invariant divergence, the first- and second-order terms $\beta'$ and $\beta''$ remain unchanged.

\subsection{Proof Outline of Theorem~\ref{Thm:unknownPdivtest}}
\label{sub:outline}
The achievability and converse part hinge on the convergence of $\frac{n}{2}D (\tp_{X^{n}} \| \tp_{Y^{n}})$ to a generalized chi-square random variable, defined as follows:
\begin{definition}
    We shall say that the random variable $\chi^{2}_{\bm{w},m}$ is a generalized chi-square random variable with vector parameter $\bm{w}=(w_1,\ldots,w_m)$ and $m$ degrees of freedom if it has the same distribution as the random variable
    \begin{equation}
        \xi = \sum_{i=1}^m w_i \Upsilon_i
    \end{equation}
    where $(\Upsilon_1,\ldots,\Upsilon_m)$ are i.i.d.\ chi-square random variables with $1$ degree of freedom.
\end{definition}

The following lemma characterizes the convergence of $\frac{n}{2}D(\tp_{X^{n}} \| \tp_{Y^{n}})$ to a generalized chi-square random variable. In the special case where $D$ is the JS divergence, this result reduces essentially to \cite[Th.~6]{Unnikrishnan16}.
\begin{lemma} \label{cov_lemma} Let $X^n$ and $Y^n$ be independent sequences of i.i.d.\ random variables distributed according to $P$, and let $D$ be a divergence. Further let $\bm{\lambda}=(\lambda_{1}, \ldots, \lambda_{k-1})^{\mathsf{T}}$ be a vector that contains the eigenvalues of the matrix $ \bm{\Sigma}_{\mathbf{P}}^{-1/2}\bm{A}_{D,\mathbf{P}} \bm{\Sigma}_{\mathbf{P}}^{-1/2}$, where $\bm{A}_{D,\mathbf{P}}$ is the matrix associated with the divergence $D$ at $P$ (cf.~\eqref{eq:tayldivA1}) and the matrix $\bm{\Sigma}_{\mathbf{P}}$ is defined in \eqref{eq:covsigmain1}. Then,
		\begin{equation}
			P^{n} \left( \frac{n}{2}D( \tp_{X^{n}}\| \tp_{Y^{n}})  \geq c \right) =  \mathsf{Q}_{\chi^{2}_{\bm{\lambda},k-1}}(c) + O(\delta_{n}) \label{eq:ratediv2d1c}
		\end{equation}
    		for all $c >0$ and some positive sequence  $\{\delta_{n}\}$ that is independent of $c$ and satisfies  $\lim_{n \rightarrow \infty} \delta_{n}=0$, and where the $O(\delta_{n})$-term is uniform in $c$. In \eqref{eq:ratediv2d1c}, $ \mathsf{Q}_{\chi^{2}_{\bm{\lambda},k-1}}(\cdot)$ is the tail probability of the generalized chi-square distribution $\chi^{2}_{\bm{\lambda},k-1}$ with vector parameter
		$\bm{\lambda}$ and $k-1$ degrees of freedom.
	\end{lemma}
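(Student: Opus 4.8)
The plan is to reduce $\tfrac n2 D(\tp_{X^n}\|\tp_{Y^n})$ to a quadratic form in an asymptotically Gaussian vector and then upgrade the resulting convergence in distribution to the uniform, quantitative statement \eqref{eq:ratediv2d1c} via a P\'olya-type argument. Write $\hat{\mathbf S}$ and $\hat{\mathbf R}$ for the vectors of the first $k-1$ coordinates of $\tp_{X^n}$ and $\tp_{Y^n}$, and set $\mathbf U_n\triangleq\sqrt n(\hat{\mathbf S}-\hat{\mathbf R})$. First I would record two elementary facts. By the multinomial CLT, $\sqrt n(\hat{\mathbf S}-\mathbf P)$ and $\sqrt n(\hat{\mathbf R}-\mathbf P)$ each converge in distribution to $\mathcal N(\mathbf 0,\bm C_{\mathbf P})$ with $\bm C_{\mathbf P}\triangleq\mathrm{diag}(P_1,\dots,P_{k-1})-\mathbf P\mathbf P^{\mathsf T}$, and since $X^n$ and $Y^n$ are independent, $\mathbf U_n\xrightarrow{d}\mathcal N(\mathbf 0,2\bm C_{\mathbf P})$; moreover the multinomial Fisher-information identity gives $\bm C_{\mathbf P}^{-1}=\bm\Sigma_{\mathbf P}$, with $\bm\Sigma_{\mathbf P}$ as in \eqref{eq:covsigmain1}. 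Hence $\mathbf W_n\triangleq\tfrac1{\sqrt2}\bm\Sigma_{\mathbf P}^{1/2}\mathbf U_n\xrightarrow{d}\mathcal N(\mathbf 0,\bm I_{k-1})$.

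Next I would control the Taylor remainder in \eqref{eq:tayldivA1}. On the good event $\mathcal G_n\triangleq\{\|\tp_{X^n}-P\|_2\le n^{-1/2+\gamma}\}\cap\{\|\tp_{Y^n}-P\|_2\le n^{-1/2+\gamma}\}$ for a small fixed $\gamma\in(0,1/6)$, both types are strictly positive (for $n$ large) and lie in a shrinking neighborhood of $P$, so \eqref{eq:tayldivA1} applies with $S=\tp_{X^n}$, $R=\tp_{Y^n}$ and yields $\tfrac n2 D(\tp_{X^n}\|\tp_{Y^n})=\tfrac12\mathbf U_n^{\mathsf T}\bm{A}_{D,\hat{\mathbf{R}}}\mathbf U_n+O\!\left(n^{-1/2}\|\mathbf U_n\|_2^3\right)$. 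Since $D$ is smooth, $\mathbf R\mapsto\bm{A}_{D,\mathbf{R}}$ is Lipschitz near $\mathbf P$, so $\bm{A}_{D,\hat{\mathbf{R}}}=\bm{A}_{D,\mathbf{P}}+O(\|\hat{\mathbf R}-\mathbf P\|_2)$; combined with $\|\mathbf U_n\|_2\le 2n^{\gamma}$ on $\mathcal G_n$, the two error terms are deterministically bounded on $\mathcal G_n$ by some $\epsilon_n\le C n^{-1/2+3\gamma}\to0$. A Hoeffding/union bound for the multinomial gives $P^n(\mathcal G_n^{\mathrm c})\le e^{-\Omega(n^{2\gamma})}$. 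Writing $\bm M\triangleq\bm\Sigma_{\mathbf P}^{-1/2}\bm{A}_{D,\mathbf{P}}\bm\Sigma_{\mathbf P}^{-1/2}$ and using $\tfrac12\mathbf U_n^{\mathsf T}\bm{A}_{D,\mathbf{P}}\mathbf U_n=\mathbf W_n^{\mathsf T}\bm M\mathbf W_n$, I then obtain, uniformly in $c>0$,
\[
\mathsf{Q}_{n}^{\bm{M}}(c+\epsilon_n)
\ \le\
P^n\!\left(\tfrac n2 D(\tp_{X^n}\|\tp_{Y^n})\ge c\right)
\ \le\
\mathsf{Q}_{n}^{\bm{M}}(c-\epsilon_n)+P^n(\mathcal G_n^{\mathrm c}),
\]
where $\mathsf{Q}_{n}^{\bm{M}}(\cdot)$ denotes the tail probability of $\mathbf W_n^{\mathsf T}\bm M\mathbf W_n$.

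Finally I would identify the limit law and pass to uniform bounds. Diagonalizing the positive-definite matrix $\bm M=\bm O\,\mathrm{diag}(\lambda_1,\dots,\lambda_{k-1})\,\bm O^{\mathsf T}$, whose eigenvalues are the entries of $\bm\lambda$ (all strictly positive), and using $\bm O^{\mathsf T}\mathbf W_n\xrightarrow{d}\mathcal N(\mathbf 0,\bm I_{k-1})$ together with the continuous mapping theorem gives $\mathbf W_n^{\mathsf T}\bm M\mathbf W_n\xrightarrow{d}\sum_{i=1}^{k-1}\lambda_i\Upsilon_i=\chi^{2}_{\bm{\lambda},k-1}$, whose CDF $F$ is continuous and strictly increasing on $(0,\infty)$. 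By P\'olya's theorem the convergence is uniform, i.e., $\rho_n\triangleq\sup_{c>0}|\mathsf{Q}_{n}^{\bm{M}}(c)-\mathsf{Q}_{\chi^{2}_{\bm{\lambda},k-1}}(c)|\to0$; and since $F$ is uniformly continuous on $\mathbb R$ with modulus $\omega_F$ satisfying $\omega_F(t)\to0$ as $t\to0$, one has $|\mathsf{Q}_{\chi^{2}_{\bm{\lambda},k-1}}(c\pm\epsilon_n)-\mathsf{Q}_{\chi^{2}_{\bm{\lambda},k-1}}(c)|\le\omega_F(\epsilon_n)\to0$ uniformly in $c$. Chaining the inequalities yields $|P^n(\tfrac n2 D(\tp_{X^n}\|\tp_{Y^n})\ge c)-\mathsf{Q}_{\chi^{2}_{\bm{\lambda},k-1}}(c)|\le P^n(\mathcal G_n^{\mathrm c})+\rho_n+\omega_F(\epsilon_n)=:\delta_n$, which is independent of $c$ and tends to $0$. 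The main obstacle is precisely this passage to a $c$-uniform error: the convergence in distribution is routine, but the Taylor remainder shifts the threshold by $\pm\epsilon_n$ and this shift must be absorbed through the modulus of continuity of the limiting CDF rather than pointwise, and one must verify that restricting to the high-probability set $\mathcal G_n$ — needed both for strict positivity of the types and for validity of \eqref{eq:tayldivA1} — costs only a $c$-independent $o(1)$ term. The auxiliary identity $\bm C_{\mathbf P}^{-1}=\bm\Sigma_{\mathbf P}$ and the Lipschitz bound on $\mathbf R\mapsto\bm{A}_{D,\mathbf{R}}$ are routine.
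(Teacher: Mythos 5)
Your proposal is correct and follows essentially the same route as the paper's proof: a Taylor expansion of $D$ reducing $\tfrac{n}{2}D(\tp_{X^n}\|\tp_{Y^n})$ to the quadratic form $\mathbf{W}_n^{\mathsf T}\bm{\Sigma}_{\mathbf P}^{-1/2}\bm{A}_{D,\mathbf P}\bm{\Sigma}_{\mathbf P}^{-1/2}\mathbf{W}_n$ in an asymptotically standard Gaussian vector, identification of the limit as $\chi^2_{\bm\lambda,k-1}$ by diagonalization, and P\'olya's theorem for uniformity in $c$. Your explicit handling of the remainder (good event, Lipschitz replacement of $\bm{A}_{D,\hat{\mathbf R}}$ by $\bm{A}_{D,\mathbf P}$, and absorption of the threshold shift $\pm\epsilon_n$ through the modulus of continuity of the limiting CDF) is if anything more detailed than the paper's, which disposes of the remainder via an $o_P(1/n)$ argument and a citation.
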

	\begin{IEEEproof}
		See Appendix~\ref{sec:appendA_con_div}. 
	\end{IEEEproof}
    
 For any invariant divergence, we have that $\bm{A}_{D, \mathbf{P}}=\eta \bm{\Sigma}_{\mathbf{P}}$ for some constant $\eta>0$ (cf.~Definition~\ref{invariancediv}). It follows that the eigenvalues of the matrix $ \bm{\Sigma}_{\mathbf{P}}^{-1/2}\bm{A}_{D,\mathbf{P}} \bm{\Sigma}_{\mathbf{P}}^{-1/2}$ are $\bm{\lambda}=(\eta,\ldots,\eta)$. This implies that $\mathsf{Q}_{\chi^{2}_{\bm{\lambda},k-1}}(c) =  \mathsf{Q}_{\chi^{2}_{k-1}}(c/\eta)$ \cite[Eq.~(120)]{HJT25}, so for an invariant divergence $D$, \eqref{eq:ratediv2d1c} can be written as
 \begin{equation}
    P^{n} \left( \frac{n}{2}D( \tp_{X^{n}}\| \tp_{Y^{n}})  \geq c \right) =  \mathsf{Q}_{\chi^{2}_{k-1}}(c/\eta) + O(\delta_{n}). \label{eq:ratediv_inv}
\end{equation}
From \eqref{eq:ratediv_inv}, it can be shown that the smallest threshold value $r_n$ for which $\alpha_{n}(\mc_{n}^{D}(r_{n})) \leq \epsilon$, denoted as $r_{n,\epsilon}^{D}$, satisfies
\begin{equation}
		r^{D}_{n, \epsilon} = \frac{2\eta }{n} \mathsf{Q}^{-1}_{\chi^{2}_{k-1}} ( \epsilon)+  O \left( \frac{\delta_{n}}{n}  \right). \label{eq:sec6rbdi1}
	\end{equation}

We next bound the type-II error for $r^{D}_{n, \epsilon}$. To upper-bound $\beta_n(\textsf{T}_n^{D}(r^{D}_{n, \epsilon}))$, we use the method of types \cite[Th.~11.1.4]{ATCB} to obtain
\begin{IEEEeqnarray}{lCl}
		\IEEEeqnarraymulticol{3}{l}{\beta_n(\textsf{T}_n^{D}(r^{D}_{n, \epsilon}))} \nonumber\\
         & \leq & \sum_{(P', P'') \in 	\mathcal{B}^{D}(r^{D}_{n, \epsilon})\cap (\mathcal{P}_n \times \mathcal{P}_n)} e^{-n D_{\textnormal{KL}}(P'\| P_1)-n D_{\textnormal{KL}}(P'' \| P_2)}\IEEEeqnarraynumspace      
	\end{IEEEeqnarray}
where $\mathcal{P}_n$ denotes the set of types with denominator $n$ and
	\begin{IEEEeqnarray}{lCl}
		\mathcal{B}_{D}(r) \triangleq \left\lbrace  (T,R) \in \mathcal{P}^2(\mathcal{Z}) \colon D(T,R)  < r \right\rbrace. \label{eq:B_D(r)}
	\end{IEEEeqnarray}
 We then lower-bound $D_{\text{KL}}(P' \| P_{1})+D_{\text{KL}}(P'' \| P_{2})$ over $(P', P'') \in \mathcal{B}_{D}(r^{D}_{n, \epsilon})$. To this end, we first note that $r^{D}_{n, \epsilon} = O(1/n)$ and, hence, also $D(P',P'') = O(1/n)$ for every $(P',P'')\in\mathcal{B}_{D}(r^{D}_{n, \epsilon})$. It follows from \cite[Lemma~2]{HJT25} that $\|\mathbf{P}'-\mathbf{P}''\|_2 = O(1/\sqrt{n})$, and hence also $\|\mathbf{P}'-\bar{\mathbf{P}}\|_2+\|\mathbf{P}''-\bar{\mathbf{P}}\|_2=O(1/\sqrt{n})$ for $\bar{P} = \frac{1}{2} P' + \frac{1}{2} P''$. We can thus perform a Taylor-series approximation of $D_{\text{KL}}(P' \| P_{1})+D_{\text{KL}}(P'' \| P_{2})$ around $(\bar{P},\bar{P})$, which we minimize over $(P',P'')\in \mathcal{B}_{D}(r^{D}_{n, \epsilon})$ by showing that there exists an $r_n'>0$ such that any pair $(P',P'')\in\mathcal{B}_{D}(r^{D}_{n, \epsilon})$ lies in the set
 \begin{IEEEeqnarray}{lCl}
\mathcal{A}_{\bm{\Sigma}_{\bar{\mathbf{P}}}}(r_n') & \triangleq & \bigl\{ (T,R)\in\mathcal{P}^2(\mathcal{Z}) \colon 	(\mathbf{T}-\bar{\mathbf{P}})^{T}\bm{\Sigma}_{\bar{\mathbf{P}}}(\mathbf{T}-\bar{\mathbf{P}}) \notag \\ 
        & & \quad\quad\qquad {} +(\mathbf{R}-\bar{\mathbf{P}})^{T}\bm{\Sigma}_{\bar{\mathbf{P}}}(\mathbf{R}-\bar{\mathbf{P}}) \leq r_n'\bigr\} \IEEEeqnarraynumspace \label{eq:A_Sigma}
 \end{IEEEeqnarray}
and by then minimizing the Taylor-series approximation of $D_{\text{KL}}(P' \| P_{1})+D_{\text{KL}}(P'' \| P_{2})$ over $(P',P'')\in\mathcal{A}_{\bm{\Sigma}_{\bar{\mathbf{P}}}}(r_n')$.

To lower-bound $\beta_n(\textsf{T}_n^{D}(r^{D}_{n, \epsilon}))$, we first show that one can find a threshold value $\tilde{r}_n$ that has the same order as $r^{D}_{n, \epsilon}$ and satisfies $\mathcal{A}_{\bm{\Sigma}_{\mathbf{P}^*}}(\tilde{r}_n) \subseteq \mathcal{B}_D(r^{D}_{n, \epsilon})$, where $\mathcal{A}_{\bm{\Sigma}_{\mathbf{P}^*}}$ is as in \eqref{eq:A_Sigma} but with $\bar{\mathbf{P}}$ replaced by $\mathbf{P}^*$, and $P^{*}$ is given in \eqref{eq:min_Bhatt1_2}. We then use again the method of types \cite[Th.~11.1.4]{ATCB} to obtain
\begin{IEEEeqnarray}{lCl}
\IEEEeqnarraymulticol{3}{l}{\beta_{n}(\mc_{n}^{D}(r^{D}_{n,\epsilon})) } \nonumber\\
\quad & \geq &\frac{1}{(n+1)^{2|\mathcal{Z}|}} e^{-nD_{\textnormal{KL}}(\tp'_{n} \| P_{1}) - nD_{\textnormal{KL}}(\tp''_{n} \| P_{2})} 
\end{IEEEeqnarray}
for some types $(\tp'_{n},\tp''_{n})$ that lie in $\mathcal{A}_{\bm{\Sigma}_{\mathbf{P}^*}}(\tilde{r}_n)$. Since $r_{n,\epsilon}^{D}$, and hence also $\tilde{r}_n$, are of order $1/n$, it follows from \cite[Lemma~2]{HJT25} that $\|\mathbf{P}_n'-\mathbf{P}^*\|_2 + \|\mathbf{P}_n''-\mathbf{P}^*\|_2 = O(1/\sqrt{n})$. We can thus perform a Taylor-series approximation of $D_{\textnormal{KL}}(\tp'_{n} \| P_{1})+D_{\textnormal{KL}}(\tp''_{n} \| P_{2})$ around $(P^{*},P^{*})$. We conclude the proof by demonstrating that there exists a pair of types $(\tp'_n, \tp''_n)$ in $\mathcal{A}_{\bm{\Sigma}_{\mathbf{P}^*}}(\tilde{r}_n)$ for which the Taylor-series approximation of $D_{\textnormal{KL}}(\tp'_{n} \| P_{1})+D_{\textnormal{KL}}(\tp''_{n} \| P_{2})$ is close to the Taylor-series approximation of $D_{\textnormal{KL}}(P' \| P_{1})+D_{\textnormal{KL}}( P''\| P_{2})$ minimized over $(P',P'') \in\mathcal{B}_{D}(r^{D}_{n,\epsilon})$.
        
\section{Asymptotics of Divergence Tests\\ with Non-Invariant Divergences}
\label{sec:non-invariant}

\comment{In \cite{HJT25}, we proposed a divergence test for the GoF problem and characterized its second-order asymptotics for general divergences. Specifically, GoF testing is a binary hypothesis testing problem where, under hypothesis $H_0$, the sequence of observations $Z^n$ is distributed i.i.d.\ according to a known distribution $P$, and under hypothesis $H_1$, the sequence of observations $Z^n$ is distributed i.i.d.\ according to an unknown distribution $Q$ different from $P$. Thus, GoF testing is a special case of robust GoF testing when the uncertainty class $\mathcal{C}$ only contains one distribution $P$. The GLRT of this problem is the so-called Hoeffding test \cite{H65}, which decides on $H_0$ if $D_{\textnormal{KL}}(\tp_{Z^n}\| P)$ is below a given threshold $r>0$, and it decides on $H_1$ otherwise. The divergence test for this problem replaces the KL divergence $D_{\textnormal{KL}}$ by an arbitrary divergence $D$.}

In \cite{HJT25}, we proposed a divergence test for the GoF problem and characterized its second-order asymptotics for general divergences. Specifically, GoF testing is a special case of robust GoF testing, introduced in Section~\ref{sec:GLRT}, when the uncertainty class $\mathcal{C}$ contains only one distribution $P$. The GLRT of this problem is the so-called Hoeffding test \cite{H65}, which decides on $H_0$ if $D_{\textnormal{KL}}(\tp_{Z^n}\| P)$ is below a given threshold $r>0$, and it decides on $H_1$ otherwise. The divergence test for this problem replaces the KL divergence $D_{\textnormal{KL}}$ by an arbitrary divergence $D$.

Similar to Theorem~\ref{Thm:unknownPdivtest}, we showed in \cite[Cor.~1]{HJT25} that  divergence tests with invariant divergences achieve the same second-order asymptotics as the GLRT. In contrast, divergence tests with non-invariant divergences achieve the same first-order term as the GLRT, but the second-order term may be strictly larger than that of the GLRT for some alternative distributions $Q$. Thus, there are alternative distributions $Q$ for which a divergence test with a non-invariant divergence has a better second-order performance than the GLRT.

This raises the question whether divergence tests with non-invariant divergences may also achieve a better second-order performance than the GLRT for the two-sample testing problem. However, a characterization of the second-order asymptotics of the divergence test $\mc_{n}^{D}$ for general divergences remains open. The reason is that, for non-invariant divergences, the tail probability of the test statistic $D(\tp_{X^n}\| \tp_{Y^n})$ is approximated by $\mathsf{Q}_{\chi^{2}_{\bm{\lambda},k-1}}$, which depends on the vector of eigenvalues $\bm{\lambda}$ of the matrix $ \bm{\Sigma}_{\mathbf{P}}^{-1/2}\bm{A}_{D,\mathbf{P}} \bm{\Sigma}_{\mathbf{P}}^{-1/2}$ (cf.~Lemma~\ref{cov_lemma}). Since $\bm{\lambda}$ is a function of $P$, which is not available to the test, finding a threshold $r_n>0$ for which $\alpha_{n}(\mc_{n}^{D}(r_{n})) \leq \epsilon$, and bounding then the type-II error for this threshold, becomes considerably more challenging. Nevertheless, the first-order asymptotics of the divergence test $\mc_{n}^{D}$ for general divergences can be characterized:

\begin{proposition}\label{prop:gendiv}
Consider the two-sample testing problem described in Section~\ref{sec:setting}. Let $D$ be an arbitrary divergence, as defined in Definition~\ref{divdef}. Then, there exists a sequence of threshold values $\{r_n\}$ such that the divergence test $\textsf{T}_n^{D}$ satisfies
\begin{subequations}
	\begin{IEEEeqnarray}{rCl}
        \lim_{n\to\infty} \alpha_n(\mc_{n}^{D}(r_n)) & = & 0 \label{eq:prop:gendiv_a}\\
		\lim_{n\to\infty} -\frac{\ln\beta_n(\mc_{n}^{D}(r_n))}{n} & = &   2 D_{\textnormal{B}}(P_{1},P_{2}). \IEEEeqnarraynumspace\label{eq:prop_gendiv_b}
	\end{IEEEeqnarray}
\end{subequations}
\end{proposition}
\begin{IEEEproof}
See Appendix~\ref{Append:bla}.
\end{IEEEproof}

Proposition~\ref{prop:gendiv} demonstrates that, irrespective of the divergence $D$, the divergence test $\textsf{T}_n^{D}$ achieves the same first-order term $\beta'$ as the GLRT $\mc_{n}^{D_{\textnormal{JS}}}$. In fact, it was shown in \cite[Th.~10]{Shengyu21} that no two-sample test with type-I error bounded by $\epsilon\in(0,1)$ can achieve a type-II first-order term $\beta'$ that exceeds $2 D_{\textnormal{B}}(P_1,P_2)$. So the divergence test $\textsf{T}_n^{D}$ is first-order optimal, irrespective of the divergence $D$.

Zhu \emph{et al.} \cite{Shengyu21} showed that the MMD-based test achieves the optimal first-order term $2 D_{\textnormal{B}}(P_1,P_2)$ even when the observations take values in an arbitrary Polish space. For discrete alphabets $\mathcal{Z}$, the MMD can be shown to be a non-invariant divergence. In this setting, the first-order optimality of the MMD-based test follows as a special case of Proposition~\ref{prop:gendiv}. However, extending our second-order asymptotic analysis beyond finite alphabets appears challenging, even within the class of invariant divergences. Indeed, as seen in Theorem~\ref{Thm:unknownPdivtest}, the second-order term $\beta''$ depends on the alphabet size $k$ through the term $\mathsf{Q}^{-1}_{\chi^{2}_{k-1}}(\epsilon)$, which is unbounded in $k$.

\section*{Acknowledgment}
Jithin Ravi thanks Srikrishna Bhashyam (IIT Madras) for fruitful discussions during the initial phase of this work.

\newpage

\appendices

\section{Proof of Proposition~\ref{Thm:glrt_rob}}
\label{sec:append_glrt_rob}

Consider a standard composite binary hypothesis testing problem of the form
\begin{equation}
H_0:\ \theta\in\Theta_0 \qquad\text{vs.}\qquad H_1:\ \theta\in \Theta \setminus \Theta_0
\end{equation}
where $\Theta$ is the \emph{full parameter space} and $\Theta_0$ is the \emph{restricted space under the null hypothesis}.
Let $L(\theta | z)$ denote the likelihood function, i.e., $L(\theta | z)$ is  conditional probability mass function (PMF) of the observation $Z=z$ given the parameter $\theta$, and define 
\begin{subequations}
\begin{IEEEeqnarray}{lCl}
L(\hat\theta_0 | z) &\triangleq & \sup_{\theta\in\Theta_0}L(\theta |z)\\
L(\hat\theta | z) & \triangleq & \sup_{\theta\in\Theta}L(\theta | z).
\end{IEEEeqnarray}
\end{subequations}
Furthermore, define the generalized likelihood ratio as
\begin{equation}
\Lambda(z)
\triangleq\frac{L(\hat\theta_0 | z)}{L(\hat\theta | z)}.
\end{equation}
The GLRT statistic is then given by \cite[Ch.~12]{LR205}
\begin{equation}
T(z)=-2\ln\Lambda(z).
\end{equation}

Specialized to the two-sample testing problem, the full parameter space $\Theta$ and restricted parameter space $\Theta_0$ become
\begin{subequations}
\begin{IEEEeqnarray}{lCl}
\Theta & = & \bar{\mathcal{P}}^{2}(\mathcal{Z})\\
\Theta_0 & = & \mathcal{C}
\end{IEEEeqnarray}
\end{subequations}
where $ \mathcal{C}$ is defined in \eqref{eq:def_class}. Furthermore, the observation is given by $Z=(X^n,Y^n)$. We next compute the GLRT statistic of the two-sample testing problem. Indeed, $X^n$ and $Y^n$ are independent sequences of i.i.d.\ random variables distributed according to $P_1$ and $P_2$, respectively. It can then be shown that
\begin{IEEEeqnarray}{lCl}
	\ln L(\hat{\theta}|z) &	= &  \sum_{i=1}^k n \tp_{x^{n}}(a_i) \ln  \tp_{x^{n}}(a_i) \nonumber\\
    & & {} +  \sum_{i=1}^k n \tp_{y^{n}}(a_i) \ln \tp_{y^{n}}(a_i)
\end{IEEEeqnarray}
and
\begin{IEEEeqnarray}{lCl}
	\IEEEeqnarraymulticol{3}{l}{\ln L(\hat\theta_0|z)} \nonumber \\
	& = & \sum_{i=1}^k n\left(   \tp_{x^{n}}(a_i) +\tp_{x^{n}}(a_i) \right)  \ln \left(\frac{1}{2}\tp_{x^n}(a_i)+\frac{1}{2}\tp_{y^n}(a_i) \right). \nonumber\\
\end{IEEEeqnarray}
Substituting these expressions into the GLRT statistic $T(x^n,y^n)$, we obtain that
\begin{IEEEeqnarray}{lCl}
	T(x^n,y^n) &= & -2 \ln\Lambda(x^n,y^n) \nonumber\\
    & = & -2\ln L(\hat{\theta}|z) - 2 \ln L(\hat{\theta_{0}}|z) \notag \\
    & = & 2 n\sum_{i=1}^k \tp_{x^n}(a_i)\ln \frac{\tp_{x^n}(a_i)}{\frac{1}{2}\tp_{x^n}(a_i)+\frac{1}{2}\tp_{y^n}(a_i)}\nonumber\\
    & & {} + 2 n\sum_{i=1}^k \tp_{y^n}(a_i)\ln \frac{\tp_{x^n}(a_i)}{\frac{1}{2}\tp_{x^n}(a_i)+\frac{1}{2}\tp_{y^n}(a_i)} \nonumber\\
	&= & 4n D_{\textnormal{JS}}(\tp_{x^{n}}\| \tp_{y^{n}}) \label{eq:GLRT_unknownP1}
\end{IEEEeqnarray}
where in the last step, we used that the Jensen–Shannon divergence is given by
\begin{IEEEeqnarray}{lCl}
\IEEEeqnarraymulticol{3}{l}{D_{\textnormal{JS}}(T \| R )} \nonumber\\
\quad & = &  \frac{1}{2} D_{\textnormal{KL}} \biggl( T \biggm\|  \frac{T+R}{2}\biggr)  +\frac{1}{2} D_{\textnormal{KL}} \biggl( R  \biggm\|  \frac{T+R}{2}\biggr). \IEEEeqnarraynumspace
\end{IEEEeqnarray}

Next, we consider the robust GoF test statistic 
\begin{IEEEeqnarray}{lCl}
    \IEEEeqnarraymulticol{3}{l}{D_{\textnormal{KL}}^{\text{ROB}} ((\tp_{X^{n}},\tp_{Y^{n}}) \|  \mathcal{C}) }\nonumber\\
    \quad & = & \inf_{(P,P) \in \mathcal{C}} D_{\textnormal{KL}}((\tp_{X^{n}},\tp_{Y^{n}})  \|  (P,P)). \label{eq:glrt_inf}
\end{IEEEeqnarray}
Since
\begin{IEEEeqnarray}{lCl}
\IEEEeqnarraymulticol{3}{l}{D_{\textnormal{KL}}((\tp_{X^{n}},\tp_{Y^{n}})  \|  (P,P))} \nonumber\\
\quad & = & D_{\textnormal{KL}}(\tp_{X^{n}} \|P)+ D_{\textnormal{KL}}(\tp_{Y^{n}} \|P)
\end{IEEEeqnarray}
 the infimum in \eqref{eq:glrt_inf} is equivalent to
 \begin{equation}
   \inf_{P \in \mathcal{P}(\mathcal{Z})} \left\{ D_{\textnormal{KL}}(\tp_{X^{n}} \|P)+ D_{\textnormal{KL}}(\tp_{Y^{n}} \|P)\right\}.	
 \end{equation}
By Lagrange optimization, it can be shown that the infimum is attained at $P=\frac{1}{2}\tp_{X^{n}}+\frac{1}{2}\tp_{Y^{n}}$. Substituting this into \eqref{eq:glrt_inf}, we obtain that
\begin{equation}
  D_{\textnormal{KL}}^{\text{ROB}} ((\tp_{X^{n}},\tp_{Y^{n}}) \|  \mathcal{C}) =  4n D_{\textnormal{JS}}(\tp_{X^{n}}\| \tp_{Y^{n}})
\end{equation}
which coincides with the GLRT test statistic. 

\section{Proof of Lemma \ref{cov_lemma}}
\label{sec:appendA_con_div}
A Taylor-series expansion of $D(T\|R)$ around the point $(P,P)$ yields that
\begin{IEEEeqnarray}{lCl}
    \IEEEeqnarraymulticol{3}{l}{D(T\|R)} \nonumber\\
    \quad & = & \frac{1}{2}\sum_{i=1}^{k-1}\sum_{j=1}^{k-1}	\frac{\partial^{2} D(P\| P) }{\partial T_{i} \partial  T_{j}} (T_{i}-P_{i})(T_{j}-P_{j}) \notag \\
	& & {} +\frac{1}{2}\sum_{i=1}^{k-1}\sum_{j=1}^{k-1}	\frac{\partial^{2} D(P\| P) }{\partial R_{i} \partial  R_{j}} (R_{i}-P_{i})(R_{j}-P_{j}) \notag \\
	& & {}	-\frac{1}{2}\sum_{i=1}^{k-1}\sum_{j=1}^{k-1}	2\frac{\partial^{2} D(P\| P) }{\partial R_{i} \partial  T_{j}} (R_{i}-P_{i})(T_{j}-P_{j}) \notag \\
	& & {} +O(\|\mathbf{T}-\mathbf{P}\|_{2}^{3}) +O(\|\mathbf{R}-\mathbf{P}\|_{2}^{3}).
	\label{eq:tayldivA}
\end{IEEEeqnarray}
Since any divergence $D(T\|R)$ satisfies at $T=R=P$ that
\begin{IEEEeqnarray}{lCl}
	\frac{\partial^{2} D(P\| P) }{\partial T_{i} \partial  T_{j}} 
	& = & 	\frac{\partial^{2} D(P\| P) }{\partial R_{i} \partial  R_{j}}  \notag \\
	& = & -	\frac{\partial^{2} D(P\| P) }{\partial T_{i} \partial  R_{j}}, \quad  i,j=1,\ldots, k-1 \IEEEeqnarraynumspace
\end{IEEEeqnarray}
and since each component of the matrix $\bm{A}_{D, \mathbf{\br}} $ associated with the divergence $D$ at $\mathbf{\br}$ is given by
\begin{equation}
	\bm{A}_{D,\mathbf{P}}(i,j)=  \frac{1}{2} 	\left.	\frac{\partial^{2} D(T\| \br) }{\partial T_{i} \partial  T_{j}} \right|_{T=R=P} \label{eq:matrixa}
\end{equation}
the Taylor-series approximation \eqref{eq:tayldivA} can be written as
\begin{IEEEeqnarray}{lCl}
    \IEEEeqnarraymulticol{3}{l}{D(T\|R)} \nonumber\\
    \quad & = & \begin{bmatrix}
		\mathbf{T}-\mathbf{P} & \mathbf{R}-\mathbf{P} 
	\end{bmatrix}^{T}  \begin{bmatrix}
		\bm{A}_{D,\mathbf{P}}& 	-\bm{A}_{D,\mathbf{P}} \\
		-\bm{A}_{D,\mathbf{P}}& 	\bm{A}_{D,\mathbf{P}}
	\end{bmatrix}   \begin{bmatrix}
		\mathbf{T}-\mathbf{P} \\ \mathbf{R}-\mathbf{P} 
	\end{bmatrix}  \notag \\
	& & {} +O(\|\mathbf{T}-\mathbf{P}\|_{2}^{3}) +O(\|\mathbf{R}-\mathbf{P}\|_{2}^{3}). \label{eq:(47)}
\end{IEEEeqnarray}

\comment{Consider the eigenvalue decomposition of the matrix $\bm{A}_{D,\mathbf{P}}=V \Lambda V^{T}$, where $V$ is an orthogonal matrix containing the eigenvectors of $\bm{A}_{D,\mathbf{P}}$, and $\Lambda$ is a diagonal matrix containing the eigenvalues of $\bm{A}_{D,\mathbf{P}}$. It follows that
\begin{IEEEeqnarray}{lCl}
	\begin{bmatrix}
		\bm{A}_{D,\mathbf{P}}& 	-\bm{A}_{D,\mathbf{P}} \\
		-\bm{A}_{D,\mathbf{P}}& 	\bm{A}_{D,\mathbf{P}}
	\end{bmatrix}
	& =& \begin{bmatrix}
		V& 	V\\
		V& 	-V
	\end{bmatrix} \begin{bmatrix}
		\bm{0}& 	\bm{0}\\
		\bm{0}& 	\Lambda
	\end{bmatrix} \begin{bmatrix}
		V& 	V\\
		V& 	-V
	\end{bmatrix}. \IEEEeqnarraynumspace
\end{IEEEeqnarray}}
To shorten notation, let $	\mathbf{a}=	\mathbf{T}-\mathbf{P}$ and $\mathbf{b}=	\mathbf{R}-\mathbf{P}$. We can then write \eqref{eq:(47)} as 
\begin{IEEEeqnarray}{lCl}
	\IEEEeqnarraymulticol{3}{l}{D(T\|R)} \nonumber \\ 
    &= & (\mathbf{a}-\mathbf{b})^{T} 	\bm{A}_{D,\mathbf{P}}(\mathbf{a}-\mathbf{b}) +O(\|\mathbf{T}-\mathbf{P}\|_{2}^{3}) +O(\|\mathbf{R}-\mathbf{P}\|_{2}^{3}) \notag \\
	&=&	(\mathbf{T}-\mathbf{R})^{T}\bm{A}_{D,\mathbf{P}}(\mathbf{T}-\mathbf{R}) \nonumber\\
    & & {} +O(\|\mathbf{T}-\mathbf{P}\|_{2}^{3}) +O(\|\mathbf{R}-\mathbf{P}\|_{2}^{3}). \label{eq:div_quad}
\end{IEEEeqnarray}

Now consider two independent type sequences $ \mathbf{\tp}_{X^{n}}$ and  $\mathbf{\tp}_{Y^{n}}$  under the null hypothesis $(P_1,P_2)=(P,P)$.  As $n \to \infty$,
\begin{eqnarray}
	\sqrt{n}\left(  \mathbf{\tp}_{X^{n}} -\mathbf{P} \right) &\xrightarrow{d} & \mathcal{N}(\bm{0}, \bm{\Sigma}^{-1}_{\mathbf{P}}) \\
	\sqrt{n}\left(  \mathbf{\tp}_{Y^{n}} -\mathbf{P} \right) &\xrightarrow{d}& \mathcal{N}(\bm{0}, \bm{\Sigma}^{-1}_{\mathbf{P}}) 	
	\label{eq:typecd}
\end{eqnarray}
which implies that
\begin{equation}
	\mathbf{V}_{n} \triangleq	\sqrt{n}\left(  \mathbf{\tp}_{X^{n}} -\mathbf{\tp}_{Y^{n}} \right) \xrightarrow{d} \mathcal{N}(\bm{0}, 2\bm{\Sigma}^{-1}_{\mathbf{P}}). \label{eq:(51)}
\end{equation}
Here, $\xrightarrow{d}$ denotes convergence in distribution, and $\mathcal{N}(\mu,\Sigma)$ is used to denote either a Gaussian random vector of mean $\mu$ and covariance matrix $\Sigma$ or its distribution. From \eqref{eq:div_quad}, it follows that
\begin{IEEEeqnarray}{lCl}
	D( \tp_{X^{n}}\| \tp_{Y^{n}})   
	&= & \frac{1}{n} \mathbf{V}_{n}^{\mathsf{T}}  \bm{A}_{D,\mathbf{P}} \mathbf{V}_{n}+o_{P}\left( \frac{1}{n}\right) \label{eq:divstat}
\end{IEEEeqnarray}
where we say that a sequence of random variables $\{X_n\}$ is $X_n = o_p(a_n)$ if for every $\varepsilon>0$
\begin{equation}
\lim_{n\to\infty} P\left(\left|\frac{X_n}{a_n}\right| > \varepsilon\right) = 0.
\end{equation}
It then follows from \eqref{eq:(51)} and \cite[Prop.~6.3.4]{BDB91} that
\begin{IEEEeqnarray}{lCl}
	nD( \tp_{X^{n}}\| \tp_{Y^{n}})   
	&= & \mathbf{V}_{n}^{\mathsf{T}}  \bm{A}_{D,\mathbf{P}} \mathbf{V}_{n} \xrightarrow{d}  \mathbf{V}^{\mathsf{T}}  \bm{A}_{D,\mathbf{P}} \mathbf{V}
\end{IEEEeqnarray}
where $\mathbf{V} \sim \mathcal{N}(\bm{0}, 2\bm{\Sigma}^{-1}_{\mathbf{P}})$. 
It can be shown that \begin{IEEEeqnarray}{lCl}
	\mathbf{V}^{\mathsf{T}}  \bm{A}_{D,\mathbf{P}} \mathbf{V}=2 \sum_{i=1}^{k-1} \lambda_{i} U_{i}^{2}
\end{IEEEeqnarray}
where $U_{1}, \ldots, U_{k-1}$ are i.i.d.  standard normal random variables, and $\lambda_{i}$ are the eigenvalues of $\bm{\Sigma}_{\mathbf{P}}^{-1/2}  \bm{A}_{D,\mathbf{P}} \bm{\Sigma}_{\mathbf{P}}^{-1/2}$. Thus, the test statistic $	\frac{n}{2}D( \tp_{X^{n}}\| \tp_{Y^{n}})   $ converges in distribution to the generalized chi-square distribution  $\chi^{2}_{\bm{\lambda},k-1}$  with vector parameter 
$\bm{\lambda}=(\lambda_{1}, \ldots, \lambda_{k-1})^{\mathsf{T}}$ and $k-1$ degrees of freedom. By \cite[Th.~7.6.2]{RC199}, this convergence is, in fact, uniform. Consequently, for all $c>0$,
\begin{equation}
	P^{n}\left(\frac{n}{2}		D( \tp_{X^{n}}\| \tp_{Y^{n}})  \geq c\right) =  \mathsf{Q}_{\chi^{2}_{\bm{\lambda},k-1}}(c) + O(\delta_{n}) \label{eq:ratediv2d}
\end{equation}
for some positive sequence  $\{\delta_{n}\}$ that is independent of $c$ and vanishes as $n\to\infty$. 

\section{Proof of Theorem~\ref{Thm:unknownPdivtest}}
\label{sec:proof}

From \eqref{eq:ratediv_inv}, it follows that there exist $M_{0}>0$ and $N_{0} \in \mathbb{N}$ such that, for any $P \in \mathcal{P}(\mathcal{Z})$,
\begin{equation}
	\left| P^{n} \left( \frac{n}{2} D(\tp_{X^{n}} \| \tp_{Y^{n}} )\geq  c \right) - \mathsf{Q}_{\chi^{2}_{k-1}}(c/\eta) \right|  \leq M_{0} \delta_{n}  \label{eq:robratekl1a} 
\end{equation}
for $n \geq N_{0}$, and $\{\delta_{n}\}$ as defined in \eqref{eq:ratediv2d1c}. Since the type-II error is monotonically increasing in the threshold, the threshold value $r_n$ that minimizes $\beta_n(\textsf{T}_n^{D}(r_{n}))$ is the smallest value $r_{n}$ for which $\alpha_{n}(\mc_{n}^{D}(r_n)) \leq \epsilon$, which we denote as $r^{D}_{n, \epsilon}$. Mathematically, we define, for $0 <\epsilon <1$ and $n \in \mathbb{N}$,
\begin{subequations}
	\begin{IEEEeqnarray}{rCl}
		\mathcal{R}_{n, \epsilon}^{D}  &\triangleq & \left\lbrace  r > 0 \colon P^{n} \left( D(\tp_{X^n} \|\tp_{Y^{n}}))  \geq r \right)  \leq \epsilon \right\rbrace \label{eq:rnsetrb} \\
		r^{D}_{n, \epsilon} & \triangleq  & \inf \mathcal{R}_{n,\epsilon}^{D}. \label{eq:rnrbdi}
	\end{IEEEeqnarray}
    \end{subequations}
    By definition, if $r_n < r^{D}_{n, \epsilon}$, then the type-I error exceeds $\epsilon$. We can thus assume without loss of optimality that $r_n \geq r^{D}_{n, \epsilon}$. 
    
	It can be shown along the lines of the proof of \cite[Lemma~6]{HJT25} that
	\begin{equation}
		r^{D}_{n, \epsilon} = \frac{2\eta }{n} \mathsf{Q}^{-1}_{\chi^{2}_{k-1}} ( \epsilon)+  O \left( \frac{\delta_{n}}{n}  \right). \label{eq:sec6rbdi}
	\end{equation}
    If there is a threshold value $r_n\in\mathcal{R}_{n, \epsilon}^{D}$ that attains the infimum in \eqref{eq:rnrbdi}, then $\alpha_{n}(\mc_{n}^{D}(r^{D}_{n, \epsilon})) \leq \epsilon$ and it suffices to directly analyze $\beta_n(\textsf{T}_n^{D}(r^D_{n,\epsilon}))$. In general,
    \begin{equation}
        \bar{r}^{D}_{n, \epsilon} \triangleq r^{D}_{n, \epsilon} + \frac{\delta_n}{n}, \quad n\in\mathbb{N} \label{eq:rbrvalue1}
    \end{equation}
    will lie in $\mathcal{R}_{n, \epsilon}^{D}$ and therefore satisfy $\alpha_{n}(\mc_{n}^{D}(\bar{r}^{D}_{n, \epsilon})) \leq \epsilon$. It then follows from the monotonicity of $\beta_n(\textsf{T}_n^{D}(r_{n}))$ in $r_n$ that
    \begin{IEEEeqnarray}{lCl}
        \beta_n(\textsf{T}_n^{D}(r^{D}_{n, \epsilon})) & \leq & \inf_{r_n\colon \alpha_n(\mc_{n}^{D}(r_n))\leq \epsilon}  \beta_n(\textsf{T}_n^{D}(r_n)) \nonumber\\
    & \leq & \beta_n(\textsf{T}_n^{D}(\bar{r}^{D}_{n, \epsilon})). \label{eq:(52)}
    \end{IEEEeqnarray}
    In Subsection~\ref{sub:UB}, we derive an upper bound on $\beta_n(\textsf{T}_n^{D}(\bar{r}^{D}_{n, \epsilon}))$; in Subsections~\ref{sub:LB}, we derive a lower bound on $\beta_n(\textsf{T}_n^{D}(r^{D}_{n, \epsilon}))$. Since both bounds have the same second-order asymptotic behavior, Theorem~\ref{Thm:unknownPdivtest} follows.

\subsection{Upper Bound on $\beta_n$}
\label{sub:UB}
Using the method of types \cite[Th.~11.1.4]{ATCB}, the type-II error for the test $\textsf{T}_n^{D}(\bar{r}^{D}_{n, \epsilon})$ can be bounded as
\begin{IEEEeqnarray}{lCl}
	\IEEEeqnarraymulticol{3}{l}{\beta_n(\textsf{T}_n^{D}(\bar{r}^{D}_{n, \epsilon}))} \nonumber\\
    \quad &= & \sum_{(P',P'') \in 	\mathcal{B}_{D}(\bar{r}^{D}_{n, \epsilon}) \cap (\mathcal{P}_n \times \mathcal{P}_n)} P_1^n(\mathcal{T}(P')) P_2^n(\mathcal{T}(P'')) \notag \\
	&\leq & \sum_{(P',P'') \in 	\mathcal{B}_{D}(\bar{r}^{D}_{n, \epsilon})\cap (\mathcal{P}_n \times \mathcal{P}_n)} e^{-n D_{\textnormal{KL}}(P'\| P_1) - n D_{\textnormal{KL}}(P'' \| P_2)}  \nonumber\\\label{eq:expupp}
\end{IEEEeqnarray}
where $\mathcal{B}_{D}$ was defined in \eqref{eq:B_D(r)} and $\mathcal{T}(\cdot)$ denotes the type class. 

We next lower-bound $D_{\text{KL}}(P' \| P_{1})+D_{\text{KL}}(P'' \| P_{2})$ over $(P', P'') \in \mathcal{B}_{D}(r^{D}_{n, \epsilon})$. To this end, we first note that $\bar{r}^{D}_{n, \epsilon}$ defined in \eqref{eq:rbrvalue1} satisfies $\bar{r}^{D}_{n, \epsilon}=\Theta(1/n)$, so for every $(P',P'') \in \mathcal{B}_{D}(\bar{r}^{D}_{n, \epsilon})$ we have that $D(P'\| P'') = O(1/n)$. it follows from \cite[Lemma~2]{HJT25} that $\|\mathbf{P}'-\mathbf{P}''\|_{2} = O(1/\sqrt{n})$, which implies that $\|P'-P''\|_{1}=O(1/\sqrt{n})$, where $\|\cdot\|_{1}$ is the $\ell_{1}$-norm.  This in turn yields that
\begin{equation}
	\|P'-\bar{P}\|_{1} +\|P''-\bar{P}\|_{1} =O(1/\sqrt{n}) \label{eq:orderl1}
\end{equation}
for $\bar{P} = \frac{1}{2}P' + \frac{1}{2} P''$.

Since $P^{*} \in \mathcal{P}(\mathcal{Z})$, we can find a $\delta>0$ (independent of $n$) such that the $\ell_1$-ball
\begin{IEEEeqnarray}{lCl}
\IEEEeqnarraymulticol{3}{l}{\mathcal{B}_{\ell_1,P^*}(\delta)} \nonumber\\
&\triangleq & \{(T,R)\in\mathcal{P}^2(\mathcal{Z}))\colon  \|T-P^*\|_1 +\|R-P^*\|_1 \leq \delta\} \IEEEeqnarraynumspace\label{eq:BL1}
\end{IEEEeqnarray}
is contained in $\mathcal{P}^2(\mathcal{Z})$. To lower-bound $D_{\text{KL}}(P'\| P_{1})+D_{\text{KL}}(P'' \| P_{2})$ over $(P',P'')\in\mathcal{B}_{D}(\bar{r}^{D}_{n, \epsilon}) $, we distinguish between the cases  $(\bar{P}, \bar{P}) \notin\mathcal{B}_{\ell_1,P^*}(\delta)$ and $(\bar{P}, \bar{P}) \in \mathcal{B}_{\ell_1,P^*}(\delta)$. In the former case, we note that, for every $\xi>0$, we can find a sufficiently large $n_{\xi}$ such that, for $n\geq n_{\xi}$,
\begin{IEEEeqnarray}{lCl}
    \IEEEeqnarraymulticol{3}{l}{\inf_{(P', P'')\in \mathcal{B}_{D}(\bar{r}^{D}_{n, \epsilon})} \{D_{\text{KL}}(P'\| P_{1})+D_{\text{KL}}(P'' \| P_{2})\}} \notag \\
	& \geq & \inf_{(P', P'')\in \mathcal{B}_{D}(\bar{r}^{D}_{n, \epsilon})} \{D_{\text{KL}}(\bar{P}\| P_{1})+D_{\text{KL}}(\bar{P} \| P_{2})\} - \xi \IEEEeqnarraynumspace
\end{IEEEeqnarray}
which follows from \eqref{eq:orderl1} and the continuity of  KL divergence. Recall that $(\bar{P},\bar{P})$ is in $\mathcal{C}$. So if $(\bar{P},\bar{P}) \notin \mathcal{B}_{\ell_1,P^*}(\delta)$, then
\begin{IEEEeqnarray}{lCl}
    \IEEEeqnarraymulticol{3}{l}{\inf_{(P', P'')\in \mathcal{B}_{D}(\bar{r}^{D}_{n, \epsilon})} \{D_{\text{KL}}(P'\| P_{1})+D_{\text{KL}}(P'' \| P_{2})\}} \notag \\
    \quad & >  & D_{\text{KL}}(P^{*}\| P_{1})+D_{\text{KL}}(P^{*} \| P_{2}) \label{eq:(66)}
\end{IEEEeqnarray}
since $P^*$ is the unique minimizer of $ D_{\textnormal{KL}}(P  \|  P_{1})+D_{\textnormal{KL}}(P  \|  P_{2})$. We can thus find a sufficiently small $\xi>0$ such that
since $P^*$ is the unique minimizer of $ D_{\textnormal{KL}}(P  \|  P_{1})+D_{\textnormal{KL}}(P  \|  P_{2})$. We can thus find a sufficiently small $\xi>0$ such that
\begin{IEEEeqnarray}{lCl}
	\IEEEeqnarraymulticol{3}{l}{\inf_{(P', P'')\in \mathcal{B}_{D}(\bar{r}^{D}_{n, \epsilon})} \{D_{\text{KL}}(P'\| P_{1})+D_{\text{KL}}(P'' \| P_{2})\}}\nonumber\\
	\quad & \geq & D_{\text{KL}}(P^{*}\| P_{1})+D_{\text{KL}}(P^{*} \| P_{2}) \notag \\
	&& {} - \sqrt{\frac{V_{\text{KL}}(P^{*}\|P_1) + V_{\text{KL}}(P^{*}\|P_2) }{n}} \sqrt{\mathsf{Q}^{-1}_{\chi^{2}_{k-1}}(\epsilon)}. \IEEEeqnarraynumspace\label{eq:ROB_P'_neq_P*}
\end{IEEEeqnarray}

To treat the case where  $(\bar{P},\bar{P}) \in \mathcal{B}_{\ell_1,P^*}(\delta)$, we perform a Taylor-series approximation of $D_{\text{KL}}(P'\| P_{1})+D_{\text{KL}}(P'' \| P_{2})$ around $(\bar{P},\bar{P})$ and minimize the approximation over $(P',P'') \in \mathcal{B}_{D}(\bar{r}^{D}_{n, \epsilon})$. Indeed, following along the lines of the proof of \cite[Lemma~4]{HJT25}, it can be shown that 
        there exist $M_{1}>0$, and $N_1$ such that
		\begin{IEEEeqnarray}{lCl}
			\IEEEeqnarraymulticol{3}{l}{D_{\text{KL}}(P' \| P_1) + D_{\text{KL}}(P''\| P_2) }\notag \\
			\quad & \geq & D_{\text{KL}}(\bar{P}\| P_1) + D_{\text{KL}}(\bar{P}\| P_2) +\ell_{\bar{P}}(P',P'')  - \frac{M_{1}}{n^{3/2}} \IEEEeqnarraynumspace\label{eq:expo_lb_p1p2}
		\end{IEEEeqnarray}
		for $n \geq N_1$ and $(P',P'')\in \mathcal{B}_{D}(\bar{r}^{D}_{n, \epsilon})$, where
		\begin{eqnarray}
			\ell_{\bar{P}}(T,R) \triangleq 
			\sum_{i=1}^k (T_i - \bar{P}_{i}) \ln \frac{\bar{P}_{i}}{P_{1i}}  + \sum_{i=1}^k (R_i - \bar{P}_{i}) \ln \frac{\bar{P}_{i}}{P_{2i}} \label{eq:minfun1a}
		\end{eqnarray}
        for $T,R, \bar{P} \in \mathcal{P}(\mathcal{Z})$. To minimize \eqref{eq:expo_lb_p1p2} over $(P',P'') \in \mathcal{B}_{D}(\bar{r}^{D}_{n, \epsilon})$, we need the following lemma.
		\begin{lemma}\label{Thm:div_ball}
			 There exist constants $M'>0$ and  $N' \in \mathbb{N}$ such that, for all $n \geq N'$, the pair $(P',P'')\in\mathcal{B}_{D}(\bar{r}^{D}_{n, \epsilon})$ lies in the set $\mathcal{A}_{\bm{\Sigma}_{\bar{\mathbf{P}}}}(r_{n}')$, where $\mathcal{A}_{\bm{\Sigma}_{\bar{\mathbf{P}}}}$ was defined in \eqref{eq:A_Sigma} and
			\begin{equation}
				r_{n}'\triangleq \frac{\bar{r}^{D}_{n, \epsilon}}{2\eta}+\frac{M'}{n^{3/2}}. \label{eq:rnbar}
			\end{equation}
		\end{lemma}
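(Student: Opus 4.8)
The plan is to Taylor-expand the divergence $D(P'\|P'')$ about the reference point $(\bar{\mathbf{P}},\bar{\mathbf{P}})$, where $\bar{P}=\frac12 P'+\frac12 P''$, instead of about the (unknown) common distribution, and then to cash in the invariance of $D$ in the form $\bm{A}_{D,\bar{\mathbf{P}}}=\eta\bm{\Sigma}_{\bar{\mathbf{P}}}$. The reference point in the computation that produced \eqref{eq:div_quad} may be taken to be any element of $\mathcal{P}(\mathcal{Z})$; here it is legitimate to take it to be $\bar{\mathbf{P}}$ because, in the regime in which this lemma is applied, $(\bar{P},\bar{P})\in\mathcal{B}_{\ell_1,P^*}(\delta)\subset\mathcal{P}^2(\mathcal{Z})$, so $\bar{P}\in\mathcal{P}(\mathcal{Z})$. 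With $T=P'$, $R=P''$, \eqref{eq:div_quad} together with invariance gives
\[
D(P'\|P'')=\eta\,(\mathbf{P}'-\mathbf{P}'')^{T}\bm{\Sigma}_{\bar{\mathbf{P}}}(\mathbf{P}'-\mathbf{P}'')+O(\|\mathbf{P}'-\bar{\mathbf{P}}\|_{2}^{3})+O(\|\mathbf{P}''-\bar{\mathbf{P}}\|_{2}^{3}).
\]

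Next I would use the elementary identity $\mathbf{P}'-\bar{\mathbf{P}}=\tfrac12(\mathbf{P}'-\mathbf{P}'')=-(\mathbf{P}''-\bar{\mathbf{P}})$, which yields
\[
(\mathbf{P}'-\bar{\mathbf{P}})^{T}\bm{\Sigma}_{\bar{\mathbf{P}}}(\mathbf{P}'-\bar{\mathbf{P}})+(\mathbf{P}''-\bar{\mathbf{P}})^{T}\bm{\Sigma}_{\bar{\mathbf{P}}}(\mathbf{P}''-\bar{\mathbf{P}})=\tfrac12(\mathbf{P}'-\mathbf{P}'')^{T}\bm{\Sigma}_{\bar{\mathbf{P}}}(\mathbf{P}'-\mathbf{P}'').
\]
Writing $Q_{n}$ for the left-hand side — exactly the quantity whose sublevel set defines $\mathcal{A}_{\bm{\Sigma}_{\bar{\mathbf{P}}}}$ — the previous display reads $D(P'\|P'')=2\eta Q_{n}+O(\|\mathbf{P}'-\bar{\mathbf{P}}\|_{2}^{3})+O(\|\mathbf{P}''-\bar{\mathbf{P}}\|_{2}^{3})$. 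Since $(P',P'')\in\mathcal{B}_{D}(\bar{r}^{D}_{n,\epsilon})$ with $\bar{r}^{D}_{n,\epsilon}=\Theta(1/n)$, the bound $\|\mathbf{P}'-\mathbf{P}''\|_{2}=O(1/\sqrt{n})$ from \cite[Lemma~2]{HJT25} forces $\|\mathbf{P}'-\bar{\mathbf{P}}\|_{2},\|\mathbf{P}''-\bar{\mathbf{P}}\|_{2}=O(1/\sqrt{n})$, so the remainder is $O(1/n^{3/2})$. Because $\bar{\mathbf{P}}$ ranges over a fixed compact subset of $\Xi$ (as $(\bar{P},\bar{P})\in\mathcal{B}_{\ell_1,P^*}(\delta)$), on which the third-order partials of $D$ are bounded, this $O(1/n^{3/2})$ is uniform: there exist $C>0$ and $N'\in\mathbb{N}$ with $D(P'\|P'')\geq 2\eta Q_{n}-Cn^{-3/2}$ for all $n\geq N'$ and all admissible $(P',P'')$. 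Combining with $D(P'\|P'')<\bar{r}^{D}_{n,\epsilon}$ and rearranging gives $Q_{n}<\bar{r}^{D}_{n,\epsilon}/(2\eta)+\bigl(C/(2\eta)\bigr)n^{-3/2}$, i.e.\ $(P',P'')\in\mathcal{A}_{\bm{\Sigma}_{\bar{\mathbf{P}}}}(r_{n}')$ with $M'=C/(2\eta)$.

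I expect the one genuine obstacle to be the uniformity of the cubic Taylor remainder: one must dominate the two $O(\|\cdot\|_{2}^{3})$ terms by a single constant valid simultaneously for every $(P',P'')\in\mathcal{B}_{D}(\bar{r}^{D}_{n,\epsilon})$ and every large $n$. This requires (i) the length estimate $\|\mathbf{P}'-\bar{\mathbf{P}}\|_{2},\|\mathbf{P}''-\bar{\mathbf{P}}\|_{2}=O(1/\sqrt{n})$ to hold uniformly over the admissible pairs, which is what \cite[Lemma~2]{HJT25} provides, and (ii) a uniform bound on the Lagrange form of the remainder of the Taylor expansion of $D$, which follows from the smoothness of $D$ together with the compactness of the set in which $\bar{\mathbf{P}}$ lives; everything else is the linear-algebra identity above and a single rearrangement.
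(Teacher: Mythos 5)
Your proposal is correct and follows essentially the same route as the paper's proof in Appendix~D: a second-order Taylor expansion of $D(P'\|P'')$ about $(\bar{P},\bar{P})$ with $\bm{A}_{D,\bar{\mathbf{P}}}=\eta\bm{\Sigma}_{\bar{\mathbf{P}}}$, the identity $(\mathbf{P}'-\mathbf{P}'')^{T}\bm{\Sigma}_{\bar{\mathbf{P}}}(\mathbf{P}'-\mathbf{P}'')=2Q_n$, the $O(1/\sqrt{n})$ length estimate from \cite[Lemma~2]{HJT25} to control the cubic remainder, and the final rearrangement yielding $M'=\bar{M}/(2\eta)$. Your explicit attention to the uniformity of the remainder (via compactness of the region in which $\bar{P}$ lives) is a point the paper's proof asserts without elaboration, so this is a welcome refinement rather than a deviation.
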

		\begin{IEEEproof}
			See Appendix~\ref{sec:append_div_ball}. 
		\end{IEEEproof}
        It follows from Lemma~\ref{Thm:div_ball} that  we can lower-bound $\ell_{\bar{P}}(P',P'')$ for $(P',P'')\in\mathcal{B}_{D}(\bar{r}^{D}_{n, \epsilon})$ as
        \begin{equation}
            \ell_{\bar{P}}(P',P'') \geq \min_{(T,R)\in \mathcal{A}_{\bm{\Sigma}_{\bar{\mathbf{P}}}}(r_{n}')} \ell_{\bar{P}}(T,R) \triangleq \ell^{*}_{\bar{P}}(r_{n}').
        \end{equation}
       This minimum is evaluated in the following lemma:
		\begin{lemma}\label{Thm:min_quad}
			For $(P',P'') \in \mathcal{B}_{D}(r_{n}')$, $\bar{P}=\frac{1}{2}P' + \frac{1}{2}P''$, and $r_{n}'$ defined in \eqref{eq:rnbar}, the minimum value $\ell^{*}_{\bar{P}}(r_{n}')$ is given by
			\begin{equation}
				\ell^{*}_{\bar{P}}(r_{n}') =
				-\sqrt{r_{n}'}\sqrt{	V_{\text{KL}}(\bar{P}\|P_1) + V_{\text{KL}}(\bar{P}\|P_2)}.
				\label{eq:barrvalue}
			\end{equation} 
		\end{lemma}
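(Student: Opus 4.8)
The plan is to identify $\ell^{*}_{\bar{P}}(r_{n}')$ as the minimum of a linear functional over an ellipsoid, and then evaluate the resulting closed form. First I would rewrite the linear function $\ell_{\bar{P}}(T,R)$ from \eqref{eq:minfun1a} in terms of the reduced coordinate vectors $\mathbf{T}-\bar{\mathbf{P}},\,\mathbf{R}-\bar{\mathbf{P}}\in\mathbb{R}^{k-1}$. Substituting the identities $T_{k}-\bar{P}_{k}=-\sum_{i=1}^{k-1}(T_{i}-\bar{P}_{i})$ and $R_{k}-\bar{P}_{k}=-\sum_{i=1}^{k-1}(R_{i}-\bar{P}_{i})$ into \eqref{eq:minfun1a} yields
\begin{equation}
\ell_{\bar{P}}(T,R)=\mathbf{u}^{\mathsf{T}}(\mathbf{T}-\bar{\mathbf{P}})+\mathbf{v}^{\mathsf{T}}(\mathbf{R}-\bar{\mathbf{P}}),
\end{equation}
where $u_{i}\triangleq\ln\frac{\bar{P}_{i}}{P_{1i}}-\ln\frac{\bar{P}_{k}}{P_{1k}}$ and $v_{i}\triangleq\ln\frac{\bar{P}_{i}}{P_{2i}}-\ln\frac{\bar{P}_{k}}{P_{2k}}$ for $i=1,\ldots,k-1$. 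The set $\mathcal{A}_{\bm{\Sigma}_{\bar{\mathbf{P}}}}(r_{n}')$ from \eqref{eq:A_Sigma} is, up to the restriction to $\Xi\times\Xi$, the ellipsoid $\{(\mathbf{a},\mathbf{b})\colon\mathbf{a}^{\mathsf{T}}\bm{\Sigma}_{\bar{\mathbf{P}}}\mathbf{a}+\mathbf{b}^{\mathsf{T}}\bm{\Sigma}_{\bar{\mathbf{P}}}\mathbf{b}\le r_{n}'\}$ centered at $(\bar{\mathbf{P}},\bar{\mathbf{P}})$; for $n$ large this restriction is inactive, since $r_{n}'=\Theta(1/n)$ by \eqref{eq:rnbar}, while in the regime in which the lemma is applied $\bar{P}$ stays in the compact set $\mathcal{B}_{\ell_{1},P^{*}}(\delta)\subset\mathcal{P}^{2}(\mathcal{Z})$ of \eqref{eq:BL1}, bounded away from the boundary of the simplex.

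Second, I would solve the constrained minimization. Stacking $\mathbf{w}\triangleq(\mathbf{a}^{\mathsf{T}},\mathbf{b}^{\mathsf{T}})^{\mathsf{T}}$ and $\mathbf{c}\triangleq(\mathbf{u}^{\mathsf{T}},\mathbf{v}^{\mathsf{T}})^{\mathsf{T}}$, the problem becomes $\min\{\mathbf{c}^{\mathsf{T}}\mathbf{w}\colon\mathbf{w}^{\mathsf{T}}\mathbf{N}\mathbf{w}\le r_{n}'\}$, where $\mathbf{N}$ is the block-diagonal matrix with both diagonal blocks equal to $\bm{\Sigma}_{\bar{\mathbf{P}}}$. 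After the substitution $\mathbf{y}=\mathbf{N}^{1/2}\mathbf{w}$, this is the minimum of $\langle\mathbf{N}^{-1/2}\mathbf{c},\mathbf{y}\rangle$ over $\|\mathbf{y}\|_{2}^{2}\le r_{n}'$, whose value is $-\sqrt{r_{n}'}\,\sqrt{\mathbf{c}^{\mathsf{T}}\mathbf{N}^{-1}\mathbf{c}}$ (attained at a boundary point when $\mathbf{c}\neq\mathbf{0}$, which holds under $H_1$ since $\bar P$ cannot equal both $P_1$ and $P_2$). As $\mathbf{N}^{-1}$ is block-diagonal with both blocks $\bm{\Sigma}_{\bar{\mathbf{P}}}^{-1}$, this equals $-\sqrt{r_{n}'}\sqrt{\mathbf{u}^{\mathsf{T}}\bm{\Sigma}_{\bar{\mathbf{P}}}^{-1}\mathbf{u}+\mathbf{v}^{\mathsf{T}}\bm{\Sigma}_{\bar{\mathbf{P}}}^{-1}\mathbf{v}}$, so it remains to show $\mathbf{u}^{\mathsf{T}}\bm{\Sigma}_{\bar{\mathbf{P}}}^{-1}\mathbf{u}=V_{\textnormal{KL}}(\bar{P}\|P_{1})$ and, symmetrically, $\mathbf{v}^{\mathsf{T}}\bm{\Sigma}_{\bar{\mathbf{P}}}^{-1}\mathbf{v}=V_{\textnormal{KL}}(\bar{P}\|P_{2})$.

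Third, I would carry out this identification. The matrix in \eqref{eq:covsigmain1} is the single-trial multinomial Fisher information in reduced coordinates, whose inverse is the multinomial covariance matrix $(\bm{\Sigma}_{\bar{\mathbf{P}}}^{-1})_{ij}=\bar{P}_{i}\delta_{ij}-\bar{P}_{i}\bar{P}_{j}$ for $i,j=1,\ldots,k-1$, which is checked in a couple of lines by multiplying the two matrices (using $\sum_{j=1}^{k-1}\bm{\Sigma}_{ij}(\bar{\mathbf P})\bar{P}_{j}=1/\bar{P}_{k}$). Hence $\mathbf{u}^{\mathsf{T}}\bm{\Sigma}_{\bar{\mathbf{P}}}^{-1}\mathbf{u}=\sum_{i=1}^{k-1}\bar{P}_{i}u_{i}^{2}-\bigl(\sum_{i=1}^{k-1}\bar{P}_{i}u_{i}\bigr)^{2}$. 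Setting $g_{i}\triangleq\ln\frac{\bar{P}_{i}}{P_{1i}}$, so that $u_{i}=g_{i}-g_{k}$ and $\sum_{i=1}^{k}\bar{P}_{i}g_{i}=D_{\textnormal{KL}}(\bar{P}\|P_{1})$, a short computation shows that all terms containing $g_{k}$ cancel, leaving $\sum_{i=1}^{k}\bar{P}_{i}g_{i}^{2}-D_{\textnormal{KL}}(\bar{P}\|P_{1})^{2}=\sum_{i=1}^{k}\bar{P}_{i}\bigl(g_{i}-D_{\textnormal{KL}}(\bar{P}\|P_{1})\bigr)^{2}=V_{\textnormal{KL}}(\bar{P}\|P_{1})$; the same computation with $P_{2}$ in place of $P_{1}$ gives the companion identity. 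Combining the three steps yields \eqref{eq:barrvalue}. I do not expect a genuine obstacle: the two points deserving care are verifying that the ellipsoid $\mathcal{A}_{\bm{\Sigma}_{\bar{\mathbf{P}}}}(r_{n}')$ lies inside $\Xi\times\Xi$ for $n$ large, so that the minimum over the probability simplex coincides with the minimum over the full quadric, and the bookkeeping in the variance identity that makes the $g_{k}$-terms disappear.
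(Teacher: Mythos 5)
Your proposal is correct and follows essentially the same route as the paper's Appendix E: reduce $\ell_{\bar P}$ to a linear functional of the reduced coordinates with exactly the coefficient vectors $\mathbf{c},\mathbf{d}$ of \eqref{eq:vector_c}--\eqref{eq:vector_d}, minimize it over the ellipsoid to get $-\sqrt{r_n'}\sqrt{\mathbf{c}^{\mathsf{T}}\bm{\Sigma}_{\bar{\mathbf P}}^{-1}\mathbf{c}+\mathbf{d}^{\mathsf{T}}\bm{\Sigma}_{\bar{\mathbf P}}^{-1}\mathbf{d}}$, and identify the quadratic forms with the KL divergence variances. The only differences are cosmetic — you solve the ellipsoid problem by whitening rather than by the KKT conditions, and you verify the variance identity directly rather than citing \cite[Eq.~(122)]{HJT25} — and your remark about the simplex constraint being inactive for large $n$ is a point the paper leaves implicit.
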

		\begin{IEEEproof}
			See Appendix~\ref{sec:appendA_min_quad}.
		\end{IEEEproof}
		
        Applying Lemma~\ref{Thm:min_quad} to \eqref{eq:expo_lb_p1p2}, we thus obtain that
		\begin{IEEEeqnarray}{lCl}
		\IEEEeqnarraymulticol{3}{l}{D_{\text{KL}}(P' \| P_1) + D_{\text{KL}}(P''\| P_2)}\nonumber\\
			\quad	& \geq &    D_{\text{KL}}(\bar{P}\| P_1) + D_{\text{KL}}(\bar{P}\| P_2)   \notag \\
			& & {} -\sqrt{r_{n}'}\sqrt{	V_{\text{KL}}(\bar{P}\|P_1) + V_{\text{KL}}(\bar{P}\|P_2)} -  \frac{ M_{2} }{n^{3/2}}. \IEEEeqnarraynumspace\label{eq:lowerb}
	\end{IEEEeqnarray}
    As shown in Appendix~\ref{sec:append_min_seq}, the minimizing distribution 
		\begin{IEEEeqnarray}{lCl}
			\tilde{P}_{n} & \triangleq  & \arg \min_{P \in \bar{\mathcal{P}}(\mathcal{Z})} \Bigl\{ D_{\text{KL}}(P\| P_1) + D_{\text{KL}}(P\| P_2) \notag \\
			& & {}  -\sqrt{r_{n}'}\sqrt{	V_{\text{KL}}(P\|P_1) + V_{\text{KL}}(P\|P_2)}\Bigr\} 
			\label{eq:tylq4e}
		\end{IEEEeqnarray}
		exists and converges to $P^*$ as $n\to\infty$. Since $r_n'=O(1/n)$, it thus follows that
	\begin{IEEEeqnarray}{lCl}
		\IEEEeqnarraymulticol{3}{l}{\sqrt{r_{n}'}\sqrt{	V_{\text{KL}}(\tilde{P}_{n}\|P_1) + V_{\text{KL}}(\tilde{P}_{n}\|P_2)}} \notag \\
		\quad &= & \sqrt{r_{n}'}\sqrt{	V_{\text{KL}}(P^{*}\|P_1) + V_{\text{KL}}(P^{*}\|P_2)}+ o\left(\frac{1}{\sqrt{n}}\right). \IEEEeqnarraynumspace \label{eq:2nd_P*}
	\end{IEEEeqnarray}
    Minimizing \eqref{eq:lowerb} over $\bar{P}$, we then obtain that
    \begin{IEEEeqnarray}{lCl}
\IEEEeqnarraymulticol{3}{l}{D_{\text{KL}}(P' \| P_1) + D_{\text{KL}}(P''\| P_2)}\nonumber\\
& \geq &    D_{\text{KL}}(\tilde{P}_n\| P_1) + D_{\text{KL}}(\tilde{P}_n\| P_2)   \notag \\
			& & {} -\sqrt{r_{n}'}\sqrt{	V_{\text{KL}}(\tilde{P}_n\|P_1) + V_{\text{KL}}(\tilde{P}_n\|P_2)} -  \frac{ M_{2} }{n^{3/2}} \nonumber\\
            & \geq & D_{\text{KL}}(P^{*}\| P_1) + D_{\text{KL}}(P^{*}\| P_2)  \nonumber\\
            & & {} - \sqrt{r_{n}'}\sqrt{	V_{\text{KL}}(P^{*}\|P_1) + V_{\text{KL}}(P^{*}\|P_2)} + o\left(\frac{1}{\sqrt{n}}\right) \nonumber\\
            & \geq & D_{\text{KL}}(P^{*}\| P_1) + D_{\text{KL}}(P^{*}\| P_2)  \nonumber\\
            & & {} - \sqrt{	\frac{V_{\text{KL}}(P^{*}\|P_1) + V_{\text{KL}}(P^{*}\|P_2)}{n}} \sqrt{\mathsf{Q}^{-1}_{\chi^{2}_{k-1}}(\epsilon)} + o\left(\frac{1}{\sqrt{n}}\right) \nonumber\\\label{eq:tylq4cc}
    \end{IEEEeqnarray}
    where the second inequality follows because $P^*$ minimizes $D_{\text{KL}}(\tilde{P}_n\| P_1) + D_{\text{KL}}(\tilde{P}_n\| P_2)$ over $\tilde{P}_n$, and by applying \eqref{eq:2nd_P*}; the subsequent inequality follows by performing a Taylor-series expansion of $\sqrt{r_n'}$ around $\frac{1}{\sqrt{n}}\sqrt{\mathsf{Q}^{-1}_{\chi^{2}_{k-1}}( \epsilon)}$, which yields
    \begin{equation}
        \sqrt{r_{n}' }\leq \frac{1}{\sqrt{n}}\sqrt{\mathsf{Q}^{-1}_{\chi^{2}_{k-1}}( \epsilon)} + o\left(\frac{1}{\sqrt{n}}\right).
        \end{equation}

  Since $	D_{\text{KL}}(P^{*}\| P_1) + D_{\text{KL}}(P^{*}\| P_2)	=2D_{B}(P_{1},P_{2})$, we obtain from \eqref{eq:(52)}, \eqref{eq:expupp}, \eqref{eq:ROB_P'_neq_P*}, and \eqref{eq:tylq4cc}, and the fact that the number of types $|\mathcal{P}_n|$ is bounded by $(n+1)^{|\mathcal{Z}|}$ \cite[Th.~11.1.1]{ATCB}, that
	\begin{IEEEeqnarray}{lCl}
		\sup_{r_n\colon \alpha_n(\mc_{n}^{D}(r_n))\leq \epsilon} -\frac{\ln	\beta_n(\mc_{n}^{D}(r_n))}{n} & \geq &   2 D_{B}(P_{1},P_{2}) \nonumber\\
		 \IEEEeqnarraymulticol{3}{r}{{}-\sqrt{\frac{V_{\text{KL}}(P^{*}\|P_1) + V_{\text{KL}}(P^{*}\|P_2) }{n}} \sqrt{\mathsf{Q}^{-1}_{\chi^{2}_{k-1}}(\epsilon)}+o\biggl(\frac{1}{\sqrt{n}}\biggr)}. \nonumber\\
	\end{IEEEeqnarray}
    
	\subsection{Lower Bound on $\beta_n$}
    \label{sub:LB}
To lower-bound $\beta_n(\textsf{T}_n^{D}(r^{D}_{n, \epsilon}))$, we first restrict the set $\mathcal{B}_{D}(r^{D}_{n, \epsilon})$ as in the following lemma:
\begin{lemma} \label{Thm:div_ball_converse} There exist constants $M_3>0$ and $N_3\in\mathbb{N}$ such that, for all $n\geq N_3$,
\begin{equation}
    \mathcal{A}_{\bm{\Sigma}_{\mathbf{P}^*}}(\tilde{r}_{n}) \subseteq \mathcal{B}_{D}(r_{n,\epsilon}^{D}) \label{eq:klchiwrb1}
\end{equation}
where
\begin{equation}
\tilde{r}_n \triangleq \frac{r^{D}_{n, \epsilon}}{2\eta}-\frac{M_{3}}{ n^{3/2} }. \label{eq:tilde_rn}
\end{equation}
	\end{lemma}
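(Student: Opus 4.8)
The plan is to Taylor-expand the statistic $D(T\|R)$ about the diagonal point $(P^{*},P^{*})$ and then invoke invariance. Since $P_{1},P_{2}\in\mathcal{P}(\mathcal{Z})$, the distribution $P^{*}$ in \eqref{eq:min_Bhatt1_2} is strictly positive, so $(P^{*},P^{*})$ lies in the interior of $\mathcal{P}^{2}(\mathcal{Z})$ and the derivation that produced \eqref{eq:div_quad} applies verbatim with $P$ replaced by $P^{*}$. Together with the invariance identity $\bm{A}_{D,\mathbf{P}^{*}}=\eta\,\bm{\Sigma}_{\mathbf{P}^{*}}$ from Definition~\ref{invariancediv}, this yields
\begin{align*}
D(T\|R) &= \eta\,(\mathbf{T}-\mathbf{R})^{T}\bm{\Sigma}_{\mathbf{P}^{*}}(\mathbf{T}-\mathbf{R})\\
&\quad{}+O\bigl(\|\mathbf{T}-\mathbf{P}^{*}\|_{2}^{3}\bigr)+O\bigl(\|\mathbf{R}-\mathbf{P}^{*}\|_{2}^{3}\bigr)
\end{align*}
as $\|\mathbf{T}-\mathbf{P}^{*}\|_{2},\|\mathbf{R}-\mathbf{P}^{*}\|_{2}\to0$, where the $O(\cdot)$-terms are uniform over a fixed compact neighbourhood of $(\mathbf{P}^{*},\mathbf{P}^{*})$ because $D$ is smooth (Taylor's theorem).

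Next I would estimate the perturbations on the ellipsoid $\mathcal{A}_{\bm{\Sigma}_{\mathbf{P}^{*}}}(\tilde{r}_{n})$, i.e.\ the set \eqref{eq:A_Sigma} with $\bar{\mathbf{P}}$ replaced by $\mathbf{P}^{*}$. By \eqref{eq:sec6rbdi}, $r^{D}_{n,\epsilon}=\Theta(1/n)$, hence $\tilde{r}_{n}=\Theta(1/n)$ and, crucially, $\tilde{r}_{n}\leq r^{D}_{n,\epsilon}/(2\eta)$ for every $M_{3}>0$. Since $\bm{\Sigma}_{\mathbf{P}^{*}}$ is positive definite, its smallest eigenvalue is a strictly positive constant, so every $(T,R)\in\mathcal{A}_{\bm{\Sigma}_{\mathbf{P}^{*}}}(\tilde{r}_{n})$ obeys $\|\mathbf{T}-\mathbf{P}^{*}\|_{2}^{2}+\|\mathbf{R}-\mathbf{P}^{*}\|_{2}^{2}=O(\tilde{r}_{n})=O(1/n)$, and therefore $\|\mathbf{T}-\mathbf{P}^{*}\|_{2}^{3}+\|\mathbf{R}-\mathbf{P}^{*}\|_{2}^{3}=O(1/n^{3/2})$. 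Consequently there exist $C>0$ — which is \emph{independent of} $M_{3}$, since the preceding $O$-bounds only use $\tilde{r}_{n}\leq r^{D}_{n,\epsilon}/(2\eta)$ — and $N\in\mathbb{N}$ such that the Taylor remainder above is at most $C/n^{3/2}$ for all $(T,R)\in\mathcal{A}_{\bm{\Sigma}_{\mathbf{P}^{*}}}(\tilde{r}_{n})$ and $n\geq N$. Moreover, writing $\mathbf{u}=\mathbf{T}-\mathbf{P}^{*}$ and $\mathbf{v}=\mathbf{R}-\mathbf{P}^{*}$, the parallelogram inequality for the inner product $(x,y)\mapsto x^{T}\bm{\Sigma}_{\mathbf{P}^{*}}y$ gives, on $\mathcal{A}_{\bm{\Sigma}_{\mathbf{P}^{*}}}(\tilde{r}_{n})$,
\begin{align*}
(\mathbf{T}-\mathbf{R})^{T}\bm{\Sigma}_{\mathbf{P}^{*}}(\mathbf{T}-\mathbf{R}) &= (\mathbf{u}-\mathbf{v})^{T}\bm{\Sigma}_{\mathbf{P}^{*}}(\mathbf{u}-\mathbf{v})\\
&\leq 2\,\mathbf{u}^{T}\bm{\Sigma}_{\mathbf{P}^{*}}\mathbf{u}+2\,\mathbf{v}^{T}\bm{\Sigma}_{\mathbf{P}^{*}}\mathbf{v}\leq 2\tilde{r}_{n}.
\end{align*}

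Combining the last three displays, for $n\geq N$ and $(T,R)\in\mathcal{A}_{\bm{\Sigma}_{\mathbf{P}^{*}}}(\tilde{r}_{n})$,
\[
D(T\|R)\leq 2\eta\tilde{r}_{n}+\frac{C}{n^{3/2}}=r^{D}_{n,\epsilon}-\frac{2\eta M_{3}}{n^{3/2}}+\frac{C}{n^{3/2}},
\]
where the last equality is just the definition \eqref{eq:tilde_rn} of $\tilde{r}_{n}$. Choosing $M_{3}>C/(2\eta)$ forces the right-hand side strictly below $r^{D}_{n,\epsilon}$, so $(T,R)\in\mathcal{B}_{D}(r^{D}_{n,\epsilon})$; picking $N_{3}\geq N$ large enough that also $\tilde{r}_{n}>0$ then establishes \eqref{eq:klchiwrb1} for all $n\geq N_{3}$.

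I do not expect a deep obstacle. The one point that needs care is the uniformity of the cubic Taylor remainder over the \emph{shrinking} ellipsoid $\mathcal{A}_{\bm{\Sigma}_{\mathbf{P}^{*}}}(\tilde{r}_{n})$: this is handled by applying Taylor's theorem once on a single fixed compact neighbourhood of $(\mathbf{P}^{*},\mathbf{P}^{*})$ and by using that the positive definiteness of $\bm{\Sigma}_{\mathbf{P}^{*}}$ forces the perturbations to be $O(1/\sqrt{n})$ on that ellipsoid, together with the bookkeeping that keeps $C$ free of $M_{3}$ so that $M_{3}$ may be fixed afterwards. It is worth noting that the factor $2$ lost in the parallelogram step is exactly what the factor $1/(2\eta)$ in \eqref{eq:tilde_rn} is designed to absorb, and that this factor is tight (attained when $\mathbf{u}=-\mathbf{v}$), which is why the radius must be taken slightly below $r^{D}_{n,\epsilon}/(2\eta)$ rather than equal to it.
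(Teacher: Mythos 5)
Your proposal is correct and follows essentially the same route as the paper's proof: a Taylor expansion of $D(T\|R)$ around $(P^{*},P^{*})$ combined with invariance, the Cauchy--Schwarz/parallelogram bound $(\mathbf{u}-\mathbf{v})^{T}\bm{\Sigma}_{\mathbf{P}^{*}}(\mathbf{u}-\mathbf{v})\leq 2\mathbf{u}^{T}\bm{\Sigma}_{\mathbf{P}^{*}}\mathbf{u}+2\mathbf{v}^{T}\bm{\Sigma}_{\mathbf{P}^{*}}\mathbf{v}$, and the choice of $M_{3}$ to absorb the cubic remainder. Your explicit bookkeeping that the remainder constant is independent of $M_{3}$, and the strict choice $M_{3}>C/(2\eta)$ to respect the strict inequality in the definition of $\mathcal{B}_{D}(r)$, are in fact slightly more careful than the paper's version.
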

	\begin{IEEEproof}
		See Appendix~\ref{sec:append_div_ball_converse}. 
	\end{IEEEproof}
It then follows from the method of types \cite[Th.~11.1.4]{ATCB} that
\begin{IEEEeqnarray}{lCl}
\IEEEeqnarraymulticol{3}{l}{\beta_n(\textsf{T}_n^{D}(r^{D}_{n, \epsilon}))} \nonumber\\
\quad & = & \sum_{(P',P'') \in 	\mathcal{B}_{D}(r^{D}_{n, \epsilon}) \cap (\mathcal{P}_n \times \mathcal{P}_n)} P_1^n(\mathcal{T}(P')) P_2^n(\mathcal{T}(P'')) \nonumber\\
& \geq & \frac{1}{(n+1)^{2|\mathcal{Z}|}} e^{-n D_{\textnormal{KL}}(\tp'_{n} \| P_{1}) - n D_{\textnormal{KL}}(\tp''_{n} \| P_{2})}  \label{eq:prtyclcrb1} 
\end{IEEEeqnarray}
for some type distributions $\tp'_{n}$ and $\tp''_{n}$ that lie in $\mathcal{A}_{\bm{\Sigma}_{\mathbf{P}^*}}(\tilde{r}_{n})$ and satisfy
\begin{equation}
    |n\ell_{P^{*}}(\Gamma',\Gamma'')- n \ell_{P^{*}}(\tp'_{n}, \tp''_{n})| \leq \kappa
\end{equation}
for some constant $\kappa > 0$, where $(\Gamma',\Gamma'')$ is the pair of distributions that minimizes $\ell_{P^{*}}(P',P'')$ over all pairs $(P',P'')$ in $\mathcal{A}_{\bm{\Sigma}_{\mathbf{P}^*}}(\tilde{r}_{n})$. (Recall that $\mathcal{A}_{\bm{\Sigma}_{\mathbf{P}^*}}$ is as in \eqref{eq:A_Sigma} but with $\bar{\mathbf{P}}$ replaced by $\mathbf{P}^*$.) The existence of such a pair $(\tp'_{n}, \tp''_{n}) $ is established in the following lemma:

	\begin{lemma}\label{Thm:type}  Let $(\Gamma',\Gamma'')$ be the pair of distributions that minimizes $\ell_{P^{*}}(P',P'')$  over all pairs $(P',P'')$ in $\mathcal{A}_{\bm{\Sigma}_{\mathbf{P}^*}}(\tilde{r}_{n})$. Then, there exists a constant $\tilde{N} \geq N_3$ and a pair of type distributions $(\tp'_{n},\tp''_{n})$ in $\mathcal{A}_{\bm{\Sigma}_{\mathbf{P}^*}}(\tilde{r}_{n})$ satisfying
		\begin{equation}
			|n \ell_{P^{*}}(\Gamma',\Gamma'') -n\ell_{P^{*}}(\tp'_{n},\tp''_{n})| \leq \kappa  \label{eq:hfunbound1}
		\end{equation}
		for $n\geq \tilde{N}$ and some constant $\kappa >0$.
	\end{lemma}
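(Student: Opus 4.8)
The plan is to exploit that $\ell_{P^{*}}$ is an affine function of the type coordinates with a fixed, finite $\ell_2$-Lipschitz constant $L$ (finite because $P^{*},P_1,P_2$ are strictly positive), whereas $\mathcal{A}_{\bm{\Sigma}_{\mathbf{P}^*}}(\tilde{r}_{n})$ is an ellipsoid of ``radius'' $\Theta(1/\sqrt{n})$, which is large compared with the $1/n$ spacing of the type lattice $\mathcal{P}_n$. Hence $\mathcal{A}_{\bm{\Sigma}_{\mathbf{P}^*}}(\tilde{r}_{n})$ contains many type pairs, and it suffices to find one lying within $\ell_2$-distance $O(1/n)$ of the minimizer $(\Gamma',\Gamma'')$.

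First I would record the geometry. From \eqref{eq:tilde_rn} and \eqref{eq:sec6rbdi}, $\tilde{r}_n=\frac{1}{n}\mathsf{Q}^{-1}_{\chi^{2}_{k-1}}(\epsilon)+O(\delta_n/n)+O(n^{-3/2})$, so there are constants $0<a_1\le a_2$ with $a_1/n\le\tilde{r}_n\le a_2/n$ for $n$ large, since $\mathsf{Q}^{-1}_{\chi^{2}_{k-1}}(\epsilon)>0$. Because $P^{*}\in\mathcal{P}(\mathcal{Z})$, the matrix $\bm{\Sigma}_{\mathbf{P}^*}$ of \eqref{eq:covsigmain1} is positive definite with smallest and largest eigenvalues $0<\mu_{\min}\le\mu_{\max}$. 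Consequently, for $n$ large, $\mathcal{A}_{\bm{\Sigma}_{\mathbf{P}^*}}(\tilde{r}_{n})$ is a compact ellipsoid centered at $(\mathbf{P}^*,\mathbf{P}^*)$ that is contained in the $\ell_2$-ball of radius $\sqrt{a_2/(\mu_{\min}n)}$ about $(\mathbf{P}^*,\mathbf{P}^*)$ and lies strictly inside $\mathcal{P}^2(\mathcal{Z})$; in particular, $\ell_{P^{*}}$ attains its minimum over this compact set at some $(\Gamma',\Gamma'')$, whose coordinate vector is at $\ell_2$-distance at most $\sqrt{a_2/(\mu_{\min}n)}$ from $(\mathbf{P}^*,\mathbf{P}^*)$.

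Next I would contract $(\Gamma',\Gamma'')$ toward the center and then round onto the lattice. Fix a constant $c>0$, put $t_n=c/\sqrt{n}$, and set $(\hat{\mathbf{T}}_n,\hat{\mathbf{R}}_n)=(\mathbf{P}^*,\mathbf{P}^*)+(1-t_n)\bigl((\mathbf{\Gamma}',\mathbf{\Gamma}'')-(\mathbf{P}^*,\mathbf{P}^*)\bigr)$, at which point the quadratic form defining $\mathcal{A}_{\bm{\Sigma}_{\mathbf{P}^*}}$ equals $(1-t_n)^2\tilde{r}_n$. A second-order expansion of that quadratic form shows that every point within $\ell_2$-distance $\rho$ of $(\hat{\mathbf{T}}_n,\hat{\mathbf{R}}_n)$ has quadratic-form value at most $(1-t_n)^2\tilde{r}_n+2\mu_{\max}\sqrt{a_2/(\mu_{\min}n)}\,\rho+\mu_{\max}\rho^2$. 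Taking $\rho=\sqrt{2k}/n$, the increment is $O(n^{-3/2})$, whereas the deficit $\tilde{r}_n-(1-t_n)^2\tilde{r}_n\ge t_n\tilde{r}_n\ge c\,a_1\,n^{-3/2}$; choosing $c$ large enough (depending only on $k,\mu_{\min},\mu_{\max},a_1,a_2$) therefore guarantees that the closed $\ell_2$-ball of radius $\sqrt{2k}/n$ about $(\hat{\mathbf{T}}_n,\hat{\mathbf{R}}_n)$ is contained in $\mathcal{A}_{\bm{\Sigma}_{\mathbf{P}^*}}(\tilde{r}_{n})$ for all $n\ge\tilde N_0$. Moreover $\|(\hat{\mathbf{T}}_n,\hat{\mathbf{R}}_n)-(\mathbf{\Gamma}',\mathbf{\Gamma}'')\|_2=t_n\,\|(\mathbf{\Gamma}',\mathbf{\Gamma}'')-(\mathbf{P}^*,\mathbf{P}^*)\|_2=O(1/n)$. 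A standard rounding argument then yields types $\tp'_n,\tp''_n\in\mathcal{P}_n$ with $\|\mathbf{\tp}'_n-\hat{\mathbf{T}}_n\|_2\le\sqrt{k}/n$ and $\|\mathbf{\tp}''_n-\hat{\mathbf{R}}_n\|_2\le\sqrt{k}/n$, so $\|(\mathbf{\tp}'_n,\mathbf{\tp}''_n)-(\hat{\mathbf{T}}_n,\hat{\mathbf{R}}_n)\|_2\le\sqrt{2k}/n$ and hence $(\tp'_n,\tp''_n)\in\mathcal{A}_{\bm{\Sigma}_{\mathbf{P}^*}}(\tilde{r}_{n})$ by the previous step (which in particular forces $\tp'_n,\tp''_n\in\mathcal{P}(\mathcal{Z})$ for $n$ large).

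Finally I would combine the two displacement bounds with the Lipschitz property of $\ell_{P^{*}}$:
\begin{equation}
|n\ell_{P^{*}}(\Gamma',\Gamma'')-n\ell_{P^{*}}(\tp'_n,\tp''_n)|\le nL\Bigl(\|(\mathbf{\Gamma}',\mathbf{\Gamma}'')-(\hat{\mathbf{T}}_n,\hat{\mathbf{R}}_n)\|_2+\|(\hat{\mathbf{T}}_n,\hat{\mathbf{R}}_n)-(\mathbf{\tp}'_n,\mathbf{\tp}''_n)\|_2\Bigr)\le nL\bigl(O(1/n)+\sqrt{2k}/n\bigr),
\end{equation}
which is $O(1)$; the lemma then follows with $\kappa$ the resulting constant and $\tilde N=\max(N_3,\tilde N_0)$. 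I expect the contraction step to be the main obstacle: the deficit gained by moving toward the center is only $\Theta(n^{-3/2})$, and the increase of the quadratic form caused by subsequently rounding onto the type lattice is also $\Theta(n^{-3/2})$, so one must verify that the former dominates, which forces $c$ to be chosen large (but fixed and independent of $n$). This is possible precisely because near the boundary the quadratic form is flat---its gradient has magnitude $\Theta(1/\sqrt{n})$---so that an $O(1/n)$ perturbation changes its value by only $O(n^{-3/2})$.
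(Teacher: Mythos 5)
Your proposal is correct and follows essentially the same route as the paper's proof: contract the minimizer of the linear functional $\ell_{P^*}$ toward the center of the ellipsoid, round onto the type lattice, check that the rounded pair still satisfies the quadratic constraint, and use the (affine) Lipschitz continuity of $\ell_{P^*}$ to bound the $O(1/n)$ total displacement, which after multiplication by $n$ gives the constant $\kappa$. The only difference is in the bookkeeping: the paper uses the explicit KKT form of $(\Gamma',\Gamma'')$ together with a sign-matched floor/ceiling rounding that makes the first-order cross term nonpositive, allowing a contraction of only $\Theta(1/n)$, whereas you absorb the $\Theta(n^{-3/2})$ first-order rounding term with a larger $\Theta(1/\sqrt{n})$ contraction---both yield the same conclusion.
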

	\begin{IEEEproof} See Appendix~\ref{sec:appendA_lemmatype}.
	\end{IEEEproof}
    
	Since $(\tp'_{n},\tp''_{n})\in\mathcal{A}_{\bm{\Sigma}_{\mathbf{P}^*}}(\tilde{r}_{n})$, we have that $\|\tp'_{n}-P^{*} \|_{2}+\|\tp''-P^{*} \| =O(1/\sqrt{n})$. Then, by following the similar steps as in \cite[Lemma~2]{HJT25}, we obtain that there are constants $M_4>0$ and $N_4 \in \mathbb{N}$ such that, for $n \geq N_4$, 
	\begin{IEEEeqnarray}{lCl}
		\IEEEeqnarraymulticol{3}{l}{D_{\textnormal{KL}}(\tp'_{n} \| P_{1})+D_{\textnormal{KL}}(\tp''_{n} \| P_{2})}\nonumber\\
		\quad & \leq & D_{\textnormal{KL}}(P^{*} \| P_{1}) +D_{\textnormal{KL}}(P^{*} \| P_{2}) +\ell_{P^{*}}(P'_{n}, P''_{n})\nonumber \\
		& & {} +\frac{1}{2} d_{\chi^{2}} (\tp'_{n}, P^{*}) +\frac{1}{2} d_{\chi^{2}} (\tp''_{n}, P^{*}) + \frac{M'_{2}}{n^{3/2}} \nonumber\\
		& \leq &  D_{\textnormal{KL}}(P^{*} \| P_{1}) +D_{\textnormal{KL}}(P^{*} \| P_{2}) \nonumber \\
		& & {} - \sqrt{\tilde{r}_{n}} \sqrt{V_{\text{KL}}(P^{*}\|P_1) + V_{\text{KL}}(P^{*}\|P_2)} +\frac{\kappa}{n} \nonumber\\
		& & {} + \frac{1}{2} d_{\chi^{2}} (\tp'_{n}, P^{*}) +\frac{1}{2} d_{\chi^{2}} (\tp'', P^{*}) + \frac{M'_{2}}{n^{3/2}} \label{eq:KL_uppr1rb}
	\end{IEEEeqnarray}
	where the last inequality follows from \eqref{eq:hfunbound1} in Lemma~\ref{Thm:type}. Performing a Taylor-series expansion of  $\sqrt{\tilde{r}_n}$ around $\frac{1}{\sqrt{n}}\sqrt{\mathsf{Q}^{-1}_{\chi^{2}_{k-1}}( \epsilon)}$, we obtain that
	\begin{equation}
    \sqrt{\tilde{r}_{n} }= \frac{1}{\sqrt{n}}\sqrt{\mathsf{Q}^{-1}_{\chi^{2}_{k-1}}( \epsilon)} + o\left(\frac{1}{\sqrt{n}}\right). \label{eq:(75)}
    \end{equation}
Since $	D_{\text{KL}}(P^{*}\| P_1) + D_{\text{KL}}(P^{*}\| P_2)	=2D_{B}(P_{1},P_{2})$, we then obtain from \eqref{eq:(52)}, \eqref{eq:prtyclcrb1}, \eqref{eq:KL_uppr1rb}, and \eqref{eq:(75)} that
	\begin{IEEEeqnarray}{lCl}
		\sup_{r_n\colon \alpha_n(\mc_{n}^{D}(r_n))\leq \epsilon} -\frac{\ln	\beta_n(\mc_{n}^{D}(r_n))}{n} & \leq &   2 D_{B}(P_{1},P_{2}) \nonumber\\
		 \IEEEeqnarraymulticol{3}{r}{{}-\sqrt{\frac{V_{\text{KL}}(P^{*}\|P_1) + V_{\text{KL}}(P^{*}\|P_2) }{n}} \sqrt{\mathsf{Q}^{-1}_{\chi^{2}_{k-1}}(\epsilon)}+o\biggl(\frac{1}{\sqrt{n}}\biggr)}. \nonumber\\
	\end{IEEEeqnarray}
    
		\section{Proof of Lemma~\ref{Thm:div_ball}}
	\label{sec:append_div_ball}
    Applying the Taylor-series approximation \eqref{eq:div_quad} with $\mathbf{T}=\mathbf{P}'$, $\mathbf{R}=\mathbf{P}''$, $\mathbf{P}=\bar{\mathbf{P}}$, and $\bm{A}_{D, \mathbf{P}}=\eta \bm{\Sigma}_{\bar{\mathbf{P}}}$, we can approximate $D(P'\|P'')$ as
	\begin{IEEEeqnarray}{lCl}
		D(P' \| P'') & = & \eta (\mathbf{P'}-\mathbf{P''})^{T}\bm{\Sigma}_{\mathbf{\bar{P}}}(\mathbf{P'}-\mathbf{P''}) \notag \\ 
		& & {} +O(\|\mathbf{P'}-\mathbf{\bar{P}}\|_{2}^{3})+O(\|\mathbf{P''}-\mathbf{\bar{P}}\|_{2}^{3}).\IEEEeqnarraynumspace \label{eq:div_quad3}
	\end{IEEEeqnarray}
    Furthermore, for $\bar{\mathbf{P}} = \frac{1}{2}\mathbf{P}'+\frac{1}{2}\mathbf{P}''$, we have that
	\begin{IEEEeqnarray}{lCl}
		\IEEEeqnarraymulticol{3}{l}{\eta (\mathbf{P'}-\mathbf{P''})^{T}\bm{\Sigma}_{\mathbf{\bar{P}}}(\mathbf{P'}-\mathbf{P''})} \notag \\ 
		\quad &  = & 2\eta (\mathbf{P'}-\mathbf{\bar{P}})^{T}\bm{\Sigma}_{\mathbf{\bar{P}}}(\mathbf{P'}-\mathbf{\bar{P}}) \nonumber\\
        & & {} +2\eta (\mathbf{P''}-\mathbf{\bar{P}})^{T}\bm{\Sigma}_{\mathbf{\bar{P}}}(\mathbf{P''}-\mathbf{\bar{P}}).\IEEEeqnarraynumspace\label{eq:div_quad6}
	\end{IEEEeqnarray}

    We next note that, for all $(P',P'') \in \mathcal{B}_{D}(\bar{r}^{D}_{n, \epsilon})$ with $\bar{r}^{D}_{n, \epsilon}$ defined in \eqref{eq:rbrvalue1}, we have $D(P',P'') < \bar{r}^{D}_{n, \epsilon}$.  Since $\bar{r}^{D}_{n, \epsilon}=\Theta(1/n)$, it follows from \cite[Lemma~2]{HJT25} that $\|P'-P''\|_{1}=O(1/\sqrt{n})$. We further have that  $P'-\bar{P}=(P'-P'')/2$ and $P''-\bar{P}=(P''-P')/2$. This implies that  $\|\mathbf{P'}-\bar{\mathbf{P}}\|_{2}$ and $\|\mathbf{P''}-\mathbf{\bar{P}}\|_{2}$ are both of order $1/\sqrt{n}$, too. Applying this to \eqref{eq:div_quad3}, we obtain that there exist constants $\bar{M}>0$ and $\bar{N}\in\mathbb{N}$ such that, for $n\geq \bar{N}$,
    \begin{equation}
		|D(P' \| P'') - \eta (\mathbf{P'}-\mathbf{P''})^{T}\bm{\Sigma}_{\mathbf{\bar{P}}}(\mathbf{P'}-\mathbf{P''}) | < \frac{ \bar{M}}{n^{3/2}} \label{eq:tylp2w}
	\end{equation}
    for all $(P',P'') \in \mathcal{B}_{D}(\bar{r}^{D}_{n, \epsilon})$.
	Together with \eqref{eq:div_quad6}, this implies that if $D(P',P'') <\bar{r}^{D}_{n, \epsilon}$, then
    \begin{IEEEeqnarray}{lCl}
\IEEEeqnarraymulticol{3}{l}{(\mathbf{P'}-\mathbf{\bar{P}})^{T}\bm{\Sigma}_{\mathbf{\bar{P}}}(\mathbf{P'}-\mathbf{\bar{P}})+(\mathbf{P''}-\mathbf{\bar{P}})^{T}\bm{\Sigma}_{\mathbf{\bar{P}}}(\mathbf{P''}-\mathbf{\bar{P}})} \nonumber\\
\quad & \leq & \frac{1}{2\eta} D(P'\|P'') + \frac{\bar{M}}{2\eta n^{3/2}} \nonumber\\
& \leq & \frac{\bar{r}^{D}_{n, \epsilon}}{2\eta} + \frac{\bar{M}}{2\eta n^{3/2}}.
    \end{IEEEeqnarray}
    Thus, setting $r_n' = \bar{r}^{D}_{n, \epsilon}/(2\eta)$ and $M' = \bar{M}/(2\eta)$, we obtain that any pair of distributions $(P',P'')$ in $\mathcal{B}_{D}(\bar{r}^{D}_{n, \epsilon})$ lies in the set $\mathcal{A}_{\bm{\Sigma}_{\mathbf{\bar{P}}}}(\bar{r}_{n}')$, which is Lemma~\ref{Thm:div_ball}.
	
	\section{Proof of Lemma~\ref{Thm:min_quad}}
	\label{sec:appendA_min_quad}
    Consider the function $\ell_{P}(T,R)$ defined in \eqref{eq:minfun1a} as
    \begin{IEEEeqnarray}{lCl}
			\ell_{P}(T,R) & \triangleq &
			\sum_{i=1}^k (T_i - P_{i}) \ln \frac{P_{i}}{P_{1i}}  + \sum_{i=1}^k (R_i - P_{i}) \ln \frac{P_{i}}{P_{2i}} \IEEEeqnarraynumspace\label{eq:minfun1ab}
		\end{IEEEeqnarray}
        for $T,R,P \in \mathcal{P}(\mathcal{Z})$. Minimizing $\ell_{P}(T,R) $ over the set $\mathcal{A}_{\Sigma_{P}}(r)$ (defined in \eqref{eq:A_Sigma}) for some $r>0$ is equivalent to solving the following minimization problem: 
	\begin{equation}
		\text{Minimize } \ \tilde{\ell}(\mathbf{x} , \mathbf{y} )=\mathbf{c}^T \mathbf{x} + \mathbf{d}^T \mathbf{y}
	\end{equation}
	subject to the constraint $\mathbf{x}^T \bm{\Sigma}_{P} \mathbf{x} + \mathbf{y}^T \bm{\Sigma}_{P} \mathbf{y} \leq r$, where we denote
    \begin{subequations}
	\begin{IEEEeqnarray}{lCl}
    \mathbf{x} &=& (x_1, \dots, x_{k-1}), \quad  x_i = (T_i - P_i),  \\
		\mathbf{y} & =& (y_1, \dots, y_{k-1}), \quad y_i = (R_i - P_i) \\
		\mathbf{c} &= & (c_1, \dots, c_{k-1}), \quad 
		c_i = \ln \frac{P_i}{P_{1i}} - \ln \frac{P_k}{P_{1k}} \label{eq:vector_c} \\
		\mathbf{d} & = & (d_1, \dots, d_{k-1}), \quad 
		d_i = \ln \frac{P_i}{P_{2i}} - \ln \frac{P_k}{P_{2k}}. \label{eq:vector_d}\\
	\end{IEEEeqnarray}
    \end{subequations}
    Denoting $g_0(\mathbf{x}, \mathbf{y} )=\mathbf{x}^T \bm{\Sigma}_{P} \mathbf{x} + \mathbf{y}^T \bm{\Sigma}_{P} \mathbf{y} -r$, let us consider the Lagrangian function 
	\begin{equation}
		\mathcal{L}(\mathbf{x}, \mathbf{y},  \mu_{0}) = \tilde{\ell}(\mathbf{x}, \mathbf{y} ) + \mu_{0} g_0(\mathbf{x}, \mathbf{y} )
	\end{equation}
	where  $\mu_{0}$ is a KKT multiplier. Evaluating the KKT conditions, we obtain that
	\begin{subequations}
    \begin{IEEEeqnarray}{rCl}
		c_{i}+ \mu_{0} \left(  2 \sum_{j=1}^{k-1} \bm{\Sigma}_{\mathbf{P}}(i,j) x_{j} \right) & = &  0, \quad \forall i \label{eq:kkt1a}\\
		d_{i}+ \mu_{0} \left(  2 \sum_{j=1}^{k-1} \bm{\Sigma}_{\mathbf{P}}(i,j) y_{j} \right) & = & 0, \quad \forall i \label{eq:kkt1ay}\\
		\IEEEeqnarraymulticol{3}{l}{\sum_{i=1}^{k-1} \sum_{j=1}^{k-1} \bm{\Sigma}_{\mathbf{P}}(i,j) x_{i}x_{j}} \nonumber\\
        {} +	\sum_{i=1}^{k-1} \sum_{j=1}^{k-1} \bm{\Sigma}_{\mathbf{P}}(i,j) y_{i}y_{j}- r & \leq & 0 \label{eq:kkt2a} \\
		\mu_{0}  &\geq &  0 \label{eq:kkt5a}  \\
		\IEEEeqnarraymulticol{3}{l}{\mu_{0} \Biggl( 	\sum_{i=1}^{k-1} \sum_{j=1}^{k-1} \bm{\Sigma}_{\mathbf{P}}(i,j) x_{i}x_{j}} \nonumber\\
        \qquad {} +	\sum_{i=1}^{k-1} \sum_{j=1}^{k-1} \bm{\Sigma}_{\mathbf{P}}(i,j) y_{i}y_{j}- r\Biggr)   & = & 0. \label{eq:kkt8a}
	\end{IEEEeqnarray}
    \end{subequations}
	Since, by assumption, $P_1\neq P_2$, the vectors $\mathbf{c}$ and $\mathbf{d}$ are not the all-zero vectors. Consequently, \eqref{eq:kkt1a} and \eqref{eq:kkt1ay} cannot be satisfied for $\mu_0=0$. For $\mu_{0}>0$, we obtain from \eqref{eq:kkt1a} and  \eqref{eq:kkt1ay} that 
	\begin{subequations}
    \begin{IEEEeqnarray}{lCl}
		\mathbf{x} & = & \frac{-1}{2 \mu_{0}}  \bm{\Sigma}_{\mathbf{P}}^{-1} \mathbf{c} \label{eq:kkts4} \\
		\mathbf{y} & = & \frac{-1}{2 \mu_{0}}  \bm{\Sigma}_{\mathbf{P}}^{-1} \mathbf{d} \label{eq:kkts4y}
	\end{IEEEeqnarray}
    \end{subequations}
	where $\bm{\Sigma}_{\mathbf{P}}^{-1}$ exists since $\bm{\Sigma}_{\mathbf{P}} \succ 0$. Furthermore, \eqref{eq:kkt8a} implies that
	\begin{equation}
		\sum_{i=1}^{k-1} \sum_{j=1}^{k-1} \bm{\Sigma}_{\mathbf{P}}(i,j) x_{i}x_{j} +\sum_{i=1}^{k-1} \sum_{j=1}^{k-1} \bm{\Sigma}_{\mathbf{P}}(i,j) y_{i}y_{j}  =r. \label{eq:kkts5}
	\end{equation}
	Substituting \eqref{eq:kkts4} and \eqref{eq:kkts4y} in \eqref{eq:kkts5}, we get
	\begin{equation}
		\frac{1}{4 \mu_{0}^{2}} \left( \mathbf{c}^{\mathsf{T}} \bm{\Sigma}_{\mathbf{P}}^{-1} \mathbf{c} +\mathbf{d}^{\mathsf{T}} \bm{\Sigma}_{\mathbf{P}}^{-1} \mathbf{d} \right) =r\label{eq:kkts7}
	\end{equation}
	which implies that
	\begin{equation}
		\mu_{0}= \frac{1}{2 \sqrt{r}}  \sqrt{\mathbf{c}^{\mathsf{T}} \bm{\Sigma}_{\mathbf{P}}^{-1} \mathbf{c}+\mathbf{d}^{\mathsf{T}} \bm{\Sigma}_{\mathbf{P}}^{-1} \mathbf{d}}. \label{eq:kkts7a}
	\end{equation}
	The above equation combined with \eqref{eq:kkts4} and \eqref{eq:kkts4y} yields the optimal solution
	\begin{subequations}
    \begin{IEEEeqnarray}{lCl}
		\mathbf{x}^{*} &= &\frac{-\sqrt{r} \bm{\Sigma}_{\mathbf{P}}^{-1} \mathbf{c}}{ \sqrt{\mathbf{c}^{\mathsf{T}} \bm{\Sigma}_{\mathbf{P}}^{-1} \mathbf{c}+\mathbf{d}^{\mathsf{T}} \bm{\Sigma}_{\mathbf{P}}^{-1} \mathbf{d}}} \\
		\mathbf{y}^{*} &= & \frac{-\sqrt{r} \bm{\Sigma}_{\mathbf{P}}^{-1} \mathbf{d}}{\sqrt{\mathbf{c}^{\mathsf{T}} \bm{\Sigma}_{\mathbf{P}}^{-1} \mathbf{c}+\mathbf{d}^{\mathsf{T}} \bm{\Sigma}_{\mathbf{P}}^{-1} \mathbf{d}}}.    \label{eq:kkts8}
	\end{IEEEeqnarray}
    \end{subequations}
	Thus, the  minimum value $\ell^{*}_{P}(r) \triangleq \min_{\mathcal{A}_{\Sigma_{P}}(r)} \ell_{P}(T,R)$ is given by
	\begin{IEEEeqnarray}{lCl}
		\ell^{*}_{P}(r)
		&=& -\sqrt{r} \, \sqrt{\mathbf{c}^T \bm{\Sigma}_{P}^{-1} \mathbf{c} + \mathbf{d}^T \bm{\Sigma}_P^{-1} \mathbf{d}} \nonumber\\
		&=&-\sqrt{r} \, \sqrt{V_{\textnormal{KL}}(P\|P_{1})+V_{\textnormal{KL}}(P\|P_{2})} 
	\end{IEEEeqnarray}
	where the last equation follows from \cite[Eq.~(122)]{HJT25}, since $\mathbf{c}^T \bm{\Sigma}_{P}^{-1} \mathbf{c}=V_{\textnormal{KL}}(P\|P_{1})$, and  $\mathbf{d}^T \bm{\Sigma}_{P}^{-1} \mathbf{d}=V_{\textnormal{KL}}(P\|P_{2})$. 
	Thus, by taking $P=\bar{P}$, and  $r=r'_{n}$, we obtain \eqref{eq:barrvalue}.
	
	\section{Convergence of Minimizing Distribution $\tilde{P}_n$}
	\label{sec:append_min_seq}
	Note that the minimum in \eqref{eq:tylq4e} exists for every $n$ because $D_{\textnormal{KL}}$ and $V_{\textnormal{KL}}$ are continuous functions and $\bar{\mathcal{P}}(\mathcal{Z})$ is a compact set. We next prove by contradiction that $\tilde{P}_{n}\to P^*$ as $n\to\infty$. Indeed, since $\bar{\mathcal{P}}(\mathcal{Z})$ is a compact set, the sequence $\{\tilde{P}_{n}\}$ has a converging subsequence $\{\tilde{P}_{n_{k}}\}$ by the Bolzano-Weierstrass theorem. Suppose the limit of this subsequence is $\check{P} \neq P^*$. Since $D_{\textnormal{KL}}$ is continuous, $V_{\textnormal{KL}}$ is bounded, and $r_{n}'$ vanishes as $n\to\infty$, it follows that
	\begin{IEEEeqnarray}{lCl}
		\IEEEeqnarraymulticol{3}{l}{\lim_{k\to\infty} \Bigl\{ D_{\text{KL}}(\tilde{P}_{n_{k}}\| P_1) + D_{\text{KL}}(\tilde{P}_{n_{k}}\| P_2)}  \nonumber \\
		\IEEEeqnarraymulticol{3}{l}{\quad\qquad {} -\sqrt{r_{n_{k}}'} \sqrt{V_{\textnormal{KL}}(\tilde{P}_{n_{k}}\|P_{1})+V_{\textnormal{KL}}(\tilde{P}_{n_{k}}\|P_{2})} \Bigr\} }\nonumber\\
		\quad &  = & D_{\text{KL}}(\check{P} \| P_1) + D_{\text{KL}}(\check{P} \| P_2) \nonumber \\
		& > & D_{\text{KL}}(P^{*}\| P_1) + D_{\text{KL}}(P^{*}\| P_2)   \label{eq:ROB_contra1a}
	\end{IEEEeqnarray}
	where the inequality follows because $P^*$ is the unique minimizer of $D_{\text{KL}}(P\| P_1) + D_{\text{KL}}(P\| P_2)  $.  However, since $P_n$ is the minimizer of \eqref{eq:tylq4e} over $\bar{\mathcal{P}}(\mathcal{Z})$ and $P^* \in\bar{\mathcal{P}}(\mathcal{Z})$, we also have that
	\begin{IEEEeqnarray}{lCl}
	\IEEEeqnarraymulticol{3}{l}{ D_{\text{KL}}(\tilde{P}_{n} \| P_1) + D_{\text{KL}}(\tilde{P}_{n} \| P_2)} \nonumber \\
		\IEEEeqnarraymulticol{3}{l}{\qquad\quad {} -\sqrt{r_{n}'}\sqrt{V_{\textnormal{KL}}(\tilde{P}_{n}\|P_{1})+V_{\textnormal{KL}}(\tilde{P}_{n}\|P_{2})}} \nonumber \\
        \quad & \leq & D_{\text{KL}}(P^* \| P_1) + D_{\text{KL}}(P^* \| P_2) \nonumber \\
        & & {} - \sqrt{r_{n}'}\sqrt{V_{\textnormal{KL}}(P^*\|P_{1})+V_{\textnormal{KL}}(P^*\|P_{2})} \nonumber\\
		& \leq & D_{\text{KL}}(P^{*}\| P_1) + D_{\text{KL}}(P^{*}\| P_2)  
	\end{IEEEeqnarray}
	where the second inequality follows because the third term is nonnegative. Since this contradicts \eqref{eq:ROB_contra1a}, we conclude that any converging subsequence of $\{\tilde{P}_n\}$ must converge to $P^*$, hence $\tilde{P}_n \to P^*$ as $n\to\infty$.

    \comment{
	\section{Proof of Lemma \ref{Thm:div_ball_converse}}
	\label{sec:append_div_ball_converse}
    By following the steps that lead to \eqref{eq:tylp2w}, it can be shown that there exist constants $\bar{M}>0$ and $\bar{N}\in\mathbb{N}$ such that, for $n\geq \bar{N}$,
    \begin{equation}
    D(R\|T) \leq \eta (\mathbf{R}-\mathbf{T})^{T}\bm{\Sigma}_{\mathbf{P}^*}(\mathbf{R}-\mathbf{T}) + \frac{ \bar{M}}{n^{3/2}}
    \end{equation}
    for $(T,R) \in $
    
	Let $(T,R)$ be that $D(T \|P^{*})+D(R \|P^{*}) < \tilde{r}_{n}$ with $\tilde{r}_{n}=\Theta(1/\sqrt{n})$.
	Then it follows from \cite[Lemma~2]{HJT25} that $\|T-P^{*}\|_{2}+\|R-P^{*}\|_{2} =O(1/\sqrt{n})$. This also implies that $\|T-R\|_{2}=O(1/\sqrt{n})$.  Next, we have from the Taylor series expansion that there exists $M'>0$ such that 
	\begin{align}
		&\eta (\mathbf{T}-\mathbf{P}^{*})^{T}\bm{\Sigma}_{\mathbf{P}^{*}}(\mathbf{T}-\mathbf{P}^{*}) \notag \\ 
		& \qquad +\eta (\mathbf{R}-\mathbf{P}^{*})^{T}\bm{\Sigma}_{\mathbf{P}^{*}}(\mathbf{R}-\mathbf{P}^{*})  -\frac{M'}{n^{3/2}} \notag \\
		&\quad \leq 	D(T \| P^{*}) +D(R \| P^{*}). 
	\end{align}
	This implies that
	\begin{align}
		&\eta (\mathbf{T}-\mathbf{P}^{*})^{T}\bm{\Sigma}_{\mathbf{P}^{*}}(\mathbf{T}-\mathbf{P}^{*}) \notag \\ & \quad +\eta (\mathbf{R}-\mathbf{P}^{*})^{T}\bm{\Sigma}_{\mathbf{P}^{*}}(\mathbf{R}-\mathbf{P}^{*}) \leq  \tilde{r}_{n} + \frac{M'}{n^{3/2}} 	 \label{eq:conv5}
	\end{align}
	As $\|T-R\|_{2}=O(1/\sqrt{n})$, it implies that there exists $M>0$ such that  
	\begin{eqnarray}
		D(T \| R)  \leq \eta (\mathbf{T}-\mathbf{R})^{T}\bm{\Sigma}_{\mathbf{P}^{*}}(\mathbf{T}-\mathbf{R}) + \frac{M}{n^{3/2}}.  \label{eq:conv0}
	\end{eqnarray}
	We have 
	\begin{align}
		&  (\mathbf{T}-\mathbf{R})^{T}\bm{\Sigma}_{\mathbf{P}^{*}}(\mathbf{T}-\mathbf{R}) \notag \\
		&= (\mathbf{T}-\mathbf{P}^{*})^{T}\bm{\Sigma}_{\mathbf{P}^{*}}(\mathbf{T}-\mathbf{P}^{*}) + (\mathbf{R}-\mathbf{P}^{*})^{T}\bm{\Sigma}_{\mathbf{P}^{*}}(\mathbf{R}-\mathbf{P}^{*})  \notag\\
		& \qquad-2 (\mathbf{T}-\mathbf{P}^{*})^{T}\bm{\Sigma}_{\mathbf{P}^{*}}(\mathbf{R}-\mathbf{P}^{*}). \label{eq:conv1}
	\end{align}
	Next, 
	\begin{align}
		&	-2 (\mathbf{T}-\mathbf{P}^{*})^{T}\bm{\Sigma}_{\mathbf{P}^{*}}(\mathbf{R}-\mathbf{P}^{*})  \notag \\
		& \; \leq  2 |(\mathbf{T}-\mathbf{P}^{*})^{T}\bm{\Sigma}_{\mathbf{P}^{*}}(\mathbf{R}-\mathbf{P}^{*})|   \\
		& \; \leq 2  \sqrt{(\mathbf{T}-\mathbf{P}^{*})^{T}\bm{\Sigma}_{\mathbf{P}^{*}}(\mathbf{T}-\mathbf{P}^{*})} \sqrt{(\mathbf{R}-\mathbf{P}^{*})^{T}\bm{\Sigma}_{\mathbf{P}^{*}}(\mathbf{R}-\mathbf{P}^{*})}  \\
		&\; \leq  (\mathbf{T}-\mathbf{P}^{*})^{T}\bm{\Sigma}_{\mathbf{P}^{*}}(\mathbf{T}-\mathbf{P}^{*}) + (\mathbf{R}-\mathbf{P}^{*})^{T}\bm{\Sigma}_{\mathbf{P}^{*}}(\mathbf{R}-\mathbf{P}^{*}) \label{eq:conv2}
	\end{align}
	where the second-last inequality follows from the Cauchy-Schwarz inequality, and the last inequality from $2ab \leq a^{2}+b^{2}$. 
	From \eqref{eq:conv0}, \eqref{eq:conv1}, and \eqref{eq:conv2}, it follows that
	\begin{align}
		D(T \| R)  &\leq  2\eta (\mathbf{T}-\mathbf{P}^{*})^{T}\bm{\Sigma}_{\mathbf{P}^{*}}(\mathbf{T}-\mathbf{P}^{*}) \notag \\ & \quad +2\eta (\mathbf{R}-\mathbf{P}^{*})^{T}\bm{\Sigma}_{\mathbf{P}^{*}}(\mathbf{R}-\mathbf{P}^{*})+ \frac{M}{n^{3/2}}  \label{eq:conv4} \\
		&	\leq  2\left( \tilde{r}_{n} + \frac{M'}{n^{3/2}}\right) + \frac{M}{n^{3/2}} \\
		&= 2 \tilde{r}_{n} + \frac{2M'}{n^{3/2}} + \frac{M}{n^{3/2}} =2 \tilde{r}_{n} +  \frac{M_{1}}{n^{3/2}} 
	\end{align}
	where $M_{1}=2M'+M$.
	By taking $\tilde{r}_{n}=\frac{r_{n,\epsilon}^{D}}{2}-\frac{M_{1}}{2n^{3/2}} $, we obtain that if $D(T \|P^{*})+D(R \|P^{*}) <\tilde{r}_{n}$, then $D(T \| R) <r_{n,\epsilon}^{D}$.  That is, we have
	\begin{align}
		&\left\{ (T', R') \ \big| D(T' \|P^{*})+D(R' \|P^{*}) <\tilde{r}_{n} \right\} \notag \\
		& \quad \subseteq  \left\{ (T', R') \ \big| D(T'\| R')  < r_{n,\epsilon}^{D}\right\}. \label{eq:lbtypeii}
	\end{align}
	Furthermore, it follows from \cite[Lemma 19]{HJT25} that there exists $M'_{1}>0$ such that 
	\begin{align*}
		&\eta (\mathbf{T'}-\mathbf{P}^{*})^{T}\bm{\Sigma}_{\mathbf{P}^{*}}(\mathbf{T'}-\mathbf{P}^{*}) \notag \\ & \quad +\eta (\mathbf{R'}-\mathbf{P}^{*})^{T}\bm{\Sigma}_{\mathbf{P}^{*}}(\mathbf{R'}-\mathbf{P}^{*}) \leq \hat{r}'_{n}  
	\end{align*}
	which implies that $D(T' \|P^{*})+D(R' \|P^{*}) <\tilde{r}_{n} $
	where $\hat{r}'_{n} =\tilde{r}_{n}-\frac{M'_{1}}{ n^{3/2} }$. As $\tilde{r}_{n}=\frac{r_{n,\epsilon}^{D}}{2}-\frac{M_{1}}{2n^{3/2}} $, we obtain that  
	\begin{equation}
		\bar{\mathcal{B}}_{\bm{A_{D,\mathbf{P}^{*}}} } \left( \bar{r}'_{n}\right)  \subseteq \mathcal{B}_{D,P^{*}}(\tilde{r}_{n}), \quad \bar{r}'_{n}=\frac{r^{D}_{n, \epsilon}}{2\eta}-\frac{M'_{2}}{ n^{3/2} }\label{eq:klchiwrb11}
	\end{equation}
	for a constant $M_{2}'=\frac{M_{1}'}{\eta}+\frac{M_{1}}{2\eta}$. 
    }

  	\section{Proof of Lemma \ref{Thm:div_ball_converse}}
    \label{sec:append_div_ball_converse}
	Let $\tilde{r}_n$ be as in \eqref{eq:tilde_rn} for an $M_3>0$ to be determined later. For any $(T,R) \in \mathcal{A}_{\bm{\Sigma}_{{\mathbf{P}^*}}}(\tilde{r}_{n})$, we have that 
    \begin{subequations}
	\begin{IEEEeqnarray}{lCl}
	\|T-P^*\|_{2} & = & O\left(\frac{1}{\sqrt{n}}\right)\\
    \|R-P^*\|_{2} & = & O\left(\frac{1}{\sqrt{n}}\right)
	\end{IEEEeqnarray}
    \end{subequations}
	since $\tilde{r}_{n}=\Theta(1/n)$.  Applying the Taylor-series approximation \eqref{eq:div_quad} with $\bm{A}_{D, \mathbf{P}}=\eta \bm{\Sigma}_{\mathbf{P}^*}$,  this   implies that there exist constants $\bar{M}>0$ and $N_3\in\mathbb{N}$ such that, for $n\geq N_3$,
	\begin{equation}
		D(T \| R)  \leq \eta (\mathbf{T}-\mathbf{R})^{T}\bm{\Sigma}_{\mathbf{P^*}}(\mathbf{T}-\mathbf{R}) + \frac{\bar{M}}{n^{3/2}}.  \label{eq:conv0}
	\end{equation}
    We next show that
    \begin{IEEEeqnarray}{lCl}
    \IEEEeqnarraymulticol{3}{l}{(\mathbf{T}-\mathbf{R})^{T}\bm{\Sigma}_{\mathbf{P^*}}(\mathbf{T}-\mathbf{R})} \nonumber\\
    \quad & \leq  & 2 (\mathbf{T}-\mathbf{P}^*)^{T}\bm{\Sigma}_{\mathbf{P^*}}(\mathbf{T}-\mathbf{P}^*) \nonumber\\
    && {}+ 2 (\mathbf{R}-\mathbf{P}^*)^{T}\bm{\Sigma}_{\mathbf{P^*}}(\mathbf{R}-\mathbf{P}^*). \label{eq:conv00}
    \end{IEEEeqnarray}
    Indeed, we have that
	\begin{IEEEeqnarray}{lCl}
		\IEEEeqnarraymulticol{3}{l}{(\mathbf{T}-\mathbf{R})^{T}\bm{\Sigma}_{\mathbf{P^*}}(\mathbf{T}-\mathbf{R})} \nonumber \\
		& = & (\mathbf{T}-\mathbf{P}^{*})^{T}\bm{\Sigma}_{\mathbf{P}^{*}}(\mathbf{T}-\mathbf{P}^*) + (\mathbf{R}-\mathbf{P^*})^{T}\bm{\Sigma}_{\mathbf{P^*}}(\mathbf{R}-\mathbf{P^*})  \nonumber\\
		& & {} -2 (\mathbf{T}-\mathbf{P^*})^{T}\bm{\Sigma}_{\mathbf{P^*}}(\mathbf{R}-\mathbf{P^*}). \label{eq:conv1}
	\end{IEEEeqnarray}
Using the Cauchy-Schwarz inequality, the third term in the right-hand side of \eqref{eq:conv1} can be upper-bounded as
	\begin{IEEEeqnarray}{lCl}
	   \IEEEeqnarraymulticol{3}{l}{-2 (\mathbf{T}-\mathbf{P^*})^{T}\bm{\Sigma}_{\mathbf{P^*}}(\mathbf{R}-\mathbf{P^*})} \nonumber\\
		& \leq & 2  \sqrt{(\mathbf{T}-\mathbf{P^*})^{T}\bm{\Sigma}_{\mathbf{P^*}}(\mathbf{T}-\mathbf{P^*})} \sqrt{(\mathbf{R}-\mathbf{P^*})^{T}\bm{\Sigma}_{\mathbf{P^*}}(\mathbf{R}-\mathbf{P^*})}  \nonumber\\
		& \leq & (\mathbf{T}-\mathbf{P^*})^{T}\bm{\Sigma}_{\mathbf{P^*}}(\mathbf{T}-\mathbf{P^*}) \nonumber\\
        & & {} + (\mathbf{R}-\mathbf{P^*})^{T}\bm{\Sigma}_{\mathbf{P^*}}(\mathbf{R}-\mathbf{P^*}) \label{eq:conv2}
	\end{IEEEeqnarray}
	where the last inequality follows because $2ab \leq a^{2}+b^{2}$ for any real numbers $a$ and $b$. We then obtain \eqref{eq:conv00} from \eqref{eq:conv1} and \eqref{eq:conv2}.
	
	Substituting \eqref{eq:conv00} in \eqref{eq:conv0}, we obtain
	\begin{IEEEeqnarray}{lCl}
		D(T \| R)  &\leq & 2\eta (\mathbf{T}-\mathbf{P}^*)^{T}\bm{\Sigma}_{\mathbf{P^*}}(\mathbf{T}-\mathbf{P^*}) \notag \\ 
         & & {} +2\eta (\mathbf{R}-\mathbf{P}^*)^{T}\bm{\Sigma}_{\mathbf{P}^*}(\mathbf{R}-\mathbf{P}^*)+ \frac{\bar{M}}{n^{3/2}}. \IEEEeqnarraynumspace \label{eq:conv4} 
	\end{IEEEeqnarray}
    Thus, setting $M_3 = \bar{M}/(2\eta)$, it follows that any pair of distributions $(T,R)$ in $\mathcal{A}_{\bm{\Sigma}_{\bar{\mathbf{P}}^*}}(\tilde{r}_{n})$ satisfies
    \begin{equation}
    D(T\|R) \leq r_{n,\epsilon}^{D}.
    \end{equation}
    Hence, $\mathcal{A}_{\bm{\Sigma}_{{\mathbf{P}^*}}}(\tilde{r}_{n}) \subseteq \mathcal{B}_{D}(r_{n,\epsilon}^{D})$.
    
	\section{Proof of Lemma~\ref{Thm:type}}
	\label{sec:appendA_lemmatype}
	Let $P \in \mathcal{P}(\mathcal{Z})$. Consider the pair of probability distributions $(\Gamma', \Gamma'')$ that minimizes $\ell_{P}(T, R)$  (defined in \eqref{eq:minfun1a}) over all $(T,R) \in \mathcal{A}_{\bm{\Sigma}_{\mathbf{P}}}(\tilde{r}_{n})$. That is, for $i=1,\ldots,k$,
	\begin{subequations}
    \begin{IEEEeqnarray}{lCl}
		\Gamma'_{i} & = & P_{i} + x_{i}^{*} \label{eq:minprob1aa}\\
        \Gamma''_{i} & = & P_{i} + y_{i}^{*} \label{eq:minprob1a}
	\end{IEEEeqnarray}
    \end{subequations}
	where
    \begin{subequations}
	\begin{IEEEeqnarray}{lCl}
		x_{i}^{*} & = & \frac{-\sqrt{\tilde{r}_{n}} (\bm{\Sigma}_{\mathbf{P}}^{-1} \mathbf{c})_{i}}{ \sqrt{\mathbf{c}^{\mathsf{T}} \bm{\Sigma}_{\mathbf{P}}^{-1} \mathbf{c} +\mathbf{d}^{\mathsf{T}} \bm{\Sigma}_{\mathbf{P}}^{-1} \mathbf{d}}} \label{eq:minprob2bb}\\
		y_{i}^{*} & = & \frac{-\sqrt{\tilde{r}_{n}} (\bm{\Sigma}_{\mathbf{P}}^{-1} \mathbf{d})_{i}}{ \sqrt{\mathbf{c}^{\mathsf{T}} \bm{\Sigma}_{\mathbf{P}}^{-1} \mathbf{c} +\mathbf{d}^{\mathsf{T}} \bm{\Sigma}_{\mathbf{P}}^{-1} \mathbf{d}}}\label{eq:minprob2b} 
	\end{IEEEeqnarray}
    \end{subequations}
	for $i=1,\ldots,k-1$ and $x_{k}^{*} =-\sum_{i=1}^{k-1}x_{i}^{*}  $, $y_{k}^{*} =-\sum_{i=1}^{k-1}y_{i}^{*}$. The vectors $\mathbf{c}$, $\mathbf{d}$ in \eqref{eq:minprob2bb} and \eqref{eq:minprob2b}  are defined in \eqref{eq:vector_c} and \eqref{eq:vector_d}, respectively,  and we use the notation $(\mathbf{a})_i$ to denote the $i$-th component of a vector $\mathbf{a}$.
	
	To prove the lemma, we need to find a type distribution $(\tp'_{n},\tp''_{n})$ such that, for a sufficiently large $\tilde{N}\in\mathbb{N}$ and all $n\geq\tilde{N}$, the following is true:
	\begin{enumerate}
		\item $(\tp'_{n},\tp''_{n}) \in 	\mathcal{A}_{\bm{\Sigma}_{\mathbf{P}}}(\tilde{r}_{n})$, i.e.,	
		\begin{IEEEeqnarray}{lCl}
			\IEEEeqnarraymulticol{3}{l}{(\mathbf{\tp}'_{n}-\mathbf{P})^{\mathsf{T}} \bm{\Sigma}_{\mathbf{P}} ( \mathbf{\tp}'_{n}-\mathbf{P})  + 	(\mathbf{\tp}''_{n}-\mathbf{P})^{\mathsf{T}} \bm{\Sigma}_{\mathbf{P}} ( \mathbf{\tp}''_{n}-\mathbf{P}) } \nonumber\\
            \qquad & \leq &\tilde{r}_{n}. \label{eq:z9}
		\end{IEEEeqnarray}
		\item For every $ i=1, \ldots,k$,  $n \tp'_{n}(a_{i}) $ and $n\tp''_n(a_i)$ are positive  integers satisfying
		\begin{equation}
			\sum_{i=1}^{k}n \tp'_{n}(a_{i}) =\sum_{i=1}^{k}n \tp''_{n}(a_{i}) =n. \label{eq:z8}  
		\end{equation}
		\item The pair of type distributions $(\tp'_{n},\tp''_{n}) $ satisfies 
		\begin{equation}
			|n \ell_{P}(\Gamma',\Gamma'') -n\ell_{P}(\tp'_{n},\tp''_{n})| \leq \kappa \label{eq:zz}
		\end{equation}
        for some $\kappa > 0$.
	\end{enumerate} 
    
	To prove \eqref{eq:z9}--\eqref{eq:zz}, we write 
	\begin{equation}
		\mathbf{\Gamma'} = \mathbf{P} +\mathbf{x}^{*}, \quad 	\mathbf{\Gamma}'' = \mathbf{P} +\mathbf{y}^{*}
		\label{eq:gamma*}
	\end{equation} 
    and define, for some $0<\bar{\alpha}<1$ to be specified later,
	\begin{equation}
		\mathbf{\bar{\Gamma'}} \triangleq \mathbf{P} +(1-\bar{\alpha})\mathbf{x}^{*}, \quad 	\mathbf{\bar{\Gamma}}'' \triangleq \mathbf{P} +(1-\bar{\alpha})\mathbf{y}^{*}.
		\label{eq:gammabar}
	\end{equation} 
	We then choose  $P'_n$ and $P''_n$ as follows: 
	\begin{subequations}
    \begin{IEEEeqnarray}{lCl}
		n \tp'_{n}(a_{i}) & = &
		\begin{cases}
			\lfloor  n \bar{\Gamma'}_{i} \rfloor, \quad	& \text{if} \;  \langle \bm{\Sigma}_{\mathbf{P}} \mathbf{x}^{*}, \bm{e}_{i} \rangle >0 \\
			\lceil n \bar{\Gamma'}_{i} \rceil, \quad 	&   \text{if}  \;  \langle \bm{\Sigma}_{\mathbf{P}} \mathbf{x}^{*}, \bm{e}_{i} \rangle \leq 0
		\end{cases} 
		\label{eq:tydef1} \\
		n \tp''_{n}(a_{i}) & = &
		\begin{cases}
			\lfloor  n \bar{\Gamma}''_{i} \rfloor, \quad	& \text{if} \;  \langle \bm{\Sigma}_{\mathbf{P}} \mathbf{y}^{*}, \bm{e}_{i} \rangle >0 \\
			\lceil n \bar{\Gamma}''_{i} \rceil, \quad 	&   \text{if}  \;  \langle \bm{\Sigma}_{\mathbf{P}} \mathbf{y}^{*}, \bm{e}_{i} \rangle \leq 0
		\end{cases} \IEEEeqnarraynumspace
		\label{eq:tydef1a}
	\end{IEEEeqnarray}
    \end{subequations}
     where $\lfloor  \cdot \rfloor $ is the floor function; $\lceil \cdot \rceil$ is the ceiling function; $\bm{e}_i=(0,\ldots,0,1,0,\ldots,0)^{\mathsf{T}}$ denotes the standard basis vector in $\mathbb{R}^{k-1}$ whose components are all zero except  at position $i$, where it is one; and $\langle  \cdot, \cdot \rangle $ denotes the dot product in 
$\mathbb{R}^{k-1}$.
    
    In the following, we show that this choice of $(P'_n,P''_n)$ indeed satisfies the conditions~\eqref{eq:z9}--\eqref{eq:zz}. For ease of exposition, we define,
	\begin{equation}
		\bm{\delta'} \triangleq	n \mathbf{\tp}'_{n}-n	\mathbf{\bar{\Gamma'}}, \quad \bm{\delta}'' \triangleq	n \mathbf{\tp}''_{n}-n	\mathbf{\bar{\Gamma}}''. \label{eq:deltadef}
	\end{equation} 
	It follows immediately from~\eqref{eq:tydef1} that 
	\begin{equation}
		|\delta_{i} | <1 \quad \textnormal{and} \quad |\delta_{i} | <1 \label{eq:deltab}
	\end{equation} 
    for $i=1,\ldots,k-1$. Furthermore,
	\begin{subequations}
    \begin{IEEEeqnarray}{lCll}
		\delta'_{i}  & \leq & 0, \quad & \text{if $\langle \bm{\Sigma}_{\mathbf{P}} \mathbf{x}^{*}, \bm{e}_{i} \rangle  >0$}\\
        \delta'_{i} & \geq & 0, \quad  & \text{if $\langle  \bm{\Sigma}_{\mathbf{P}} \mathbf{x}^{*}, \bm{e}_{i} \rangle  \leq 0$} \\
        \delta''_{i} &\leq & 0 , \quad & \text{if $\langle \bm{\Sigma}_{\mathbf{P}} \mathbf{y}^{*}, \bm{e}_{i} \rangle  >0$}   \label{eq:delta1}\\
        \delta''_{i} & \geq & 0 , \quad & \text{if $\langle \bm{\Sigma}_{\mathbf{P}} \mathbf{y}^{*}, \bm{e}_{i} \rangle  \leq 0$}. \label{eq:delta2}
	\end{IEEEeqnarray}
    \end{subequations}
	By \eqref{eq:gammabar} and \eqref{eq:deltadef}, we can express  $(\mathbf{\tp}'_{n},\mathbf{\tp}''_n)$ as
	\begin{equation}
		\mathbf{\tp}'_{n}
		= \mathbf{P} +\mathbf{\bar{x}}, \quad 	\mathbf{\tp}''_{n}
		= \mathbf{P} +\mathbf{\bar{y}} \label{eq:btypdef2} 
	\end{equation} 
	where 
	\begin{equation}
		\mathbf{\bar{x}} 	\triangleq (1-\bar{\alpha})\mathbf{x}^{*} +\frac{\bm{\delta'}}{n}, \quad 	\mathbf{\bar{y}} 	\triangleq (1-\bar{\alpha})\mathbf{y}^{*} +\frac{\bm{\delta}''}{n}. \label{eq:barxdef1} 
	\end{equation} 
	
	\subsection{Proof of  \eqref{eq:z9}}
	Consider
	\begin{IEEEeqnarray}{lCl}
	\IEEEeqnarraymulticol{3}{l}{(\mathbf{\tp}'_{n}-\mathbf{P})^{\mathsf{T}} \bm{\Sigma}_{\mathbf{P}} ( \mathbf{\tp}'_{n}-\mathbf{P})+	(\mathbf{\tp}''_{n}-\mathbf{P})^{\mathsf{T}} \bm{\Sigma}_{\mathbf{P}} ( \mathbf{\tp}''_{n}-\mathbf{P})} \notag \\
		\quad &  = &	\mathbf{\bar{x}}^{\mathsf{T}} \bm{\Sigma}_{\mathbf{P}} \mathbf{\bar{x}}+	\mathbf{\bar{y}}^{\mathsf{T}} \bm{\Sigma}_{\mathbf{P}} \mathbf{\bar{y}}.
	\end{IEEEeqnarray}
	Then, we have
	\begin{IEEEeqnarray}{lCl}
		\IEEEeqnarraymulticol{3}{l}{\mathbf{\bar{x}}^{\mathsf{T}} \bm{\Sigma}_{\mathbf{P}} \mathbf{\bar{x}}+	\mathbf{\bar{y}}^{\mathsf{T}} \bm{\Sigma}_{\mathbf{P}} \mathbf{\bar{y}}} \notag \\
		\quad &  = &(1-\bar{\alpha})^{2} \left( (\mathbf{x}^{*})^{\mathsf{T}} \bm{\Sigma}_{\mathbf{P}} \mathbf{x}^{*}+(\mathbf{y}^{*})^{\mathsf{T}} \bm{\Sigma}_{\mathbf{P}} \mathbf{y}^{*} \right) \notag \\
		& & {} +\frac{1}{n^{2}}	\bm{\delta'}^{\mathsf{T}}\bm{\Sigma}_{\mathbf{P}} \bm{\delta'} +	\frac{1}{n^{2}} \bm{\delta}''^{\mathsf{T}}\bm{\Sigma}_{\mathbf{P}} \bm{\delta}''  \notag \\
		& & {} + \frac{2 (1-\bar{\alpha})}{n} \left(\bm{\delta'}^{\mathsf{T}}\bm{\Sigma}_{\mathbf{P}}\mathbf{x^{*}} +	\bm{\delta}''^{\mathsf{T}}\bm{\Sigma}_{\mathbf{P}}\mathbf{y^{*}} \right). \label{eq:inp}
	\end{IEEEeqnarray}
	After some algebraic manipulations, it can be shown that $(\mathbf{x}^{*})^{\mathsf{T}} \bm{\Sigma}_{\mathbf{P}} \mathbf{x}^{*}+(\mathbf{y}^{*})^{\mathsf{T}} \bm{\Sigma}_{\mathbf{P}} \mathbf{y}^{*}=\tilde{r}_{n}$. 
	Furthermore, using the Rayleigh-Ritz theorem \cite[Th.~4.2.2]{RJ90}, we can upper-bound the second and third terms on the right-hand side of \eqref{eq:inp} as
	\begin{equation}
		\frac{1}{n^{2}}	\bm{\delta'}^{\mathsf{T}}\bm{\Sigma}_{\mathbf{P}} \bm{\delta'} +	\frac{1}{n^{2}} \bm{\delta}''^{\mathsf{T}}\bm{\Sigma}_{\mathbf{P}} \bm{\delta}'' \leq \tilde{\lambda}_{\textnormal{max}} 	 \frac{2(k-1)}{n^{2}} \label{eq:inp2}
	\end{equation}
	where $\tilde{\lambda}_{\textnormal{max}}$ is the maximum eigenvalue of $\bm{\Sigma}_{\mathbf{P}}$. Finally, it can be shown that
	\begin{equation}
		\bm{\delta'}^{\mathsf{T}}\bm{\Sigma}_{\mathbf{P}}\mathbf{x^{*}} +	\bm{\delta}^{\mathsf{T}}\bm{\Sigma}_{\mathbf{P}}\mathbf{y^{*}}
		\leq 0. \label{eq:inp3}
	\end{equation}
	Consequently,
	\begin{IEEEeqnarray}{lCl}
		\IEEEeqnarraymulticol{3}{l}{\mathbf{\bar{x}}^{\mathsf{T}} \bm{\Sigma}_{\mathbf{P}} \mathbf{\bar{x}}+	\mathbf{\bar{y}}^{\mathsf{T}} \bm{\Sigma}_{\mathbf{P}} \mathbf{\bar{y}}} \nonumber\\
		\quad & \leq &(1-\bar{\alpha})^{2} \tilde{r}_{n} +  \frac{\tilde{\lambda}_{\textnormal{max}} 2(k-1)}{n^{2}} \notag \\
		&=& \tilde{r}_{n}+(\bar{\alpha}^{2}\tilde{r}_{n} -\bar{\alpha} \bar{r}_{n}) +\left(  \frac{\tilde{\lambda}_{\textnormal{max}} 2(k-1)}{n^{2}}  -\bar{\alpha} \tilde{r}_{n}\right). \IEEEeqnarraynumspace\label{eq:inp7}
	\end{IEEEeqnarray}
	Since $\tilde{r}_{n}=\Theta(\frac{1}{n})$, there exist $0<\check{M}_{1} \leq \check{M}_{2} < \infty$ and $\check{N}_{1} \in \mathbb{N}$ such that
	\begin{equation}
		\frac{\check{M}_{1}}{ n} \leq \bar{r}_{n} \leq  \frac{\check{M}_{2}}{ n}, \quad n \geq \check{N}_{1}. \label{eq:rno1}
	\end{equation}
	We next choose
	\begin{equation}
		\bar{\alpha}= \frac{2\tilde{\lambda}_{\textnormal{max}} 2(k-1)}{\check{M}_{1} n} \label{eq:rno2}
	\end{equation}
	which is positive and vanishes as $n$ tends to infinity, hence it satisfies $0< \bar{\alpha} <1$ for all $n \geq  \check{N}_{2} $ and a sufficiently large $\check{N}_{2} \geq \check{N}_{1}$. For this choice of $\bar{\alpha}$, we have that
	\begin{equation}
		\left( 	\frac{\tilde{\lambda}_{\textnormal{max}} 2(k-1)}{n^{2}}  -\bar{\alpha} \tilde{r}_{n}\right)  \leq -\frac{\tilde{\lambda}_{\textnormal{max}} 2(k-1)}{n^{2}} \leq 0  \label{eq:rno4}
	\end{equation}
	for $n \geq \check{N}_{2}$. Furthermore, 
	\begin{equation}
		\bar{\alpha}^{2} \tilde{r}_{n}= \Theta \left( \frac{1}{n^{3}}\right) \quad \text{and} \quad  \bar{\alpha}\tilde{r}_{n}= \Theta \left( \frac{1}{n^{2}}\right)
	\end{equation}
	which implies that there exists an $\check{N}_{3} \in \mathbb{N}$ such that
	\begin{equation}
		\bar{\alpha}^{2} \tilde{r}_{n}-\bar{\alpha}\tilde{r}_{n} \leq 0, \quad n \geq \check{N}_{3}. \label{eq:rno5}
	\end{equation}
	Applying \eqref{eq:rno4} and  \eqref{eq:rno5} to \eqref{eq:inp7}, we obtain that, for \mbox{$n \geq \max\left\lbrace  \check{N}_{2},  \check{N}_{3} \right\rbrace $},  
	\begin{equation}
		\mathbf{\bar{x}}^{\mathsf{T}} \bm{\Sigma}_{\mathbf{P}} \mathbf{\bar{x}} +	\mathbf{\bar{y}}^{\mathsf{T}} \bm{\Sigma}_{\mathbf{P}} \mathbf{\bar{y}}
		\leq \tilde{r}_{n}.\label{eq:inp8}
	\end{equation}

    \subsection{Proof of \eqref{eq:z8}}
    The proof of \eqref{eq:z8} follows directly by following the arguments presented in \cite[App.~M]{HJT25}.
    
	\subsection{Proof of \eqref{eq:zz}}
	We have that
    \begin{IEEEeqnarray}{lCl}
		\IEEEeqnarraymulticol{3}{l}{|n \ell_{P}(\Gamma',\Gamma'') -n\ell_{P}(\tp'_{n},\tp''_{n})|} \nonumber\\
        &\leq & \left| \sum_{i=1}^{k-1} nc_{i}x_{i}^{*}-\sum_{i=1}^{k-1}nc_{i} \bar{x}_{i} \right|  +\left| \sum_{i=1}^{k-1} nd_{i}y_{i}^{*}-\sum_{i=1}^{k-1}nd_{i} \bar{y}_{i} \right|. \IEEEeqnarraynumspace
	\end{IEEEeqnarray}
	By taking $\kappa  \triangleq (\widehat{M}+1)(\sum_{i=1}^{k-1}|c_{i}|+\sum_{i=1}^{k-1}|d_{i}|)$, the remaining steps of the proof follow directly from the arguments presented in \cite[App.~M]{HJT25}.
	
	\section{Proof of Proposition~\ref{prop:gendiv}}
	\label{Append:bla}
	It was shown in \cite[Th.~10]{Shengyu21} that no two-sample test with type-I error bounded by $\epsilon$ can achieve a type-II first-order term $\beta'$ that exceeds $2 D_B(P_1,P_2)$. More precisely, the proof of \cite[Th.~10]{Shengyu21} implies that
\begin{IEEEeqnarray}{lCl}
		\varlimsup_{n\to\infty} -\frac{\ln	\beta_n(\mathsf{T}_n)}{n} & \leq &   2 D_{B}(P_{1},P_{2}) \label{eq:(140)}
	\end{IEEEeqnarray}
    for any two-sample test $\mathsf{T}_n$ satisfying $\varlimsup_{n\to\infty} \alpha_n(\mathsf{T}_n) \leq \epsilon$. Proposition~\ref{prop:gendiv} follows then by showing that there is a divergence test $\textsf{T}_n^{D}(r_n)$ and a sequence of thresholds $\{r_n\}$ that achieve
    \begin{equation}
    \label{eq:(140)_better}
		\lim_{n\to\infty} \alpha_n\left(\mc_n^{D}(r_n)\right) = 0
	\end{equation}
    and
    \begin{equation}
\varliminf_{n\to\infty} -\frac{\ln	\beta_n(\mathsf{T}_n)}{n} \geq 2 D_{B}(P_{1},P_{2}). \label{eq:(140)_even_better}
    \end{equation}
    To this end, we choose some thresholds $\{r_n\}$ satisfying
	\begin{equation}
    \label{eq:prop:gendiv_proof_rn}
		r_n = \omega\left(\frac{1}{n}\right) 
		\quad \text{and} \quad 
		r_n = o(1).
	\end{equation}
	By Lemma~\ref{cov_lemma}, we have that
	\begin{equation}
		P^{n}\left( \frac{n}{2} D\left(\tp_{X^{n}} \middle\| \tp_{Y^{n}}\right) \geq c \right)
		= \mathsf{Q}_{\chi^{2}_{\bm{\lambda},k-1}}(c) + O(\delta_n)
	\end{equation}
    for all $c>0$ and some positive sequence $\{\delta_n\}$ that is independent of $c$ and vanishes as $n\to\infty$.
	Consequently, the type-I error satisfies
	\begin{IEEEeqnarray}{lCl}
		\alpha_n\left(\mc_n^{D}(r_n)\right)
		&=& P^{n}\left( \frac{n}{2} D\left(\tp_{X^{n}} \middle\| \tp_{Y^{n}}\right) \geq \frac{n r_n}{2} \right) \notag \\
		&=& \mathsf{Q}_{\chi^{2}_{\bm{\lambda},k-1}}\left( \frac{n r_n}{2} \right) + O(\delta_n).
	\end{IEEEeqnarray}
	Since $r_n = \omega(1/n)$, we have that $n r_n \to \infty$ as $n\to\infty$, so \eqref{eq:(140)_better} follows.

 	
	We next upper-bound the type-II error $\beta_n(\textsf{T}_n^{D}(r_n))$ using the method of types  \cite[Th.~11.1.4]{ATCB} to obtain
	\begin{IEEEeqnarray}{lCl}
		\IEEEeqnarraymulticol{3}{l}{\beta_n(\textsf{T}_n^{D}(r_n)) } \nonumber\\
		&\leq & \sum_{(P',P'')\in\mathcal{B}_{D}(r_n)\cap(\mathcal{P}_n\times\mathcal{P}_n)}
		e^{-n D_{\mathrm{KL}}(P'\|P_1) -n D_{\mathrm{KL}}(P''\|P_2)}. \nonumber\\ \label{eq:(143)}
	\end{IEEEeqnarray}
We then lower-bound $D_{\mathrm{KL}}(P'\|P_1)+D_{\mathrm{KL}}(P''\|P_2)$ over $(P',P'')\in\mathcal{B}_{D}(r_n)$ by proceeding as in Section~\ref{sub:UB}. Indeed, since $r_n=o(1)$, it follows from \cite[Lemma~2]{HJT25} that, for every pair $(P',P'')$ in $\mathcal{B}_{D}(r_n)$, we have $\|\mathbf{P}'-\mathbf{P}''\|_{2} = o(1)$ since  $\|\mathbf{P}'-\mathbf{P}''\|_{2} = O(\sqrt{r_n})$. This further implies that $\|P'-P''\|_{1}=o(1)$. We then set $\bar{P}=\frac{1}{2}P' + \frac{1}{2} P''$ and distinguish between the cases $(\bar{P}, \bar{P}) \notin\mathcal{B}_{\ell_1,P^*}(\delta)$ and $(\bar{P}, \bar{P}) \in \mathcal{B}_{\ell_1,P^*}(\delta)$, where $\mathcal{B}_{\ell_1,P^*}(\delta)$ is as in \eqref{eq:BL1} with a $\delta>0$ chosen so that $\mathcal{B}_{\ell_1,P^*}(\delta)$ is contained in $\mathcal{P}^2(\mathcal{Z})$.

If $(\bar{P}, \bar{P}) \notin\mathcal{B}_{\ell_1,P^*}(\delta)$, then we obtain directly from \eqref{eq:(66)} that
\begin{IEEEeqnarray}{lCl}
    \IEEEeqnarraymulticol{3}{l}{\inf_{(P', P'')\in \mathcal{B}_{D}(r_n)} \{D_{\text{KL}}(P'\| P_{1})+D_{\text{KL}}(P'' \| P_{2})\}} \notag \\
    \quad & >  & D_{\text{KL}}(P^{*}\| P_{1})+D_{\text{KL}}(P^{*} \| P_{2}).\label{eq:outside}
\end{IEEEeqnarray}
If $(\bar{P}, \bar{P}) \in \mathcal{B}_{\ell_1,P^*}(\delta)$, then we perform a Taylor-series approximation of $D_{\text{KL}}(P'\|P_1) + D_{\text{KL}}(P''\|P_2)$ around $(\bar{P},\bar{P})$ to obtain
\begin{IEEEeqnarray}{lCl}
\IEEEeqnarraymulticol{3}{l}{D_{\text{KL}}(P'\|P_1) + D_{\text{KL}}(P''\|P_2)} \nonumber\\
\quad & \geq & D_{\text{KL}}(\bar{P}\|P_1) + D_{\text{KL}}(\bar{P}\|P_2) + o(1)
\end{IEEEeqnarray}
where the $o(1)$-term is independent of $\bar{P}$. It then follows that
\begin{IEEEeqnarray}{lCl}
    \IEEEeqnarraymulticol{3}{l}{\inf_{(P', P'')\in \mathcal{B}_{D}(r_n)} \{D_{\text{KL}}(P'\| P_{1})+D_{\text{KL}}(P'' \| P_{2})\}} \notag \\
    \quad & \geq  & \min_{\bar{P}\in\mathcal{P}(\mathcal{Z})} \bigl\{ D_{\text{KL}}(\bar{P}\|P_1) + D_{\text{KL}}(\bar{P}\|P_2) \bigr\} + o(1) \nonumber\\
    & = & D_{\text{KL}}(P^{*}\| P_{1})+D_{\text{KL}}(P^{*} \| P_{2}) + o(1) \label{eq:inside}
\end{IEEEeqnarray}
where the last step follows because $P^*$ is the minimizer of $D_{\text{KL}}(\bar{P}\|P_1)+D_{\text{KL}}(\bar{P}\|P_2)$.

Combining \eqref{eq:outside} and \eqref{eq:inside}, and recalling that $D_{\text{KL}}(P^{*}\| P_{1})+D_{\text{KL}}(P^{*} \| P_{2})=2D_B(P_1,P_2)$, we obtain that
\begin{IEEEeqnarray}{lCl}
    \IEEEeqnarraymulticol{3}{l}{\inf_{(P', P'')\in \mathcal{B}_{D}(r_n)} \{D_{\text{KL}}(P'\| P_{1})+D_{\text{KL}}(P'' \| P_{2})\}} \nonumber\\
    \quad & \geq & 2 D_B(P_1,P_2) + o(1). \label{eq:inside_outside}
\end{IEEEeqnarray}
Applying \eqref{eq:inside_outside} to \eqref{eq:(143)}, and using that the number of types $|\mathcal{P}_n|$ is bounded by $(n+1)^{|\mathcal{Z}|}$ \cite[Th.~11.1.1]{ATCB},  we can then lower-bound
	\begin{IEEEeqnarray}{lCl}
		-\frac{\ln \beta_n(\textsf{T}_n^{D}(r_n))}{n} & \geq & 2 D_B(P_1,P_2) + o(1).
	\end{IEEEeqnarray}
    This yields \eqref{eq:(140)} and proves the proposition.

\balance
\bibliography{Bibliography.bib}
\bibliographystyle{IEEEtran}

\end{document}